\documentclass[a4paper, USenglish, autoref, cleveref, thm-restate, numberwithinsect,]{lipics-v2021}
\usepackage[T1]{fontenc}
\usepackage{algorithm, algpseudocode, braket, complexity}
\usepackage{amsmath, amssymb, amsfonts, hyperref, mathtools, tikz}
\tikzset{MyNode/.style={circle, draw, inner sep=2,outer sep=0, fill=gray}}

\newcommand{\spn}{\text{span}}
\newcommand{\rk}{\text{rk}}
\newcommand{\mydiamond}{\rotatebox[origin=c]{45}{$\vcenter{\hbox{$\Box$}}$}}
\newenvironment{subproof}[1]{\par\noindent \textit{Proof of #1.}\ }{\hfill \mydiamond \par\vspace{11pt}}

\crefname{claim}{Claim}{Claims}

\title{A combinatorial framework for clustering graph states: Algorithms and hardness for rank-integrity}
\titlerunning{A combinatorial framework for clustering graph states}

\author{Romain Bourneuf}{Univ. Bordeaux, CNRS, Bordeaux INP, LaBRI, UMR 5800, F-33400 Talence, France}{}{}{}
\author{Nathan Claudet}{University of Innsbruck, Department of Theoretical Physics, Technikerstraße  21a, A-6020 Innsbruck, Austria}{}{}{}
\author{Sang Yoon Kim}{School of Computer Science, Georgia Institute of Technology, USA}{}{}{}
\author{Rose McCarty}{School of Computer Science and School of Mathematics, Georgia Institute of Technology, USA}{}{}{} 
\author{Blair D. Sullivan}{Collegium de Lyon, ENS de Lyon, LIP, France \and Kahlert School of Computing, University of Utah, USA}{}{}{}
\author{St\'{e}phan Thomass\'{e}}{Univ. Lyon, ENS de Lyon, UCBL, CNRS, LIP, France}{}{}{}
\authorrunning{R. Bourneuf, N. Claudet, S.Y. Kim, R. McCarty, B. Sullivan, and S. Thomass\'{e}} 

\ccsdesc[500]{Theory of computation~Parameterized complexity and exact algorithms}
\ccsdesc[500]{Mathematics of computing~Graph algorithms}
\ccsdesc[500]{Theory of computation~Quantum computation theory}
\keywords{graph states, integrity, robustness, flips, vertex-minors, rank, splits} 

\hideLIPIcs  
\nolinenumbers 
\pdfoutput=1 

\EventEditors{John Q. Open and Joan R. Access}
\EventNoEds{2}
\EventLongTitle{42nd Conference on Very Important Topics (CVIT 2016)}
\EventShortTitle{CVIT 2016}
\EventAcronym{CVIT}
\EventYear{2016}
\EventDate{December 24--27, 2016}
\EventLocation{Little Whinging, United Kingdom}
\EventLogo{}
\SeriesVolume{42}
\ArticleNo{23}

\begin{document}

\maketitle

\begin{abstract}
We introduce a new notion of distance between two graph states $\ket{G}$ and $\ket{G'}$ on the same set of qubits. This distance is the minimum number of ancilla qubits in a graph state $\ket{\widehat{G}}$ from which both $\ket{G}$ and $\ket{G'}$ can be ``easily prepared''. (When preparing graph states, we are only allowed to use one-qubit Clifford gates, one-qubit Pauli measurements, and classical communication.) We give a graphical description of this distance through the lens of vertex-minors. We then show how this distance yields quantum network analogs of many graph edit-distance problems. 

Using this framework, we develop classical algorithms for identifying the ``highly entangled clusters'' of a graph state $\ket{G}$. The \textsc{ancilla integrity} problem asks, given a graph $G$ and integer $k$, for the minimum -- over all graph states $\ket{G'}$ with distance at most $k$ from $\ket{G}$ -- of the maximum component size of $G'$. Up to a factor of $2$ in the number of ancilla qubits, this problem is equivalent to \textsc{rank integrity}, where the distance between $G$ and $G'$ is instead the minimum rank of the sum of their adjacency matrices over $\text{GF}(2)$. We prove that \textsc{rank integrity} is \XP{} parameterized by $k$. We also prove the complementary hardness result that \textsc{rank integrity} is \W[1]-hard in $k$. Finally, we give an explicit $\mathcal{O}(n^6)$-time algorithm for \textsc{ancilla integrity} when $G$ has $n$ vertices and $k=1$.
\end{abstract}

\section{Introduction}
\label{sec:intro}

There is a vast literature of \emph{graph edit distance} problems where the goal is to transform a given graph $G$ into a member of some target graph class by making as few changes as possible. The most commonly studied changes are vertex and edge deletion and insertion. This framework is very flexible and also includes other \emph{graph modification} and \emph{connectivity augmentation} problems such as those studied in~\cite{CR26, KT26}. How much more ``connectivity'' can be added to a road network by building a few more lanes? How much can be destroyed by putting a few lanes out of service? In this manner, graph edit distance problems can be used to ``cluster'' a graph into ``highly connected pieces'' and to compute its ``resiliency'', ``robustness'', ``vulnerability'', and so on.

Many different notions of edit distance, many different target graph classes, and many different measures of connectivity have been considered; see~\cite{2023editDistSurvey, 2010editDistSurvey, GroheSimilarity25} for surveys. Edit distance problems such as \textsc{feedback vertex set} -- can $k$ vertices be deleted from $G$ to obtain a forest? -- also form the foundation of modern parameterized complexity. 

In this paper we introduce a new notion of edit distance for graphs/graph states which is motivated by quantum networking. It captures whether a network provider can use a few ancilla qubits to transform one distributed quantum network into another. To explain this idea, consider two $n$-qubit graph states $\ket{G}$ and $\ket{G'}$ on the same set of qubits $V$. We consider $\ket{G}$ and $\ket{G'}$ to be ``similar'' if a network provider can prepare a graph state $\ket{\widehat{G}}$ on the disjoint union of $V$ and a small set of ancilla qubits $A$ so that the following holds. When the network provider keeps $A$ and distributes $V$ among $n$ different parties, everyone can collaboratively work together to prepare whichever of $\ket{G}$, $\ket{G'}$ is desired (without requiring any additional entanglement). Thus, by consuming a few ancilla qubits, we can choose whether to prepare $\ket{G}$ or $\ket{G'}$ on the fly, that is, \emph{after} the network has been distributed.

More formally, we define the $\emph{distance}$ $d(\ket{G}, \ket{G'})$ between two graph states on the same vertex set to be the minimum number of ancilla qubits in a graph state $\ket{\widehat{G}}$ from which each of $\ket{G}$ and $\ket{G'}$ can be prepared deterministically using only one-qubit Clifford gates, one-qubit Pauli measurements\footnote{We consider measurements to be \emph{destructive}; so after a qubit is measured it is removed from the system.}, and classical communication. We add these additional restrictions on the allowable operations in order to make the problem more manageable\footnote{We conjecture that, in a suitable multiparty model where only one-qubit Pauli measurements, local Clifford gates, and classical communication are allowed, it does not make a difference whether the ancilla qubits $A$ are owned by one party or by different parties. Here, a Clifford gate is \emph{local} if it only acts on qubits owned by a single party; so in particular, if the ancilla qubits are all owned by one party, then entangling Clifford gates may be applied on them. However, in our definition, one should consider the ancilla qubits to be owned by $|A|$ distinct parties. We refer the reader to~\cite{bravyi2024generating} for an application where the multiparty model offers an advantage for preparing particular graph states.}.
With this definition in hand, we have a new notion of ``edit distance'' for quantum graph states. 

We can obtain a completely graphical characterization of this new distance by applying some known results about graph states~\cite{transformingStates, VandenNest04, Hein04}. This graphical characterization agrees with the notion of ``perturbations'' which was recently introduced by~\cite{campbell2026erdHos} in the context of vertex-minors. The \emph{vertex-minors} of a graph $G$ are the graphs that can be obtained from $G$ by taking induced subgraphs (i.e. deleting vertices) and by performing certain operations called local complementations; \emph{locally complementing} at a vertex $v$ of $G$ replaces the induced subgraph on the neighborhood of $v$ by its complement. That is, it ``switches'' edges and non-edges within the neighborhood of $v$. Given two $n$-vertex graphs $G$ and $G'$ on the same vertex set, we define the $\emph{distance}$ $d(G, G')$ as the minimum integer $k$ so that there exists a graph on $n+k$ vertices which contains both $G$ and $G'$ as vertex-minors. Then a graph $G'$ is a \emph{$k$-perturbation} of a graph $G$ if $d(G, G')\leq k$.

It follows from~\cite{transformingStates, VandenNest04, Hein04} that these two notions of distance coincide.

\begin{restatable}{observation}{obsDistance}
\label{obs:distance}
    For any graphs $G$ and $G'$ on the same vertex set, $d(\ket{G}, \ket{G'})=d(G, G')$.
\end{restatable}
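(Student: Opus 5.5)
The plan is to show that, for a fixed graph $\widehat{G}$ on $V\cup A$, the graph state $\ket{G}$ can be prepared deterministically from $\ket{\widehat{G}}$ by one-qubit Clifford gates, one-qubit Pauli measurements and classical communication \emph{if and only if} $G$ is a vertex-minor of $\widehat{G}$. Once this is in hand, the observation follows by applying the equivalence to both $G$ and $G'$ with the same $\widehat{G}$ and minimising $|A|$. Indeed, the minimum number of ancillas in a common resource state is then exactly the minimum $k$ such that some graph on $n+k$ vertices contains both $G$ and $G'$ as vertex-minors. In the graphical definition, the $k$ extra vertices play the role of $A$, and the vertex set $V$ is kept fixed as a labelled set in both settings.

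For the direction ``vertex-minor $\Rightarrow$ preparable'', I would invoke the standard facts from \cite{VandenNest04, Hein04}.

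\begin{itemize}
\item Local complementation at $v$ is realised on graph states by the one-qubit Clifford $\sqrt{-iX_v}\prod_{u\in N(v)}\sqrt{iZ_u}$. In particular, $\ket{G*v}$ is local Clifford equivalent to $\ket{G}$.
\item Measuring qubit $v$ in the Pauli $Z$ basis yields, up to one-qubit Pauli corrections depending on the outcome, the state $\ket{G-v}$.
\end{itemize}

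Given a sequence of local complementations and deletions producing $G$ from $\widehat{G}$, I would perform the corresponding Clifford gates and $Z$-measurements. The outcomes are broadcast classically and the Pauli corrections are applied, so the procedure is deterministic. A small point is that the deleted vertices are exactly those of $A$, since $V$ must survive; I would reorder the operations so that all deletions happen at the end. This is harmless, because local complementation at $v$ followed by deletion of a different vertex equals deleting first when we only need the final graph.

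For the converse ``preparable $\Rightarrow$ vertex-minor'', I would use the result of \cite{transformingStates}. It states that if a graph state $\ket{G}$ on $V$ can be obtained from $\ket{\widehat{G}}$ using only one-qubit Clifford operations, one-qubit Pauli measurements and classical communication, then $G$ is a vertex-minor of $\widehat{G}$. The underlying reason is twofold:
\begin{itemize}
\item $X$ and $Y$ measurements on a graph state give, up to local Cliffords, the graph states of $(\widehat{G}*w)*v*w - v$ for a neighbour $w$ of $v$, and of $\widehat{G}*v - v$, respectively.
\item Local Clifford equivalence of graph states coincides with equivalence under local complementations, by \cite{VandenNest04}.
\end{itemize}
Thus each measurement outcome branch yields a vertex-minor on $V$, and any remaining local Clifford unitary relates it to $\ket{G}$ only through local complementations.

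The main obstacle is the handling of adaptivity and determinism in this converse. Classical communication lets later measurement bases depend on earlier outcomes, so the protocol is a tree of branches rather than a single sequence. I would resolve this by fixing any one branch, for instance all outcomes $0$. Along this branch the resulting state is $\ket{G}$ by determinism. It was obtained by a fixed sequence of Pauli measurements on the qubits of $A$ and local Cliffords, so the single-branch analysis above applies and gives $G$ as a vertex-minor of $\widehat{G}$.

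One must also check that no qubit of $V$ is measured, since measurement is destructive and the output lives on $V$. Finally, qubits of $A$ that are never measured cannot occur in the output, because the output is exactly $\ket{G}$ on $V$; they can be treated as measured in the $Z$ basis, which corresponds to vertex deletion. Combining both directions gives $d(\ket{G},\ket{G'})=d(G,G')$.
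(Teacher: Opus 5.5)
Your proposal is correct and takes the same approach as the paper. The paper obtains the observation directly from Proposition~\ref{prop:vertexMinorInterp}, citing \cite{transformingStates, VandenNest04, Hein04}, which is exactly your reduction; your extra sketch of why that equivalence holds is consistent with those references. The one small fix is to follow any measurement branch of positive probability rather than ``all outcomes $0$'', since some Pauli outcomes on stabilizer states are deterministic (e.g.\ $X$ on an isolated qubit).
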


Local complementation is well-studied in structural graph theory~\cite{BouchetCircleChar, RMgrid, mccarty2021local}. It arose from attempts to generalize the graph minors theory of Robertson and Seymour~\cite{graphMinors20WQO} to the setting of symmetric binary matrices (see~\cite{graphicIsoSystems, deltaMatroidsSurvey}). Essentially, the goal is to understand the structure of graphs through their adjacency matrices instead of their incidence matrices. However, local complementation is still rather abstract and can be difficult to work with. So we now show that for edit-distance problems, usually a simpler notion called ``flipping'' suffices. In this manner, we connect the recently popular operation of ``flipping'' with quantum networking. 

To perform a \emph{flip} on a graph $G$, we select an arbitrary set of vertices $S$ of $G$, and replace the induced subgraph on $S$ by its complement. That is, we ``switch'' edges and non-edges within $S$; see~\Cref{fig:flipping}. Note that flipping on $S$ is the same thing as adding a new vertex $v$ with neighborhood $S$, locally complementing at $v$, and then deleting~$v$. This operation is also sometimes called \emph{partial complementation} or \emph{subgraph complementation}.

\begin{figure}
\centering
\begin{tikzpicture}[scale = .5, every node/.style={MyNode}]
    \def \r {2}
    \def \start {90}
    \foreach \i in {1, ..., 5}{
        \pgfmathparse{int(\i-1)}\edef\last{\pgfmathresult}
        \node (X\last) at (\start+\i*72-72:\r) {};
    }
    \foreach \i in {0, ..., 3}{
        \pgfmathparse{int(\i+1)}\edef\next{\pgfmathresult}
        \draw[thick] (X\i) -- (X\next);
    }
    \draw[thick] (X4) -- (X0);
    \node[draw=none, fill=none] at (0, -2.2) {};
    \node[rectangle, draw=none, fill=none] at (0, -2.6) {{\large $G$}};
    \node[fill=red, inner sep=2.5] at (X0) {};
    \node[fill=red, inner sep=2.5] at (X4) {};
    \node[fill=red, inner sep=2.5] at (X3) {};
    \node[] at (X2) {};
    \node[] at (X1) {};
\end{tikzpicture}\hskip.75cm\begin{tikzpicture}[scale=.5, every node/.style={MyNode}]
    \def \r {2}
    \node[draw=none, fill=none] at (90:\r) {};
    \node[rectangle, draw=none, fill=none] at (0, -2.05) {\textcolor{white}{\large $G*v$}};
    \node[rectangle, draw=none, fill=none] (center) at (0, .5) {\Huge$+$};
\end{tikzpicture}\hskip.75cm\begin{tikzpicture}[scale = .5, every node/.style={MyNode}]
    \def \r {2}
    \def \start {90}
    \foreach \i in {1, ..., 5}{
        \pgfmathparse{int(\i-1)}\edef\last{\pgfmathresult}
        \node (X\last) at (\start+\i*72-72:\r) {};
    }
    \foreach \i in {0, ..., 2}{
        \pgfmathparse{int(\i+1)}\edef\next{\pgfmathresult}
        \draw[thick] (X\i) -- (X\next);
    }
    \draw[thick] (X3) -- (X0);
    \node[fill=red, inner sep=2.5] at (X0) {};
    \node[fill=red, inner sep=2.5] at (X4) {};
    \node[fill=red, inner sep=2.5] at (X3) {};
    \node[] at (X2) {};
    \node[] at (X1) {};
    \node[draw=none, fill=none] at (0, -2.2) {};
    \node[rectangle, draw=none, fill=none] at (0, -2.6) {{\large $G'$}};
\end{tikzpicture}\hskip.75cm\begin{tikzpicture}[scale=.5, every node/.style={MyNode}]
    \def \r {2}
    \node[draw=none, fill=none] at (90:\r) {};
    \node[rectangle, draw=none, fill=none] at (0, -2.05) {\textcolor{white}{\large $G*v$}};
    \node[rectangle, draw=none, fill=none] (center) at (0, .4) {\Huge$=$};
\end{tikzpicture}\hskip.75cm\begin{tikzpicture}[scale = .5, every node/.style={MyNode}]
    \def \r {2}
    \def \start {90}
    \foreach \i in {1, ..., 5}{
        \pgfmathparse{int(\i-1)}\edef\last{\pgfmathresult}
        \node (X\last) at (\start+\i*72-72:\r) {};
    }
    \draw[thick] (X3) -- (X4);
    \draw[thick] (X4) -- (X0);
    \draw[thick] (X0) -- (X3);
    \node[draw=none, fill=none] at (0, -2.2) {};
    \node[rectangle, draw=none, fill=none] at (0, -2.6) {{\large $G + G'$}};
    \node[fill=red, inner sep=2.5] at (X0) {};
    \node[fill=red, inner sep=2.5] at (X4) {};
    \node[fill=red, inner sep=2.5] at (X3) {};
    \node[] at (X2) {};
    \node[] at (X1) {};
\end{tikzpicture}
\caption{A graph $G'$ obtained from $G$ by flipping, as well as the graph $G+G'$.}
\vspace{-.4cm}
\label{fig:flipping}
\end{figure}

Various ``flip-based'' graph parameters have recently been used to efficiently decide first-order properties of graphs~\cite{mergeWidth25, flipperGamesMonStable23, flipWidth23}. Indeed, this area has many deep connections to model theory which arise from the fact that performing a bounded number of flips is a quantifier-free transduction; see~\cite{transductionSurvey26, PilipczukSurvey26}. There is also a somewhat distinct line of recent research which studies the time complexity of flipping to a target graph class; see~\cite{partialComplHFree22, partialComplAPS25, partialComplHardAPS24, partialComplementFGST20}. We use \cref{obs:distance} to prove that if the target graph class is closed under local complementation, then these problems have a nice quantum interpretation. Sen, Goodenough, and Towsley~\cite{SGT23} used a similar flip-based approach to distribute a graph state across a quantum network. However, the connection to graph edit distance problems seems to be unexplored.

Let us explain this connection through the key example of \textsc{ancilla integrity}. The \emph{$k$-ancilla-integrity} of a graph state $\ket{G}$ is the minimum -- over all graph states $\ket{G'}$ with $d(\ket{G}, \ket{G'})\leq k$ -- of the maximum component size of $G'$. (The \emph{size} of a component is the number of vertices it contains.) The \textsc{ancilla integrity} problem asks, given a graph $G$ and integer $k$, to compute the $k$-ancilla-integrity of $G$. The components of $G'$ can naturally be interpreted as the ``highly entangled clusters'' of $G$; they form the optimal way to ``disentangle'' the qubits of $\ket{G}$ using only $k$ ancilla qubits. Indeed, the $k$-ancilla-integrity of $\ket{G}$ is small if and only if $G$ can ``almost'' be written as a tensor product of graph states which each only have a small number of qubits. We show in \cref{sec:quantum} how this notion relates to another previously studied parameter of graph states called Pauli persistency~\cite{BRpersistency2001,Hein04}. 

Now, consider again the example in \cref{fig:flipping} of a graph $G'$ which is obtained from $G$ by flipping. In general, given graphs $G$ and $G'$ on the same vertex set $V$, we write $G+G'$ for the graph on $V$ whose edge set is the symmetric difference of the edge sets of $G$ and $G'$. So in \cref{fig:flipping}, this graph $G+G'$ is a clique plus some isolated vertices. The adjacency matrix of a clique has minimum rank $1$; the \emph{minimum rank} of a graph is the minimum -- over all ways of filling in the diagonal of the adjacency matrix with $0$s and $1$s -- of its rank over $\text{GF}(2)$. It is almost true that for any graphs $G$ and $G'$ on the same vertex set, $G'$ can be obtained from $G$ by flipping on at most $k$ sets if and only if $G+G'$ has minimum rank at most $k$. The only difference is that sometimes one additional flip is required~\cite{Aminrank75}; see also~\cite{BPRminrank22} and~\cite[Lemma~4.1]{davies2025preparing}. Since there are other nice connections between rank in the adjacency matrix and entanglement in the graph state (see~\cite{Hein04, vdNDVB07}), we focus on the rank-based definition\footnote{We note that it is also natural to consider adding a low-rank symmetric matrix as a good notion of edit distance from other perspectives; see for instance~\cite{FGP2020,MMS16} for more about the matrix-based viewpoint.}.

So, we define the \emph{$k$-rank-integrity} of a graph $G$ to be the minimum -- over all graphs $G'$ so that $G+G'$ has minimum rank at most $k$ -- of the maximum component size of $G'$. We call any such graph $G'$ a \emph{rank-$k$ perturbation} of $G$. Thus a rank-$k$ perturbation of $G$ is obtained from $G$ by adding a symmetric binary matrix of rank at most $k$ to its adjacency matrix, and then setting the diagonal to be $0$. The \textsc{rank integrity} problem asks, given a graph $G$ and integer $k$, to compute the $k$-rank-integrity of $G$. We prove the following lemma in \cref{sec:quantum}. 

\begin{restatable}{lemma}{flipAncillaComparison}
\label{lem:flipAncillaComparison}
For any graph $G$ and non-negative integer $k$, \begin{align*}
(2k)\textit{-rank-integrity}(G)\leq k\textit{-ancilla-integrity}(\ket{G})\leq k\textit{-rank-integrity}(G).
\end{align*}
\end{restatable}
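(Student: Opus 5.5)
By \cref{obs:distance}, $k$-ancilla-integrity$(\ket{G})$ equals the minimum, over all graphs $G'$ with $d(G,G')\le k$, of the maximum component size of $G'$. So both inequalities reduce to comparing two sets of graphs on $V$. The first is the $k$-perturbations of $G$: graphs $G'$ such that some graph $H$ on $V\cup A$ with $|A|\le k$ has both $G$ and $G'$ as vertex-minors. The second is the rank-$j$ perturbations of $G$. I will show:
\begin{itemize}
\item every rank-$k$ perturbation is a $k$-perturbation, which gives the upper bound;
\item every $k$-perturbation is a rank-$2k$ perturbation, which gives the lower bound.
\end{itemize}

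\textbf{Upper bound.} Let $G'$ be a rank-$k$ perturbation, so the adjacency matrix of $G+G'$, with a suitable diagonal, is a symmetric matrix $M$ over $\mathrm{GF}(2)$ of rank at most $k$.

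I first want to write $M=\sum_{i=1}^k w_iw_i^T$, a sum of $k$ rank-one symmetric matrices. Each $w_iw_i^T$ is exactly a flip on the support of $w_i$ (with diagonal $w_i$). The obstruction is the alternating case: a symmetric matrix with zero diagonal, such as $\begin{pmatrix}0&1\\1&0\end{pmatrix}$, is not a sum of rank-one symmetric matrices of the same number. This is the ``one additional flip'' subtlety of~\cite{Aminrank75}, so flips alone are not enough.

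Instead I will build $H$ directly with ancillas. Take $A=\{a_1,\dots,a_k\}$, connect $a_i$ to the support of $w_i$, and locally complement at each $a_i$ in turn before deleting $A$. Local complementations at ancillas also change adjacencies among the ancillas, so the cleaner route is pivoting. Write $M=W^TBW$ with $W$ a $k\times n$ matrix and $B$ an invertible symmetric $k\times k$ matrix; this is possible since $M$ is symmetric of rank at most $k$. Let $H$ have $G$ on $V$, $B$ on $A$ (with diagonal handled via local complementation), and $W$ between $A$ and $V$. Then:
\begin{itemize}
\item Deleting $A$ gives $G$.
\item Applying a sequence of local complementations/pivots that ``eliminates'' $A$ changes $V$ by the Schur complement $W^TB^{-1}W$. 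Substituting $B\to B^{-1}$ in the construction makes this $M$, up to the diagonal, which is ignored anyway.
\end{itemize}
This is the standard description of vertex-minors via principal pivot transforms: the graph obtained on $V$ is $G+W^TB^{-1}W$ after pivoting on the nonsingular set $A$. So $d(G,G')\le k$.

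\textbf{Lower bound.} Suppose $H$ on $V\cup A$ with $|A|\le k$ has both $G$ and $G'$ as vertex-minors. I will use the known fact that the vertex-minors of $H$ on $V$ are exactly the graphs $K$ such that $K$ is locally equivalent to $H*\sigma\setminus A$ for some choice, per ancilla, of local complementations, modulo local equivalence on the kept vertices. More concretely, each ancilla is removed by one of three Pauli-type deletions: $H\setminus a$, $H*a\setminus a$, or $H\wedge ab\setminus a$.

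Each such deletion changes the adjacency matrix on the remaining vertices by adding a matrix of rank at most $2$. Inductively, then, $G$ and $G'$ are each, up to one common local-equivalence ambiguity, obtained from $H[V]$ by adding matrices of rank at most $2k$. This is not good enough on its own, since it compares both to $H[V]$ rather than to each other and would give rank $4k$.

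To get $2k$, I will use the matrix formalism: the possible results on $V$ correspond to choosing, for each ancilla, a Lagrangian-type subspace, and two outcomes differ by a symmetric matrix whose row space lies in the span of the $\le k$ ancilla columns $W$ together with their images, which has dimension at most $2k$. The local Clifford equivalence on $V$ has to be fixed consistently, meaning the same local operations on $V$ for both $G$ and $G'$. This is legitimate because measuring $A$ only touches $V$ by the adjacency to $A$.

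\textbf{Main obstacle.} I expect the lower bound's rank accounting to be the hard step: showing that the two outcomes differ by a symmetric matrix of rank at most $2k$, including when different local complementations at vertices of $V$ are used to reach $G$ versus $G'$. A local complementation at a vertex of $V$ is itself a rank-one change only on its neighborhood and could in principle add unbounded rank. So I must first normalize, via the measurement-pattern theorem of~\cite{transformingStates,VandenNest04,Hein04}, so that all operations on $V$ are absorbed into $H$ and only ancilla measurements differ between the two outcomes.
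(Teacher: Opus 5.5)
Your upper bound is fine, and it is a reasonable self-contained substitute for the paper's citation of \cite[Lemma~7.1]{campbell2026erdHos}. You write the perturbing matrix as $W^{T}BW$ with $B$ symmetric and invertible, put the off-diagonal part of $B^{-1}$ on the ancillas, and pivot on the nonsingular principal submatrix indexed by $A$; this pivot decomposes into local complementations at looped vertices and pivots on edges. That does realize $G'$ as a vertex-minor of a graph $H$ with $H-A=G$.

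The lower bound, however, has a genuine gap. The claim you set out to prove, that every $k$-perturbation of $G$ is a rank-$2k$ perturbation, is false. Let $G$ be the disjoint union of $m>2k$ three-vertex paths, and let $G'$ be obtained by locally complementing at every middle vertex. Then $d(G,G')=0$, but $G+G'$ is a matching with $m$ edges. Its minimum rank is $m$, since each $2\times 2$ block has rank at least $1$ whatever the diagonal. This is exactly the obstacle you flag at the end. Normalizing $H$ can absorb the local complementations that produce one of $G,G'$ but not both. The ``Lagrangian subspace'' paragraph only asserts the $2k$ bound; it does not prove it.

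The missing idea is that you never need $G'$ itself. The maximum component size is invariant under local complementation, so it suffices to show that $G'$ is \emph{locally equivalent} to a rank-$2k$ perturbation of $G$. This is what the paper uses, via \cite[Lemma~7.2]{campbell2026erdHos}, and your own ingredients then suffice:
\begin{enumerate}
\item Locally complement in $H$, which does not change its set of vertex-minors, so that $H-A=G$ exactly. This also disposes of your ``$4k$'' worry, since now only $G'$ has to be reached by removing ancillas.
\item Remove the ancillas one at a time using \Cref{lem:threeWaysToRemove}. As you observed, each removal adds to the adjacency matrix of the remaining graph a symmetric matrix of rank at most $2$.
\item You end with a graph $G''$ on $V$ such that $G+G''$ has minimum rank at most $2k$, and $G'$ is a vertex-minor of $G''$ on the same vertex set.
\item Hence $G'$ is locally equivalent to $G''$, and the two have the same maximum component size.
\end{enumerate}
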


Thus, rank integrity and ancilla integrity capture the same underlying notion, up to a change in the parameter. Accordingly, translating between the two notions incurs at most a factor-of-two change in the parameter $k$. Importantly, the factor of two appears only in the parameter, not in a direct comparison of the resulting integrity values. 

We note that the same relationship holds for any target graph class or target graph parameter which is invariant under local complementation. (Note that the maximum component size of a graph is invariant under local complementation.) There are many such invariants since cut-rank (or equivalently, Schmidt-rank of the corresponding graph state~\cite[Proposition~10]{Hein06}) is invariant under local complementation; see~\cite{connectivityIsotropic,Hein04,RWAndVM}. Many such graph parameters also have nice quantum interpretations, including twin classes~\cite{foliagePartition}, rank-width~\cite{vdn2006universal,vdNDVB07}, linear rank-width~\cite{photonicLinearRW}, and minimum degree up to local complementation~\cite{CattaneoP15, Javelle12}. 

Our first main result (and the heart of this paper) is the following theorem.

\begin{restatable}{theorem}{thmAlgorithm}
\label{thm:algo-XP}
    \textsc{rank integrity} is \XP{} parameterized by~$k$.
\end{restatable}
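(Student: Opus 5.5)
The plan is to turn the rank condition into a bounded-size combinatorial description, and then show that an optimal description is pinned down by polynomially many guesses.

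\emph{Step 1: rank-$k$ perturbations as partitions with a pattern.} Let $M$ be a symmetric $n\times n$ matrix over $\text{GF}(2)$ of rank at most $k$. Then every row of $M$ lies in a space of dimension at most $k$, and we can write $M=XBX^{T}$ where $X$ is $n\times k$ and $B$ is $k\times k$ symmetric. So $M_{uv}$ depends only on the rows $X_u$ and $X_v$, which take at most $2^k$ values. Conversely, suppose we have any partition of $V$ into classes $P_1,\dots,P_m$ with $m\le 2^k$, together with a symmetric pattern $B'\in\{0,1\}^{m\times m}$ that is realizable, meaning the induced block matrix has rank at most $k$. Then flipping the pairs between $P_i$ and $P_j$ whenever $B'_{ij}=1$ (and inside $P_i$ whenever $B'_{ii}=1$) gives a rank-$k$ perturbation. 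Realizability depends only on $B'$ and on which classes are nonempty, so it can be tested in $f(k)$ time.

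Consequently the $k$-rank-integrity of $G$ equals the minimum, over partitions into at most $2^k$ classes and realizable patterns, of the maximum component size of the resulting $G'$. Here $G'_{uv}=G_{uv}+B'_{c(u)c(v)}$, where $c(v)$ is the class of $v$. There are only $f(k)$ patterns, so we can enumerate them; the difficulty is entirely in choosing the partition.

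\emph{Step 2: a canonization (exchange) lemma.} I would aim to prove the following. For some fixed pattern, take an optimal solution with classes $P_1,\dots,P_m$. Then there is another optimal solution, with the same pattern, and a set $R$ of at most $g(k)$ guessed vertices, such that the class of each vertex $v$ is a function of its adjacency vector to $R$ in $G$, together with $f(k)$ guessed bits saying which class each adjacency type goes to.

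The intended argument is an exchange argument. Pick representatives $r_{i}$ of each class, and moreover one representative per class inside each "large" component interaction, so $|R|\le 2^k\cdot 2^k$. Now suppose two vertices have the same adjacency to $R$ but lie in different classes. Reassigning one of them to the other's class only changes its $G'$-edges towards classes where the pattern rows differ. Using the representatives as witnesses, I would argue that the reassigned vertex either stays in a component whose size is already bounded by the optimum, or else it would connect two representatives that were in different components, contradicting the guess. I would first try to make this work with $R$ consisting of one representative per nonempty class. If that fails, I would add one representative for each pair (class, component of $G'$ meeting at least two classes); there are only boundedly many such components, since components whose vertices all lie in one class can be treated separately.

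\emph{Step 3: the algorithm.} Enumerate the pattern $B'$, the set $R$ (at most $n^{g(k)}$ choices), and the map from adjacency types relative to $R$ to classes ($f(k)$ choices). For each guess, build $G'$, check that $B'$ is realizable for the classes that are actually nonempty, and compute the maximum component size in polynomial time. Output the minimum over all guesses. The total running time is $n^{g(k)}\cdot f(k)\cdot\mathrm{poly}(n)$, which is \XP{}. Correctness follows from Step 1, which makes every candidate a valid rank-$k$ perturbation, and from Step 2, which guarantees that some guess attains the optimum.

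\emph{Main obstacle.} I expect Step 2 to be the crux. Reassigning a vertex can merge components, so the exchange must be guided by which components the representatives lie in, and the argument will probably need a careful choice of representatives (for instance, vertices of maximum-size components, or one vertex per class in each component spanning several classes). If a clean exchange argument fails, my fallback is to enumerate, for each of the at most $2^k$ classes, a representative. I would then resolve the remaining freedom by noting that a vertex's class is forced whenever its $G'$-neighbourhood must avoid the large components. The leftover ambiguous vertices would lie in components of bounded interaction, and I would handle them by a matching/flow-type assignment.
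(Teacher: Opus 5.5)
There is a genuine gap. The canonization lemma in your Step~2 is false, and so is the fallback claim that only boundedly many components meet two or more classes.

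For the lemma, take $G=K_{1,n}$ with centre $c$ and leaf set $L$, and $k=1$. Rank-$1$ perturbations are then exactly the flips $G\circ S$. If $c\notin S$ the graph stays connected. If $c\in S$, the components are $S\cap L$ and $\{c\}\cup(L\setminus S)$, so every optimal $S$ contains about half of the leaves. But the leaves are pairwise non-adjacent false twins, so all leaves outside $R$ have the same adjacency vector to $R$. Any rule making the class a function of that vector puts all but $|R|$ leaves on the same side, creating a component of size at least $n-|R|$. So twins of $G$ must sometimes be split, and which ones go where is decided by a global min-max balancing of component sizes, not by a local rule. This is exactly where your exchange argument breaks: reassigning a vertex shifts weight between components.

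For the fallback claim, take $K_m$ ($m\ge 3$) with a pendant vertex $b_i$ attached to each clique vertex $a_i$, and $k=1$. The unique optimal flip is on the clique, yielding $m$ components $\{a_i,b_i\}$, each meeting both classes. So you cannot afford one representative per class in every multi-class component.

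These are precisely the difficulties the paper's proof is organized around.
\begin{itemize}
\item It fixes a normalized optimum: minimum maximum component size, then maximum number of components, then minimum score. This ensures that two vertices of different types in one component are distinguished by a vertex outside it.
\item It guesses only a \emph{typical} family of components, at most three per signature and hence at most $3\cdot 2^{2^k}$, with one representative per (component, type) pair. Each vertex gets a \emph{list} of compatible representatives. The normalization makes each list contain at most one representative per component, and vertices of non-guessed components are recognized as ``flagged'', which forces their type.
\item Choices for different vertices interact: if $u,u'$ land in different components, whether $uu'\in E(G)$ must agree with the pattern entry of their types. The paper proves that strongly conflicting vertices share a component, and that weak conflicts have a rigid two-option structure. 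The weak conflicts are resolved by guessing one witness per crossing tuple in an auxiliary digraph.
\item Only after this are the choices independent. The strong-conflict classes are then distributed among the $t\le 3\cdot 2^{2^k}$ guessed components by a weighted list-balancing DP in time $n^{O(t)}$. This cannot be a flow, since weighted min-max assignment with lists is NP-hard when the number of bins is unbounded.
\end{itemize}
Your fallback points toward lists and an assignment step. Without the bounded typical family, the normalization that gives functional lists, and the conflict resolution, it does not yet give a correct algorithm.
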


That is, there is an algorithm which takes as input an $n$-vertex graph $G$ and integer $k$ and computes the $k$-rank-integrity of $G$ in time $n^{f(k)}$ for some computable function $f$. 

Even when $k=1$ and we are looking for a flip $G'$ of $G$, the naive approach is to iterate over all $2^n$ subsets of the vertex set; obtaining a polynomial bound is surprisingly nontrivial. However, note that for graphs $G$ where the optimal flip $G'$ of $G$ has exactly two components, we are equivalently looking for a ``most balanced split'' of $G$. A \emph{split} is a partition of the vertex set of a graph into two parts so that the edges between them form the edge-set of a complete bipartite graph. It is \emph{most balanced} if the two parts are as equal in size as possible. Efficient algorithms to find the ``split decomposition'' of a graph are well-known~\cite{linearSplitRevisited12, cunningham82, linearSplitDahlhaus, spinrad89}. The split decomposition is a tree that ``displays'' all of the splits of a graph simultaneously. So, at an intuitive level, the main idea of our algorithm is to first reduce to the two component case, and then to do some dynamic programming using the split decomposition.

The approach for general $k$ is somewhat similar, however we do not have access to something like the split decomposition which displays all of the relevant information at once. To explain our approach, let us say that a \emph{cut} of a graph $G$ is a partition $(A,B)$ of its vertex set into two parts. The \emph{rank} (or \emph{cut-rank}) of $(A,B)$ is the rank over $\text{GF}(2)$ of the submatrix of the adjacency matrix with rows $A$ and columns $B$. The \textsc{most balanced cut-rank} problem asks, given a graph $G$ and integer $k$, for the most balanced cut of rank at most $k$. Very recently, Boja\'{n}czyk, Mi. Pilipczuk, Przybyszewski, Soko{\l}owski, and Stamoulis~\cite[Theorem~6.1]{BPPSS25} proved a theorem which implies that \textsc{most balanced cut-rank} is $\XP$ in $k$. (This theorem has also been generalized to arbitrary connectivity functions~\cite{OS26}.) Essentially, this theorem shows that every cut of rank at most $k$ is displayed in one of polynomially-many (in the number of vertices $n$) partitions of the vertex set. 

Thus our main contribution is handling the case where the optimal rank-$k$ perturbation $G'$ of $G$ has many components. We note that even when $G'$ has exactly three components, it is not clear how to use the analysis in~\cite{BPPSS25}. As such, our proof of \Cref{thm:algo-XP} is self-contained, although our methods are inspired by~\cite{BPPSS25}. We note that quantum algorithms for \textsc{most balanced cut-rank} have been considered in~\cite{BGW25}, as have other heuristic approaches~\cite{PHSH26}. 

Complementing \Cref{thm:algo-XP}, we prove the following hardness result.

\begin{restatable}{theorem}{thmHardness}\label{thm:W1-hard}
    \textsc{rank integrity} is \W$[1]$-hard parameterized by~$k$.
\end{restatable}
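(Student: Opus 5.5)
We reduce from \textsc{multicolored clique}, parameterized by the clique size $t$, and aim for $k = O(t^2)$, or better $k=O(t)$. The reduction mimics the standard W[1]-hardness proofs for integrity and vertex-deletion separation problems, with vertex deletion replaced by rank-$k$ perturbations. Given a $t$-partite graph $H$ with parts $V_1,\dots,V_t$, we build $G$ from three kinds of gadgets, and we fix the target so that $k$-rank-integrity is at most some threshold $s$ exactly when $H$ has a multicolored clique.

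The gadgets are as follows.
\begin{itemize}
\item \emph{Vertex and edge gadgets.} Each vertex and each edge of $H$ becomes a large, highly connected blob, for example a large clique or a random-like graph of large cut-rank. Such a blob cannot be broken into small pieces by any low-rank perturbation. The reason is that a rank-$r$ symmetric perturbation changes the cut-rank of every cut by at most $r$. Since the maximum component size of $G'$ bounds which cuts have rank $0$ in $G'$, every cut of $G$ that separates parts of a final component structure must have cut-rank at most $k$ in $G$. This cut-rank monotonicity is the key structural tool.
\item \emph{Heavy hub.} All gadgets are attached to a heavy hub in a structured way, so that $G$ is essentially connected. The only way to make every component at most $s$ is to detach a prescribed number of blobs from the hub.
\item \emph{Selection via flips.} Detaching a blob $B$ whose neighbourhood into the rest is "uniform", for instance complete to a common set, costs exactly one flip, equivalently a rank-one symmetric matrix $\mathbf{1}_B\mathbf{1}_X^T+\mathbf{1}_X\mathbf{1}_B^T$-type term. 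The costs are arranged so that detaching one vertex blob per colour and one edge blob per colour pair, $\binom{t}{2}$ of them, fits the budget. Consistency gadgets ensure that an edge blob can only be cut off cheaply together with its two endpoint vertex blobs: if the endpoints were not chosen, the edge blob's neighbourhood has rank at least $2$ and costs extra.
\end{itemize}

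\emph{Completeness.} Given a clique, we write down the explicit sum of rank-one terms and check that every resulting component has size at most $s$.

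\emph{Soundness.} Take any rank-$k$ perturbation $G'$ with components of size at most $s$. By cut-rank monotonicity, each blob stays within one component of $G'$. The hub is too big to keep more than a bounded number of blobs, so a large set of blobs must be separated from it. Separating a family of blobs whose neighbourhood vectors span a space of dimension $d$ requires rank at least roughly $d$, since the cut-rank of the separating cut is at least $d$. Counting dimensions then shows that the separated blobs must be exactly $t$ vertex blobs of distinct colours plus $\binom{t}{2}$ edge blobs whose neighbourhood vectors lie in the span of the chosen vertex blobs. That forces the selected edges to be exactly the edges among the chosen vertices, i.e., a clique.

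The main obstacle is soundness against the full algebraic freedom of rank perturbations. An arbitrary symmetric rank-$k$ matrix can partially split components in ways flips cannot. The diagonal freedom allowed by minimum rank and the fact that components of $G'$ need not align with blobs both need care. I would handle this by arguing entirely through cut-ranks of the final component partition: the component partition of $G'$ has all cuts of rank $0$ in $G'$, hence all cuts of rank at most $k$ in $G$. The gadgets are designed so that the only balanced partitions with all unions of parts of cut-rank at most $k$ correspond to cliques. Getting the linear-algebra count to be tight is where I expect the real difficulty. This is especially true for the edge–vertex consistency, which is why I would let the parameter be $k=t+\binom{t}{2}$ if needed.
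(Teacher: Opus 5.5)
There is a genuine gap. What you give is a template, and the ingredients that actually carry the reduction are either missing or stated incorrectly.

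Your cut-rank observation is correct: if $G+\widehat{G}$ has no edge between $A$ and its complement, then $G[A,\overline{A}]=\widehat{G}[A,\overline{A}]$ has rank at most $k$. This is exactly what the paper's soundness argument uses, for the single cut between the giant component and the rest. But your robust gadget fails as stated. A large clique becomes edgeless after one flip, and ``large cut-rank'' is not the relevant property. You need an explicit graph in which every cut of rank at most $k$ has a side of size bounded in terms of $k$. The paper uses the Sylvester--Hadamard graph $H_n$ ($x\sim y$ iff $x\cdot y=0$). There, every complete or anticomplete pair $(X,Y)$ has $|X|\cdot|Y|\le 2^n$, so for $n\ge 2k+3$ every rank-$k$ perturbation keeps a component of size at least $2^n-2^{2k+1}$ (\cref{lem:exists-giant-comp}). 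A random graph would give at best a randomized reduction.

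More seriously, the reduction rests on a gap of exactly one unit of rank, and your accounting cannot detect it. Rank is not additive over blobs, since one matrix of rank at most $k$ must realize all separations at once. Your ``rank-one'' term $\mathbf{1}_B\mathbf{1}_X^\intercal+\mathbf{1}_X\mathbf{1}_B^\intercal$, as written, has rank $2$; the genuine rank-one flip on $B\cup X$ also complements $G[B]$ and $G[X]$. The claims that each selected blob costs one unit, while an edge blob with an unchosen endpoint costs more, are asserted, not derived.

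The paper avoids vertex selection altogether. All vertex copies lie in the Hadamard part and hence in the giant component. Only edge copies are cut off: $2^{2k}$ independent copies per edge, joined to the copies of its endpoints. Hence, with $F$ the set of cut-off edges, $\widehat{\Gamma}$ contains the vertex--edge incidence matrix of $G[F]$. Its rank is the number of vertices of $G[F]$ minus its number of components. This is at least $k$ when $|F|\ge\binom{k}{2}$ and $G[F]$ has no $K_k$ (\cref{prop:find-Kk}), but only $k-1$ for $F=E(K_k)$. That is the exact ``dimension count'' you are hoping for.

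Finally, completeness needs a \emph{symmetric} matrix of rank $k-1$ whose off-diagonal block is the incidence matrix $B_k$ of $K_k$. Bounding the rank of the block is not enough: the naive completion $\left[\begin{smallmatrix}0&B_k\\ B_k^\intercal&0\end{smallmatrix}\right]$ has rank $2(k-1)$ and would erase the gap. The paper constructs it as a Gram matrix $U^\intercal U$ with columns in $\text{GF}(2)^{k-1}$ (\cref{lem:rk-Kk1}). Your ``explicit sum of rank-one terms'' does not address this.
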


That is, we cannot expect to find an algorithm which takes in an $n$-vertex graph $G$ and an integer $k$ and computes the $k$-rank-integrity of $G$ in time $f(k)n^{c}$ for some fixed constant $c$ and computable function $f$. (Doing so is at least as hard as determining if an $n$-vertex graph has a clique of size $k$.) We reduce directly from \textsc{clique}. Our reduction is inspired by the proof in~\cite{DDV16} that \textsc{component order connectivity} is \W$[1]$-hard parameterized by~$k$. In this problem, given a graph $G$ and integer $k$, we wish to compute the $k$-order-integrity of $G$. The \emph{$k$-order-integrity} of $G$ is the minimum -- over all sets $X \subseteq V(G)$ of size at most $k$ -- of the maximum component size of $G-X$.

We note that rank-integrity is really a ``dense analog'' of order-integrity in that for graphs without a $K_{t,t}$-subgraph, they only differ in a change of the parameter (see \Cref{lem:denseAnalog}). This is an instance of a more general trend in the area of ``structural sparsity''; we refer the reader to the recent survey of Mi. Pilipczuk~\cite{PilipczukSurvey26} for more examples. We note that \Cref{lem:denseAnalog} does not seem strong enough to obtain \Cref{thm:W1-hard} directly; we will have to work harder.

Finally, we give some evidence that computing \textsc{rank integrity} could be a useful intermediate step towards obtaining an exact \XP{} algorithm for \textsc{ancilla integrity}. We take this approach to obtain an explicit algorithm for computing $1$-ancilla integrity.

\begin{restatable}{theorem}{thmAncillaVuln}
\label{thm:ancillaVuln}
    The $1$-ancilla-integrity of an $n$-vertex graph $G$ can be computed in time~$\mathcal{O}(n^6)$.
\end{restatable}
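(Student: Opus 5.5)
We need to compute the $1$-ancilla-integrity, i.e.\ the minimum over graphs $G'$ with $d(G,G')\le 1$ of the maximum component size of $G'$. By \cref{obs:distance}, this means: find a graph $\widehat G$ on $V\cup\{a\}$ having both $G$ and $G'$ as vertex-minors. The plan is to first describe the $1$-perturbations of $G$ explicitly, and then optimise over that description. Since $\widehat G$ has only one extra vertex $a$, every vertex-minor of $\widehat G$ on $V$ is, up to local equivalence, one of $\widehat G - a$, $\widehat G * a - a$, or $\widehat G * b * a * b - a$ (pivoting on an edge $ab$), for a suitable locally equivalent $\widehat G$. Using this, I would aim to prove the following structural lemma. $G'$ is a $1$-perturbation of $G$ if and only if some graph locally equivalent to $G'$ is obtained from a graph locally equivalent to $G$ by one of the following operations:
\begin{itemize}
\item[(i)] flipping a set $S$;
\item[(ii)] complementing the edges between two disjoint sets $S,T$ (the pivot case), possibly combined with a flip.
\end{itemize}
The key fact used later is that maximum component size is invariant under local complementation. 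So the answer equals the minimum over $G_1\sim_{\mathrm{lc}} G$ and over one such move of the maximum component size of the result.

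Next, I would reduce to a split-type problem, exactly as in the discussion preceding \cref{thm:algo-XP}. If the resulting graph $G'$ has components $C_1,\dots,C_m$, then $G_1$ (a local complement of $G$) equals $G'$ plus a rank-at-most-$2$ symmetric perturbation. In the flip case, the components force strong structure: for each component $C_i$, the set $S\cap C_i$ is ``uniformly'' attached to everything outside $C_i$. Thus each bipartition $(C_i, V\setminus C_i)$ has cut-rank at most $1$ in $G_1$, and hence also in $G$, since cut-rank is invariant under local complementation. So every component of $G'$ is a side of a split of $G$ (or is a trivial side). Similarly, in the pivot case every component is a side of a cut of rank at most $2$ of a special shape. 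This lets me work in $G$ itself, via its split decomposition, computable in linear time~\cite{cunningham82,linearSplitDahlhaus}. The candidate families of components are then unions of subtrees of the split tree hanging off a node, together with the ``marker'' information of which parts lie in $S$ (or in $S,T$).

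The algorithm is then a dynamic programme over the split decomposition tree. First guess a central node (or edge) of the tree and the role of each of its bags, namely whether each incident subtree side is in $S$, $T$, or neither. Then, for each choice of a threshold $t\le n$, check whether the remaining tree can be cut into pieces of size at most $t$ compatible with the guessed roles. Prime nodes must remain essentially whole, except where local complementation at a degenerate (star/clique) node lets us reroute. Counting the work: about $O(n)$ choices of centre, times $O(n)$ thresholds, times an $O(n^4)$ DP handling pairs of guessed attachment vertices for the rank-$2$ pivot case, gives the claimed $O(n^6)$.

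I expect the main obstacle to be the structural lemma, especially the pivot case (ii). There the rank-$2$ perturbation is not a single flip, so components correspond to cuts of rank up to $2$, which the split decomposition does not display directly. I would try first to show that in case (ii) all but at most two components are still split-sides, and that the two exceptional components are determined by choosing two vertices $u\in S$ and $v\in T$. This is exactly what would contribute the extra $n^2$ factor in the running time.
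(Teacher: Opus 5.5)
There is a genuine gap, and it is the one you flag yourself: case (ii) is never resolved. Your plan for it, via rank-$2$ cuts and ``two exceptional components'' fixed by guessing $u\in S$, $v\in T$, is unproven, and it is also unnecessary. The missing observation is that the pivot case collapses into the flip case. Normalize so that $\widehat{G}-a=G$. The third outcome of Lemma~\ref{lem:threeWaysToRemove} says that $G'$ is locally equivalent to $\widehat{G}*u*a*u-a$ for some neighbour $u$ of $a$. The final local complementation at $u$ does not change components. Moreover, $\widehat{G}*u*a-a$ is the flip of $(\widehat{G}*u)-a=G*u$ on the set $N_{\widehat{G}*u}(a)$. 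So every $1$-perturbation is, up to local complementation, a flip of $G$ or of $G*u$ for a single vertex $u$; this is Lemma~\ref{lem:reductionAncilla}. Your own formulation of (i), which allows flipping any graph locally equivalent to $G$, already contains (ii) once you see this. Consequently no cuts of rank $2$ ever arise, and the extra $n^2$ factor you attribute to case (ii) has no basis.

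The algorithmic half is also not yet a proof. You derive only a necessary condition: every union of components is a side of a split of $G$, using invariance of cut-rank. You never state or prove the converse, namely which partitions of $V(G)$ into split sides are realized as coarsenings of the components of one flip of one graph locally equivalent to $G$. The dynamic programme (``guess a centre'', ``prime nodes remain essentially whole'', ``reroute'') is not defined. The count $\mathcal{O}(n)\cdot\mathcal{O}(n)\cdot\mathcal{O}(n^4)$ is reverse-engineered from the target bound rather than derived. Working in the split tree of $G$ is not unreasonable in itself: splits are exactly the cuts of cut-rank at most $1$, so the split tree is invariant under local complementation. But the realizability statement is the whole content, and it is missing.

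The paper sidesteps this by computing the flip-integrity of each of the $n+1$ graphs $G$ and $G*u$ directly, after reducing to a connected graph. If an optimal flip set $S$ leaves at least three components $C_i$, then the sets $S\cap C_i$ are pairwise complete in $G$, while vertices outside $S$ have no neighbours outside their own component. Hence $S$ is exactly the set of vertices forming a triangle with two vertices of any triangle $x,y,z$ that meets three distinct components. Enumerating triangles then costs $\mathcal{O}(m^{1.5}n^2)=\mathcal{O}(n^5)$. If exactly two components remain, they are the sides of a split, and a most balanced split is found in $\mathcal{O}(n^2)$ from the split decomposition. Altogether this gives $\mathcal{O}(n)\cdot\mathcal{O}(n^5)=\mathcal{O}(n^6)$.
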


We conjecture that, like \textsc{rank integrity}, the \textsc{ancilla integrity} problem is $\XP$ and $\W[1]$-hard in~$k$. We note that this combinatorial framework for edit distance problems on quantum networks leads to a whole host of interesting questions. For instance, one can ask the following connectivity augmentation problem; what is the minimum integer $k$ so that $G$ has a rank-$k$ perturbation which is prime? (A graph is \emph{prime} if it has no non-trivial splits.) From the matrix viewpoint it is also natural to ask whether the following problem is $\XP$ in $k$; given an $n \times m$ binary matrix $M$, what is the minimum -- over all $n \times m$ binary matrices $N$ of rank at most $k$ -- of the maximum block size of $M+N$? (The \emph{size} of a block is its number of rows plus columns.) It also seems unlikely that replacing ``maximum component size'' with ``average component size'' could change the complexity of the \textsc{rank integrity} problem much. Can we capture all of these problems in one language and prove that every problem which can be stated in that language is polynomial-time solvable?

\section{Proof overviews}

In this section we overview our approaches to proving Theorems~\ref{thm:algo-XP}, \ref{thm:W1-hard}, and~\ref{thm:ancillaVuln}. Their full proofs appear in Sections~\ref{sec:XPalgorithm}, \ref{sec:hardness}, and~\ref{sec:ancillaAlgorithm}, respectively. Here, however, we overview the proofs in the reverse order; so we begin with \Cref{thm:ancillaVuln} about 1-ancilla-integrity, then \Cref{thm:W1-hard} about $\W[1]$-hardness, and finally \Cref{thm:algo-XP} about the $\XP$ algorithm.

\subsection{Overview of the algorithm for 1-ancilla-integrity}

We begin by discussing how to compute the $1$-ancilla-integrity of an $n$-vertex graph $G$. 

First of all, let $G^*$ be a $1$-perturbation of $G$ which minimizes the maximum component size. Recall from the definition of a $1$-perturbation that $G$ and $G^*$ have the same vertex set, and both are vertex-minors of a graph $\widehat{G}$ on $n+1$ vertices. Let $a$ be the extra vertex, that is, the unique vertex in $V(\widehat{G})\setminus V(G)$. By locally complementing at vertices of $\widehat{G}$, we may assume that $G$ is an induced subgraph of $\widehat{G}$. We now use the following lemma.

\begin{lemma}[{Bouchet~\cite[9.2]{graphicIsoSystems}} and {Fon-Der-Flaass~{\cite[Corollary~4.3]{FonDerFlaass1988}}}]
\label{lem:threeWaysToRemove}
    For any graph $G^*$ which is a vertex-minor of a graph $\widehat{G}$ and any vertex $a \in V(\widehat{G})\setminus V(G^*)$, the graph $G^*$ is also a vertex-minor of at least one of the following three graphs:\begin{enumerate}
\item the graph obtained from $\widehat{G}$ by deleting $a$,
\item the graph obtained from $\widehat{G}$ by locally complementing at $a$ and then deleting $a$, or
\item the graph obtained from $\widehat{G}$ by selecting an arbitrary neighbor $u$ of $a$, locally complementing on $u$ then $a$ then $u$ again, and finally deleting $a$.
\end{enumerate}
\end{lemma}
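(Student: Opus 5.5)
We would prove this by induction on the length of a shortest sequence of operations producing $G^*$ from $\widehat{G}$. The idea is to push the deletion of $a$ to the front of that sequence, at the cost of replacing $\widehat{G}$ by one of the three graphs in the statement. Write $\widehat{G}*v$ for local complementation at $v$, and $\widehat{G}\wedge uv=\widehat{G}*u*v*u$ for pivoting on an edge $uv$. The graph in item~3 is $(\widehat{G}\wedge ua)-a$.

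The first step is to record the standard facts we will need.
\begin{enumerate}
\item Deletions commute with local complementations at other vertices: $(H*v)-x=(H-x)*v$ for $x\neq v$.
\item Local complementation is an involution.
\item For adjacent $u,v$, $H\wedge uv$ does not depend on the order of $u,v$. It also does not depend on which neighbor is chosen, up to local equivalence of $H\wedge ua - a$ versus $H\wedge u'a - a$ for two neighbors $u,u'$ of $a$. This follows from $H\wedge ua\wedge au' = H\wedge uu'$ when $u'\in N(a)$, together with a short check.
\end{enumerate}
Fact~3 is what makes the word ``arbitrary'' in item~3 legitimate.

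With these facts, every vertex-minor of $\widehat{G}$ on $V(G^*)$ has the form $(\widehat{G}*v_1*\cdots*v_m)-X$, where $X\ni a$. By fact~1 it suffices to show the following claim: for any sequence $v_1,\dots,v_m$,
\[
(\widehat{G}*v_1*\cdots*v_m)-a
\]
is locally equivalent to one of $\widehat{G}-a$, $\widehat{G}*a-a$, or $\widehat{G}\wedge ua-a$. Since local equivalence is transitive, we induct on $m$. Define the \emph{type} of a graph $H$ containing $a$ by which of the three graphs $H-a$, $H*a-a$, $H\wedge ua-a$ is locally equivalent to a fixed vertex-minor. The claim then reduces to a single-step statement: for every vertex $v$, the three graphs associated with $H*v$ are, as an unordered triple, locally equivalent to the three graphs associated with $H$ (after deleting $a$).

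We verify the single-step statement in cases.
\begin{itemize}
\item If $v=a$: $H*a-a$ is item~2 for $H$, $(H*a)*a-a=H-a$, and the pivot term is $(H*a\wedge ua)-a$. This last graph equals $(H\wedge ua)*u - a$ up to standard identities, so it is locally equivalent to item~3 for $H$.
\item If $v\neq a$ and $v$ is not adjacent to $a$: local complementation at $v$ commutes with deleting $a$, $N(a)$ is unchanged, and all three items transform by $*v$.
\item If $v\neq a$ and $v$ is adjacent to $a$: $(H*v)-a=(H-a)*v$. Here $*v$ interchanges the roles of items~2 and~3. Indeed, $H*v*a-a$ is locally equivalent to $H\wedge va-a$ via the identity $H\wedge va=H*v*a*v$. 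Symmetrically, $H*v\wedge va - a$ is locally equivalent to $H*a-a$.
\end{itemize}

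The main obstacle is the bookkeeping in the last case and in fact~3: the identities $H\wedge uv=H*u*v*u=H*v*u*v$ and the independence of the pivot term from the choice of neighbor $u$. We would cite these from Bouchet and Oum as standard rather than re-derive them; if needed, we would verify them by a direct computation on the four classes $N(u)\setminus N(v)$, $N(v)\setminus N(u)$, $N(u)\cap N(v)$, and the rest. With these in hand the induction closes and the lemma follows.
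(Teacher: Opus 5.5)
The paper gives no proof of this lemma; it quotes it from Bouchet and Fon-Der-Flaass, so your argument has to stand on its own. Your overall strategy is sound. Your cases $v=a$ and $v\in N(a)$ are correct, for example $(H*a)\wedge ua=H*u*a=(H\wedge ua)*u$ and $(H*v)\wedge va=H*a*v$.

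The gap is in the case you treat as routine, $v\neq a$ with $v\notin N(a)$. There the pivot term does \emph{not} simply transform by $*v$ when $v$ is adjacent to the chosen neighbor $u$, because $*v$ and $*u$ then fail to commute. For instance, let $H=K_{2,3}$ with parts $\{a,v\}$ and $\{u,z_1,z_2\}$. Then $(H*v)\wedge ua-a$ is a triangle on $\{u,z_1,z_2\}$ plus the isolated vertex $v$. By contrast, $\bigl((H\wedge ua)-a\bigr)*v$ is the path $z_1uz_2$ plus $v$. These are locally equivalent, but not via $*v$. Since $N(a)\subseteq N(v)$ here, your Fact~3 cannot rescue this by switching to a neighbor of $a$ outside $N(v)$.

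What is missing is the identity
\[
(H*v)\wedge ua=(H\wedge ua)*v*u \qquad \text{whenever } v\notin N(a)\cup\{a\} \text{ and } uv\in E(H).
\]
It gives $(H*v)\wedge ua-a=\bigl((H\wedge ua)-a\bigr)*v*u$, which closes the case. It can be checked from the pivot formula. Write $\alpha_z,\gamma_z$ for the adjacency in $H$ of $z\notin\{a,u,v\}$ to $a$ and to $v$. After deleting $a$, both sides agree with $(H\wedge ua)-a$ except on pairs $zz'$ of such vertices. Both toggle $zz'$ exactly when $\alpha_z\gamma_{z'}+\alpha_{z'}\gamma_z+\gamma_z\gamma_{z'}=1$.

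A smaller point: the identity you quote for Fact~3, $H\wedge ua\wedge au'=H\wedge uu'$, is not the one you need. The second pivot requires $au'\in E(H\wedge ua)$, that is, $uu'\in E(H)$. It is also a pivot on an edge through $a$, so it does not commute with deleting $a$. The correct identity is $H\wedge au\wedge uu'=H\wedge au'$ for distinct $u,u'\in N(a)$; here $uu'$ is always an edge of $H\wedge au$. It gives $(H\wedge au')-a=\bigl((H\wedge au)-a\bigr)\wedge uu'$ at once. With these two repairs, your induction on the length of the local-complementation sequence closes.
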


Since locally complementing does not change the components of a graph, we may assume that $G^*$ actually equals one of the three graphs above. In this manner, we prove in \Cref{lem:reductionAncilla} that the $1$-ancilla-integrity of $G$ equals the minimum rank-$1$ integrity of any graph which is obtained from $G$ by performing at most one local complementation. Thus we just need to show how to compute the rank-$1$ integrity. Since a graph $G'$ is a rank-$1$ perturbation of $G$ if and only if it is a flip of $G$, we call this the \emph{flip-integrity} of $G$. 

Given a set $S$ of vertices of $G$, we write $G\circ S$ for the graph obtained from $G$ by flipping on $S$. Let $S^*$ be a set of vertices so that $G \circ S^*$ minimizes the maximum component size. In \Cref{lem:makeGConn}, we show how to reduce to the case that $G$ is connected. Now, consider the components $C_1, C_2, \ldots, C_r$ of the graph $G \circ S^*$. The high-level approach is to solve the problem in two stages; first we solve the problem in the case that $r\geq 3$, then we solve the problem for the case that $r=2$, and finally we take the best of the two solutions. (We note that $r\neq 1$ since there is always a flip that makes one vertex isolated; see \Cref{lem:lessThanN}.)

So, first suppose that $G \circ S^*$ has at least three components. Then, since $G$ is connected, $S^*$ contains at least one vertex from each of these components. So $G$ contains a triangle whose vertices $x$, $y$, and $z$ are in three different components of $G \circ S^*$. Notice that every vertex in $S^*$ forms a triangle (in $G$) with at least two of the vertices from the triangle $x,y,z$. In fact this characterizes $S^*$; no vertex outside of $S^*$ forms such a triangle in $G$. So, by iterating over all triangles $x,y,z$ in $G$ and deriving the set \begin{align*}
    S_{x,y,z}=\{u \in V(G) : u \text{ is in a triangle in $G$ with at least two of the vertices }x,y,z\},
    \end{align*} we will at some point consider $S^*$. So the first stage of our algorithm is to compute the minimum -- over all such sets $S_{x,y,z}$ -- of the maximum component size of $G \circ S_{x,y,z}$. This stage of the algorithm is formalized in \Cref{lem:constrainedFlipVul}.

In the second stage, which is formalized in \Cref{thm:compute-most-balanced-split}, our algorithm finds the most balanced split of an $n$-vertex graph $G$ in time $\mathcal{O}(n^2)$. Note that if $G \circ S^*$ has exactly two components, then those two components are the sides of a split. Conversely, if $A$ and $B$ are the sides of a split, then there is a flip so that every component has size at most the maximum of $|A|$ and $|B|$. This stage uses known linear-time algorithms for computing the ``split decomposition'' of a graph~\cite{linearSplitRevisited12, linearSplitDahlhaus}, which ``display'' all of its splits.

Finally, we just return the better of the two values from the two stages of the algorithm. The details and running-time are given in \Cref{sec:ancillaAlgorithm}.

\subsection{Overview of the proof of \texorpdfstring{\W$[1]$}{W[1]}-hardness of \textsc{rank integrity}}

As we already discussed, the problem \textsc{rank integrity} is a ``dense analog'' of \textsc{component order connectivity}. Recall that in the latter problem, given a graph $G$ and an integer $k$, the goal is to find a set $X$ of at most $k$ vertices minimizing the maximum size of a component of $G-X$.
Drange, Dregi and van ’t Hof~\cite{DDV16} proved that this problem is \W$[1]$-hard parameterized by $k$.
By adapting their reduction to the dense setting, we show that \textsc{rank integrity} is \W$[1]$-hard parameterized by $k$ as well.

Their hardness proof proceeds by a reduction from the problem \textsc{clique}, parameterized by the size of the desired clique.
Given an instance $(G, k)$ of \textsc{clique}, their reduction uses a variant of the vertex-edge incidence graph of $G$.
More precisely, they consider the graph $I(G)$ with vertex set $V(G) \cup E(G)$, where $V(G)$ induces a clique, $E(G)$ is an independent set, and a vertex $v \in V(G)$ is adjacent to a vertex $e \in E(G)$ if and only if $v$ and $e$ are incident in $G$.

The key observation is that there is always an optimal solution $X$ for $I(G)$ consisting only of vertices from $V(G)$. Moreover, for such a set $X$, the largest component of $I(G) - X$ has size $|I(G)| - k - e(X)$, where $e(X)$ denotes the number of edges in the subgraph $G[X]$.
It follows that $G$ contains a clique of size $k$ if and only if $I(G)$ contains a set of $k$ vertices whose removal leaves all components of size at most $|I(G)| - k - \binom{k}{2}$.

This reduction relies crucially on the fact that the clique on $V(G)$ is robustly connected in the sparse setting: after deleting $k$ vertices, it still contains a component of size $n-k$. 
This property fails in the dense setting, since a complete graph can be transformed into an independent set by a rank-$1$ perturbation.
Our strategy is therefore to replace the clique in the reduction by a graph $H$ that is robustly connected in the dense setting. Ideally, $H$ would have the property that every rank-$k$ perturbation of $H$ has a component of size at least $n-k$.

We are unable to construct such a graph directly.
Instead, we use the following weaker substitute, based on Sylvester-Hadamard matrices.
We consider the graph $H$ on vertex set $V(G) \times [N]$, where $N$ is a sufficiently large integer depending only on $k$, whose adjacency matrix is obtained from an appropriate Sylvester-Hadamard matrix. 
For $v \in V(G)$, we say that a vertex of $H$ of the form $(v, \cdot)$ is a \emph{copy} of $v$.
The crucial property of this construction, which we prove in \cref{lem:exists-giant-comp}, is that every rank-$k$ perturbation of $H$ has a component of size at least $|H| - N/2$. 
In particular, such a component contains at least one copy of \emph{every} $v \in V(G)$.
This graph $H$ will play the same role as the clique in the sparse reduction: it provides a large robust component that survives all low-rank perturbations.

We then define a graph $\Gamma_k(G)$ with vertex set $(V(G) \cup E(G)) \times [N]$. The set $V(G) \times [N]$ induces the graph $H$, the set $E(G) \times [N]$ is an independent set, and vertices $(v, \cdot) \in V(G) \times [N]$ and $(e, \cdot) \in E(G) \times [N]$ are adjacent if and only if $v$ and $e$ are incident in $G$.
Thus, $\Gamma_k(G)$ can be obtained starting from the vertex-edge incidence graph of $G$, by replacing each vertex and each edge of $G$ by $N$ copies, so that the copies of the vertices in $V(G)$ induce the robust graph $H$.

The key claim, which we prove in \cref{lem:pf-reduction}, is that there is a rank-$k$ perturbation of $\Gamma_k(G)$ whose largest component has size at most $|\Gamma_k(G)| - \binom{k+1}{2} \cdot N$ if and only if $G$ contains a clique of size $k+1$. This immediately implies the \W$[1]$-hardness of \textsc{rank integrity}.
The shift from $k$ to $k+1$ reflects the fact that the vertex-edge incidence matrix of a connected graph on $k+1$ vertices has rank $k$.

First, suppose that $G$ contains a clique $C$ of size $k+1$. Then, there is a rank-$k$ perturbation of $\Gamma_k(G)$ that separates all copies of $e$ for $e \in E(C)$ from the rest of $\Gamma_k(G)$.
This is done in \cref{lem:rk-Kk1}.
The largest component then has size at most $|\Gamma_k(G)| - \binom{k+1}{2} \cdot N$.
Conversely, suppose that there is a rank-$k$ perturbation $\Gamma_k(G) + \widehat{\Gamma}$ of $\Gamma_k(G)$ whose largest component has size at most $|\Gamma_k(G)| - \binom{k+1}{2} \cdot N$.
By the defining property of $H$, there is a component of $\Gamma_k(G) + \widehat{\Gamma}$ which contains all but at most $N/2$ vertices from $V(G) \times [N]$.
Consequently, this component must miss more than $\left(\binom{k+1}{2} - 1\right) \cdot N$ vertices from $E(G) \times [N]$.
Let $F \subseteq E(G)$ be the set of edges $e \in E(G)$ such that this component misses at least one copy of $e$. The previous bound implies that $|F| \geq \binom{k+1}{2}$.
It remains to extract a clique from $G[F]$.
Observe that, for each $e \in F$, some copy of $e$ is separated from the component which contains a copy of each vertex of $G$, hence of the endpoints of $e$. Thus, the perturbation $\widehat{\Gamma}$ must cancel the corresponding incidence adjacencies, so the vertex-edge incidence matrix of $G[F]$ appears as a submatrix of the adjacency matrix of $\widehat{\Gamma}$.
In particular, this incidence matrix has rank at most $k$.
However, the vertex-edge incidence matrix of any graph with at least $\binom{k+1}{2}$ edges and no clique of size $k+1$ has rank at least $k+1$, see \cref{prop:find-Kk}.
Therefore $G[F]$ must contain a clique of size $k+1$.
Hence $G$ contains a clique of size $k+1$, as required.

\subsection{Overview of the \XP{} algorithm for \textsc{rank integrity}}
\label{subsec:XPalgorithmOverview}

As is standard for \XP~algorithms, we describe the algorithm in terms of a bounded number of ``guesses'', thus yielding a polynomial number of ``guessed objects''. Formally, each guess corresponds to a brute-force enumeration over all possibilities, and the algorithm returns the best solution found over all non-rejected branches. A roadmap of our algorithm can be found in \cref{alg:pseudocode}.

Recall that our goal is, given a graph $G$ and integer $k$, to find a rank-$k$ perturbation $G'$ of $G$ that minimizes the maximum component size of $G'$. It is convenient to introduce looped graphs to store this perturbation more succinctly; a \emph{looped graph} is a graph where every vertex is allowed to have at most one loop. Thus we can equivalently describe a rank-$k$ perturbation $G'$ of $G$ in the form $G'=G+\widehat{G}$ where $\widehat{G}$ is a looped graph of rank at most $k$; the \emph{rank} of a looped graph is the rank of its adjacency matrix over $\text{GF}(2)$. We now consider the resulting graph $G'$ to also be looped; this does not change its maximum component size.

Fix a rank-$k$ perturbation $G+G^*$ of $G$ that minimizes the maximum component size.
To design an \XP~algorithm, we would like to be able to describe $G^*$ using only a bounded number of vertices of $G$ and bounded additional information. For this, we need some definitions. 

In a looped graph $H$, the \emph{neighborhood} of a vertex $v$ is the set of vertices that are adjacent to $v$ in $H$. 
So we consider $v$ to be in its own neighborhood if and only if there is a loop at $v$.
The relation $\sim$ on $V(H)$ defined by $u \sim v$ if and only if $u$ and $v$ have the same  neighborhood in $H$ is an equivalence relation.
The partition into equivalence classes for $\sim$ is the \emph{type partition} of $H$.
The \emph{quotient graph} $H/\mathcal{P}$ is the looped graph on vertex set $\mathcal{P}$ where $P_1, P_2 \in \mathcal{P}$ are adjacent if and only if there is an edge between them in $H$ (in which case they are complete in $H$; two sets of vertices are \emph{complete} if $H$ contains all possible edges between them and \emph{anticomplete} if $H$ contains no edges between them).

Since $G^*$ has rank at most $k$, its type partition $\mathcal{P}^*$ has at most $2^k$ parts (because a binary rank-$k$ matrix has at most $2^k$ distinct columns).
Observe that $G^*$ is entirely determined by the looped graph $G^*/\mathcal{P}^*$ (which has at most $2^k$ vertices) and by the partition $\mathcal{P}^*$.
Our algorithm first guesses the looped graph $G^*/\mathcal{P}^*$, and then tries to recover enough information about $\mathcal{P}^*$ to reconstruct $G^*$.
We first explain the algorithm under the simplifying assumption that $G+G^*$ has a bounded number of components. 
This already contains the main ideas: guessing representatives, defining possible assignments, and resolving conflicts. 
We then explain how to remove this assumption.

Let $\mathcal{Q}^*$ be the common refinement of $\mathcal{P}^*$ and of the partition into components of $G+G^*$; note that $\mathcal{Q}^*$ consists of a bounded number of parts.
To describe the partition $\mathcal{P}^*$, we guess a vertex $v_Q$ from each part $Q \in \mathcal{Q}^*$, and denote the corresponding set of guessed representative vertices by $R^* \coloneqq (v_Q)_{Q \in \mathcal{Q}^*}$.
We store which vertices in $R^*$ should belong to the same component of $G+G^*$ and which vertices in $R^*$ should belong to the same part of $\mathcal{P}^*$.
Then, for every vertex $v \in V(G)$, we start trying to figure out which part $Q \in \mathcal{Q}^*$ contains $v$.
For this, let us start with an easy observation.
Let $v \in V(G)$ and let $Q \in \mathcal{Q}^*$ be the part that contains $v$.
By definition, $v$ and $v_{Q}$ belong to the same part $P \in \mathcal{P}^*$ and to the same component $K$ of $G+G^*$.
Thus, $v$ and $v_Q$ have the same neighborhood in $G^*$, so for every vertex $v_{Q'} \in R^*$ that does not belong to $K$, $v$ and $v_{Q}$ are either both adjacent or both non-adjacent to $v_{Q'}$ in $G$.
For every vertex $v_Q \in R^*$ that satisfies this, we say that $v$ and $v_{Q}$ are \emph{compatible}. 
Let $f^* : V(G) \to 2^{R^*}$ be the function which maps every vertex $v \in V(G) \setminus R^*$ to its compatible representatives, and $v_Q$ to $\{v_Q\}$ for every $v_Q \in R^*$.

Now, we know that every vertex $v \in V(G)$ must be assigned to some part $Q \in \mathcal{Q}^*$ such that $v_Q \in f^*(v)$.
Thus, reconstructing $\mathcal{Q}^*$ amounts to choosing, for every vertex $v \in V(G)$, a representative $v_Q \in f^*(v)$, subject to consistency constraints.
Indeed, it would be nice if we could make all these choices independently to obtain a partition $\mathcal{Q}'$ corresponding to some rank-$k$ perturbation of $G$.
Unfortunately, this might not be the case: there might be vertices $v, v' \in V(G)$ and vertices $v_Q \in f^*(v)$ and $v_{Q'} \in f^*(v')$ such that $v_Q$ and $v_{Q'}$ do not belong to the same component of $G+G^*$ and exactly one of $vv'$ and $v_Qv_{Q'}$ is an edge of $G$.
However, if $Q, Q' \in \mathcal{Q}^*$ are not contained in the same component in $G+G^*$ then $Q$ and $Q'$ are either complete or anticomplete in $G$.
This means that it is not possible to assign simultaneously $v$ to $Q$ and $v'$ to $Q'$.
If this happens for every $v_Q \in f^*(v)$ and $v_{Q'} \in f^*(v')$ that do not belong to the same component of $G+G^*$, we say that $v$ and $v'$ are \emph{strongly conflicting}.
If this happens for some $v_Q \in f^*(v)$ and $v_{Q'} \in f^*(v')$ that do not belong to the same component of $G+G^*$, we say that $v$ and $v'$ are \emph{conflicting}.
If they are conflicting but not strongly conflicting, we say that $v$ and $v'$ are \emph{weakly conflicting}.

It follows immediately from this definition that any two strongly conflicting vertices must belong to the same component of $G+G^*$. This is formalized in \cref{cl:conflicting-same-cc}.
The ``strongly conflicting'' relation is symmetric, so its reflexive and transitive closure forms an equivalence relation. 
Let $\mathcal{U}^*$ be the partition of $V(G)$ into equivalence classes for this relation. 
By the above observation, every $U \in \mathcal{U}^*$ is contained in a single component of $G+G^*$.

The situation is not as simple for weakly conflicting vertices, but we show that such vertices must still satisfy very restrictive conditions.
More precisely, if $v$ and $v'$ are weakly conflicting, we establish in \cref{cl:setup-weakly-conflicting} that there exist vertices $v_{Q_1}, v_{Q_2}, v_{Q'_1}, v_{Q'_2} \in R^*$ such that $f^*(v) = \{v_{Q_1}, v_{Q_2}\}$, $f^*(v') = \{v_{Q'_1}, v_{Q'_2}\}$; $v_{Q_1}$ and $v_{Q'_1}$ belong to the same component of $G+G^*$, as do $v_{Q_2}$ and $v_{Q'_2}$; and exactly one of $v_{Q_1}v_{Q'_2}$ and $v_{Q_2}v_{Q'_1}$ is an edge in $G$.
In this case, the conflict between $v$ and $v'$ is asymmetric: if $v$ is assigned to $Q_1$ then $v'$ must also be assigned to $Q'_1$ (or vice versa, depending on whether or not $vv' \in E(G)$) but the converse implication does not hold. 

Then, let $W = W(Q_1, Q_2)$ be the set of vertices $v \in V(G)$ such that $f^*(v) = \{v_{Q_1}, v_{Q_2}\}$ and $W' = W'(Q'_1, Q'_2)$ be the set of vertices $v' \in V(G)$ such that $f^*(v') = \{v_{Q'_1}, v_{Q'_2}\}$.
Consider the relation $\prec$ on $W \cup W'$ defined by $u \prec v$ if and only if $u \in Q_1 \cup Q'_1$ implies $v \in Q_1 \cup Q'_1$.
It then follows from the above discussion that $\prec$ forms a preorder on $W \cup W'$, where any element of $W$ is comparable with any element of $W'$.
If we are given one $\prec$-minimum element of $W\cup W'$ that belongs to $Q_1\cup Q'_1$, say an element of $W$, then the preorder determines, for every $v\in W'$, whether $v$ belongs to $Q'_1$ or to $Q'_2$.
This step is formalized in \cref{cl:exists-crossing-solver}.
Therefore, by simply guessing such a \emph{witness} vertex, we are able to resolve all conflicts between $W$ and $W'$.
Since there are only a bounded number of tuples $(Q_1, Q_2, Q'_1, Q'_2)$, we are able to resolve all weak conflicts by guessing only a bounded number of witness vertices.

We incorporate the information given by these guessed witnesses by deleting from $f^*(v)$ every representative that is inconsistent with the implications forced by the witnesses. 
Let $\widehat{f^*}$ be the resulting compatibility function.
The important property of $\widehat{f^*}$ is that every conflict with respect to $\widehat {f^*}$ is a strong conflict.

Say that a partition $\mathcal{C}$ of $V(G)$ is \emph{suitable} if it satisfies the following conditions. First, for every vertex $v \in V(G)$, there exists a representative $v_Q \in \widehat{f^*}(v)$ such that $v$ and $v_Q$ belong to the same part of $\mathcal{C}$. Second, every part $U \in \mathcal{U}^*$ is contained in a single part of $\mathcal{C}$. Third, whenever two representatives $v_Q, v_{Q'} \in R^*$ are required by the stored data to lie in the same component of $G+G^*$, they belong to the same part of $\mathcal{C}$.
The crucial lemma is that every suitable partition $\mathcal{C}$ can be realized as the partition into components of some rank-$k$ perturbation of $G$.
This is proved in \cref{cl:C*-cc-G+G'}.
Among all suitable partitions, a simple dynamic programming algorithm can compute the one minimizing the maximum size of a part.
Thus, the algorithm computes such a partition and returns the maximum size of a part.

Finally, we explain how to remove the assumption that $G+G^*$ has a bounded number of components.  
Define the \emph{signature} of a component to be the set of parts of $\mathcal{P}^*$ that it intersects.
Since $\mathcal{P}^*$ has a bounded number of parts, there are only a bounded number of possible signatures.
Then, the algorithm guesses a bounded number of components of $G+G^*$ that are ``characteristic'' of all signatures.
The key observation is that the components that were not guessed can be identified solely from $\mathcal{U}^*$ and the guessed data. 
The precise condition appears in \cref{cl:updated-compatibility-info}.
These classes can simply be put aside before computing $\mathcal{C}$. 
The rest of the algorithm is unchanged.

\section{The graph state picture}
\label{sec:quantum}

In quantum information, qubits are the quantum equivalent of classical bits, and an $n$-qubit quantum state is identified with a complex vector with $2^n$ coordinates. Graph states are a family of quantum states that are in one-to-one correspondence with simple undirected graphs. Given a graph $G$ defined on the vertex set $V$, the corresponding graph state is defined by $ \ket{G} = \sum_{X \subseteq V} (-1)^{e_X} \ket{1}_{X} \ket{0}_{V \setminus X}$, where $e_X$ denotes the number of edges in the subgraph induced by $X$. Graph states were originally introduced as a universal resource for measurement-based quantum computing~\cite{briegel2009measurement, raussendorf2001one, raussendorf2003measurement}. However, they have seen recent interest as a simple but useful tool in quantum networking~\cite{hahn2022phdthesis}, as they are useful for cryptographic protocols, a prime example being quantum secret sharing~\cite{Bell2014secret, gravier2013quantum, Javelle2013, Keet2010, markham2008graph}. 

Typically, a graph state is prepared by a network provider, which sends a qubit to each each party in the network. Then, only local quantum operations are allowed, for example one-qubit gates or one-qubit measurements. It is standard to consider only the so-called \emph{Clifford operations}: one-qubit Clifford gates and one-qubit Pauli measurements. These are in general simpler to implement than arbitrary local operations. Classical communication between distant parties is also allowed; it is in general needed to correct according to the (probabilistic) measurement outcomes. The graphical interpretation of these ``simple'' local operations, given by the vertex-minor formalism, gives tools to easily manipulate entanglement across the network \cite{bravyi2024generating, Cautres2024, fischer2021distributing, freund2025graph, hahn2019quantum, Mannalath2023, meignant2019distributing}, or to identify bottlenecks~\cite{Hahn2022limitations}.

\begin{proposition}[\cite{transformingStates, VandenNest04, Hein04}]
\label{prop:vertexMinorInterp}
    A graph $H$ is a vertex-minor of a graph $G$ if and only if $\ket{G}$ transforms deterministically into $\ket{H}$ via one-qubit Pauli measurements, one-qubit Clifford gates, and classical communication.
\end{proposition}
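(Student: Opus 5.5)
The plan is to prove the two directions separately, going through the standard local-equivalence machinery for stabilizer states. The literature this proposition is attributed to (\cite{transformingStates, VandenNest04, Hein04}) is used as follows. Two facts are used throughout:
\begin{itemize}
\item local complementation at $v$ is realized on $\ket{G}$ by the local Clifford unitary $e^{-i\pi X_v/4}\prod_{u\in N(v)} e^{i\pi Z_u/4}$, up to global phase (Van den Nest et al.);
\item measuring a qubit $v$ in the Pauli $Z$ basis yields, after a local Pauli correction determined by the outcome, the graph state $\ket{G-v}$ (Hein et al.).
\end{itemize}

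\textbf{Forward direction.} This is by induction on the length of a sequence of local complementations and vertex deletions taking $G$ to $H$. Local complementations are applied as one-qubit Clifford gates. Deletions are performed by $Z$-measurements, with the outcome classically communicated to the neighbours, who apply the Pauli correction $Z^{m}$ on $N(v)$. Every step is deterministic in its end result, so $\ket{G}$ transforms deterministically into $\ket{H}$. Measurements in the $X$ or $Y$ basis can also be expressed this way, since they correspond to deleting $v$ from $G*v$ or from $G*u*v*u$, matching \Cref{lem:threeWaysToRemove}. Only $Z$-deletions are needed, however.

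\textbf{Converse.} This is the main obstacle. Suppose a protocol of one-qubit Clifford gates, one-qubit Pauli measurements and classical communication outputs $\ket{H}$ deterministically. The first step is to commute all unitaries past the measurements, absorbing them into the measurement bases. A one-qubit Clifford conjugates a Pauli to a Pauli, and classically controlled Cliffords can be absorbed branch by branch. The protocol therefore becomes: Pauli measurements on the qubits of $V(G)\setminus V(H)$, possibly adaptively chosen, followed by a local Clifford $C$ on $V(H)$.

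Next, I fix any single branch, that is, any measurement outcome sequence. Using the known rules for Pauli measurements on graph states (Hein et al.), the post-measurement state on the remaining qubits is, up to a local Clifford, $\ket{G'}$ for some graph $G'$ obtained from $G$ by local complementations and deletions; in other words, $G'$ is a vertex-minor of $G$. Determinism gives that $\ket{H}$ equals a local Clifford image of $\ket{G'}$. The theorem of Van den Nest, Dehaene and De Moor then applies: two graph states are local Clifford equivalent if and only if their graphs are related by a sequence of local complementations. Hence $H$ is locally equivalent to $G'$, and so $H$ is a vertex-minor of $G$.

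The delicate points are twofold. One is justifying that adaptivity does not escape this normal form; this holds because examining one branch suffices. The other is that the measured qubits are discarded, as in the destructive-measurement convention, so that the remaining system is a pure graph state on exactly $V(H)$. Here I would cite \cite{transformingStates} for the bookkeeping.
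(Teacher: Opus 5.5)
The paper gives no proof of its own here, only the citation \cite{transformingStates, VandenNest04, Hein04}. Your argument is exactly the standard one from those sources, and it is correct: local-complementation Cliffords and corrected $Z$-measurements give the forward direction, while the converse goes through the normal form of adaptive Pauli measurements on $V(G)\setminus V(H)$ followed by a local Clifford, the Pauli-measurement rules applied along one branch of positive probability, and the Van den Nest--Dehaene--De Moor theorem. You are also right that the destructive-measurement convention is essential: a non-destructive $Z$-measurement at one end of $K_2$ leaves an LC-image of the edgeless graph on both vertices, which is not a vertex-minor of $K_2$.

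There is one harmless slip in your aside. A $Y$-measurement at $v$ corresponds to deleting $v$ from $G*v$, and an $X$-measurement to deleting $v$ from $G*u*v*u$, which is the opposite of the order in which you listed them; this does not matter since you only use $Z$-deletions.
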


This explains why the notions of distance agree for graphs and graph states. That is, \Cref{prop:vertexMinorInterp} implies \Cref{obs:distance}, which we restate below for convenience.

\obsDistance*

We note that $d$ is a valid notion of distance. Formally, it is a pseudometric on the class of all graph states. It follows from the definition that $d(\ket{G}, \ket{G})=0$ and that $d$ is {symmetric}, that is, $d(\ket{G}, \ket{G'}) = d(\ket{G'}, \ket{G})$. Moreover, $d$ satisfies the triangular inequality.

\begin{proposition} For any three graph states $\ket{G_1}$, $\ket{G_2}$ and $\ket{G_3}$ defined on the same qubit set, $ d(\ket{G_1},\ket{G_3}) \leqslant d(\ket{G_1},\ket{G_2}) + d(\ket{G_2},\ket{G_3})$.
\end{proposition}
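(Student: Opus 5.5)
By \Cref{obs:distance} it suffices to prove the triangle inequality for the graphical distance: $d(G_1,G_3)\le d(G_1,G_2)+d(G_2,G_3)$ for graphs $G_1,G_2,G_3$ on a common vertex set $V$. The plan is to glue two witness graphs along $V$. Let $k=d(G_1,G_2)$ and $\ell=d(G_2,G_3)$. Fix a graph $\widehat{G}$ on $V\cup A$ with $|A|=k$ that has both $G_1$ and $G_2$ as vertex-minors. Fix a graph $\widetilde{G}$ on $V\cup B$ with $|B|=\ell$ that has both $G_2$ and $G_3$ as vertex-minors. Choose $A$ and $B$ disjoint. We want a single graph on $V\cup A\cup B$ containing $G_1$ and $G_3$ as vertex-minors.

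First, normalize both witnesses so that $G_2$ is an induced subgraph of each. Vertex-minors are obtained by a sequence of local complementations followed by deletions. So if $G_2=(\widehat{G}*v_1*\cdots*v_t)-A$, we replace $\widehat{G}$ by $\widehat{G}*v_1*\cdots*v_t$. This graph is locally equivalent to $\widehat{G}$, so it has the same vertex-minors; in particular it still has $G_1$ as a vertex-minor. Now $\widehat{G}[V]=G_2$, and similarly $\widetilde{G}[V]=G_2$.

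Next, form the amalgam $\Gamma$ on $V\cup A\cup B$. Take the edges of $\widehat{G}$ and the edges of $\widetilde{G}$; they agree on $V$, where both induce $G_2$. Put no edges between $A$ and $B$. Then $\Gamma-B=\widehat{G}$ and $\Gamma-A=\widetilde{G}$, and $|V(\Gamma)|=|V|+k+\ell$.

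The main step is to show that $G_1$ is a vertex-minor of $\Gamma$; the case of $G_3$ is symmetric. A vertex-minor of an induced subgraph is a vertex-minor of the whole graph, since local complementation commutes with deletion of vertices other than the one being complemented. Concretely, $(\Gamma*w)-B=(\Gamma-B)*w$ for $w\notin B$. We therefore perform, on $\Gamma$, the same sequence of local complementations that takes $\widehat{G}$ to a graph whose restriction to $V$ is $G_1$; all of these occur at vertices of $V\cup A$. We then delete $A\cup B$. This yields exactly $G_1$. Note that the local complementations may create edges to $B$, but this is harmless because $B$ is deleted at the end.

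This is the only real obstacle: checking the commuting claim, namely that $(\Gamma*w)[V\cup A]=(\Gamma[V\cup A])*w$ whenever $w\in V\cup A$. This holds because local complementation at $w$ only toggles pairs inside $N(w)$, and the pairs within $V\cup A$ that get toggled depend only on $N(w)\cap(V\cup A)$. Hence $\Gamma$ witnesses $d(G_1,G_3)\le k+\ell$, and \Cref{obs:distance} transfers the inequality to graph states.
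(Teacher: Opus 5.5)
Your proof is correct and follows essentially the same route as the paper: you normalize both witnesses by local complementation so that $G_2$ is induced on $V$, then take their union glued along $V$ with no edges between the two ancilla sets. Your explicit check that local complementation at $w \notin B$ commutes with deleting $B$ spells out the step that the paper leaves implicit.
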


\begin{proof}
    Let $V$ be the vertex set of $G_1$, $G_2$ and $G_3$. By definition, both $G_1$ and $G_2$ are vertex-minors of some graph $\widehat{G}_{1,2}$ whose vertex set consists of $V$ and $d(G_1,G_2)$ additional vertices. By locally complementing, we may assume that $G_2$ is an induced subgraph of $\widehat{G}_{1,2}$. Similarly, both $G_2$ and $G_3$ are vertex-minors of some graph $\widehat{G}_{2,3}$ whose vertex set consists of $V$ and $d(G_2,G_3)$ additional vertices. Again we may assume that $G_2$ is an induced subgraph of $\widehat{G}_{2,3}$. Now let $\widehat{G}_{1,3}$ be the union of $\widehat{G}_{1,2}$ and $\widehat{G}_{2,3}$. That is, on $V$, the graph $\widehat{G}_{1,2}$ equals $G_2$, plus it has $d(G_1,G_2)+d(G_2,G_3)$ additional vertices whose adjacencies are given by $\widehat{G}_{1,2}$ and $\widehat{G}_{2,3}$, respectively. Then $\widehat{G}_{1,3}$ contains both $G_1$ and $G_3$ as vertex-minors.
\end{proof}

\subsection{Discussion on ancilla-integrity}

The \emph{$k$-ancilla-integrity} of a graph state $\ket{G}$ defined on qubit set $V$ is the minimum integer $t$ such that there exists a graph state $\ket{\widehat{G}}$ defined on qubit set $V \cup A$ such that $\widehat{G}-A=G$, and such that there exist one-qubit Pauli measurements on $A$ that deterministically map $\ket{\widehat{G}}$ to a tensor product of quantum states with at most $t$ qubits. 

This definition is equivalent to the definition given in the introduction. Indeed, the one-qubit Pauli measurements map $\ket{\widehat{G}}$ to a quantum state that is a graph state up to one-qubit Clifford gates, which do not modify entanglement (see for example \cite{Hein04}). Conversely, suppose that $\ket{G}$ and $\ket{G'}$ can both be prepared from a graph state $\ket{\widehat{G}}$ on qubit set $V \cup A$ using one-qubit Pauli measurements, one-qubit Clifford gates, and local communication. Then we may switch the order of these operations so that $\widehat{G}-A=G$ and $\ket{G'}$ can be obtained from $\widehat{G}$ by first performing one-qubit Pauli measurements and then one-qubit Clifford gates. 

We now show that the notions of ancilla integrity and rank integrity agree up to a factor of 2 in the number of ancilla qubits. We say that two graphs $G$ and $G'$ on the same vertex set are \emph{locally equivalent} if one can be obtained from the other by performing a series of local complementations.

\flipAncillaComparison*
\begin{proof}
    First let $G'$ be a rank-$k$ perturbation of $G$. Then by~\cite[Lemma 7.1]{campbell2026erdHos}, $G'$ is also a $k$-perturbation of $G$. So the $k$-ancilla-integrity of $G$ (or equivalently of $\ket{G}$) is at most the $k$-rank-integrity of $G$. Now let $G'$ be a $k$-perturbation of $G$. By~\cite[Lemma 7.2]{campbell2026erdHos}, $G'$ is locally equivalent to a graph which is a rank-$2t$ perturbation of $G$. Since the maximum component size is invariant under local complementation (or, in the quantum picture, by one-qubit gates), it follows that the $2k$-rank-integrity is at most the $k$-ancilla integrity.
\end{proof}

Notice that this relationship holds for any graph parameter which is invariant under local complementation. More precisely, let $f(G)$ be a function that associates a value to a graph $G$, such that $f(G)$ is invariant by local complementation (that is, $f(G')=f(G)$ for any graph $G'$ locally equivalent to $G$). Then, the following generalization of \cref{lem:flipAncillaComparison} holds: for any graph $G$ and non-negative integer $k$,
\begin{align*}
    \min\{f(G') ~|~ G'  &\text{is a $(2k)$-rank-perturbation of $G$}\} \\
    & \leq \min\{f(G') ~|~ G' \text{ is a $k$-perturbation of $G$}\}\\
    & \leq \min\{f(G') ~|~ G' \text{is a $k$-rank-perturbation of $G$}\}
\end{align*}If instead we wish to maximize $f$, a similar relationship holds with the inequalities reversed.

\subsection{Comparison with Pauli persistency}

We compare ancilla-integrity to another natural notion of integrity for graph states, inspired by the Pauli persistency, defined below.

\begin{definition}[\cite{BRpersistency2001, Hein04}]
    The \emph{Pauli persistency} of a graph state $\ket{G}$ is the minimum number of one-qubit Pauli measurements on the qubits of $\ket{G}$ needed to disentangle $\ket{G}$.
\end{definition}

Equivalently, the \emph{Pauli persistency} of a graph state $\ket{G}$ is the smallest integer $k$ so that $G$ is locally equivalent to a graph $G'$ with a set $X$ of at most $k$ vertices so that $G'-X$ is an edgeless graph. The definition of Pauli persistency requires that the target quantum state is completely disentangled. We may ask a relaxed question: what is the smallest integer $k$ such that $k$ one-qubit Pauli measurements on the qubits of $\ket G$ deterministically maps $\ket{G}$ to a union of disconnected quantum states with at most $t$ qubits? In graph-theoretical terms, the question reads: what is the smallest integer $k$ such that $G$ is locally equivalent to a graph $G'$ with a set $X$ of at most $k$ vertices so that each component of $G'-X$ has at most $t$ vertices? The difference with $k$-ancilla-integrity is that the Pauli measurements are on the qubits of the graph state, and not on the ancilla qubits. 

We show below that allowing Pauli measurements only on the qubits of the graph state is much weaker: sometimes a Pauli measurement on a single ancilla qubit is enough to transform the graph state into a union of disconnected quantum states with a small number of qubits, but many Pauli measurements on the qubits of the systems are needed to do so.

\begin{proposition}
\label{prop:PauliVersusAncilla}
    For any non-negative integers $t$ and $k$, there exists a graph state with 1-ancilla-integrity $\leq k+1$, but for which any $k$ one-qubit Pauli measurements lead to a quantum state with a connected component containing at least $t$ qubits. 
\end{proposition}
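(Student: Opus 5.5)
We construct the example explicitly, taking $G$ to be a single flip of a disjoint union of small graphs. Fix $t$ and $k$ and choose $m \ge \max(t, k+2)$ (later enlarged if needed). Let $G_0$ be the disjoint union of $m$ copies of a small connected graph; the simplest choice is $m$ disjoint copies of the path $P_{k+1}$. Each component then has exactly $k+1$ vertices. Choose one vertex $s_i$ from each copy, let $S=\{s_1,\dots,s_m\}$, and set $G \coloneqq G_0 \circ S$. So $G$ is $G_0$ together with a clique on $S$, and $G+G_0$ is a clique plus isolated vertices, which has minimum rank $1$.

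\textbf{Upper bound.} $G_0$ is a rank-$1$ perturbation of $G$, so its maximum component size bounds the $1$-rank-integrity of $G$ by $k+1$. By \cref{lem:flipAncillaComparison}, the $1$-ancilla-integrity is at most the $1$-rank-integrity, hence at most $k+1$. In the quantum picture, one ancilla adjacent exactly to $S$, measured in the appropriate basis, implements this flip.

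\textbf{Lower bound.} We use the graphical reformulation: $k$ Pauli measurements give a graph of the form $G'-X$, where $G'$ is locally equivalent to $G$ and $|X|\le k$. Since local complementation preserves cut-rank, we argue via cut-rank rather than tracking explicit graphs. Suppose every component of $G'-X$ has fewer than $t$ vertices. Then the vertex set $V\setminus X$ can be grouped into two sides $A,B$, each a union of components, with both sides containing many of the $s_i$. This grouping can be chosen greedily so that each side contains at least about $(m-k)/2 - t$ of the $s_i$ outside $X$.

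\textbf{Cut-rank argument.} In $G'$ there are no edges between $A$ and $B$, so the cut-rank of $(A\cup X_A, B\cup X_B)$ in $G'$ is at most $|X|\le k$ for any split of $X$. Cut-rank is invariant under local complementation, so the same bound holds in $G$. But in $G$, the clique on $S$ crossing the cut gives a submatrix equal to the all-ones matrix, which has rank $1$ only. So the clique alone does not suffice, and we must strengthen the construction.

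\textbf{Main obstacle.} We need the connecting structure to be robust under vertex-minor-type operations, and a single flip is inherently low rank. The fix is to use the private vertices inside each copy. Replace each copy with one in which cut-rank is large across any balanced separation: make the components of $G_0$ themselves of size $k+1$, but attach the flip set so that the number of distinct copies forced onto each side is more than $k$. Concretely, we lower-bound the cut-rank of $(A',B')$ in $G$ by exhibiting an induced matching of size greater than $k$ across the cut, or by computing the rank of $A'$-versus-$B'$ adjacency in $G$ directly: each copy split across the cut contributes an independent row. Any partition into parts of size below $t$ with $m \gg t$ must split or separate many copies. If it only separates whole copies, then it separates $S$, and the rank contribution is $1$. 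So we would instead aim for a clean statement: $m$ copies of $P_{k+1}$ glued by a clique still allow cutting off each whole copy.

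This suggests that the intended counting may instead need $t\le k+1$-type components to be measured. I expect the true proof to choose $G_0$ as a disjoint union of many copies of a graph of size $k+1$ whose flip-partner $G$ has vertex-minor connectivity. I would resolve this by testing the small case $k=0$ first, where $G$ is a clique: a clique is locally equivalent to a star, and one measurement disconnects it. So for $k$ measurements to be insufficient, $G$ must be built so that vertex-minors on $n-k$ vertices stay highly connected. The expected final choice is to take $G_0$ as a disjoint union of pieces of size $k+1$, each of which is $(k+1)$-connected in the rank sense, for example cliques with a twist, so that the cut-rank bound exceeds $k$. This cut-rank lower bound is the main step to verify.
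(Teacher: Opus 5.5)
Your upper-bound argument is fine, but the proposal contains no proof of the lower bound. It never settles on a construction for which the lower bound is established, and it ends by calling that bound ``the main step to verify''. Moreover, your starting construction is false, not merely unproven. Take $G=G_0\circ S$ with $G_0=mP_{k+1}$ and $S=\{s_1,\dots,s_m\}$, and let $A_1$ be the (at most two) path-neighbours of $s_1$. In $G*s_1$ the following happens:
\begin{itemize}
\item the set $S\setminus\{s_1\}$ becomes independent;
\item the vertices of $A_1$ become complete to $S\setminus\{s_1\}$;
\item no other adjacency between different copies changes.
\end{itemize}
So every edge of $G*s_1$ joining two different copies is incident to $X=\{s_1\}\cup A_1$. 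Deleting $X$ costs at most $3$ measurements and leaves only components of size at most $k+1$. Hence for $k\ge 3$ and $t>k+1$ your graph violates the required lower bound.

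Your proposed repair cannot work as stated either. You want every balanced separation of $G$ to have cut-rank larger than $k$. But suppose $G\circ S$ has all components of size at most $k+1$, and take any cut $(A,B)$ whose sides are unions of components of $G\circ S$. Then the adjacency of $G$ across the cut is the all-ones block on $(S\cap A)\times(S\cap B)$, which has rank at most $1$, and such cuts can be chosen balanced. More generally, by \cref{lem:flipAncillaComparison} any graph with $1$-ancilla-integrity at most $k+1$ has balanced cuts of cut-rank at most $2$. So the lower bound cannot come from the cut-rank of balanced cuts alone. It must use either how the deleted set $X$ sits inside $G$ or an explicit description of the local-equivalence class of $G$.

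The missing idea is to put \emph{every} vertex into the flip set, not one vertex per copy. The paper takes $m=\max(k+2,t)$ and lets $G$ be the complete $m$-partite graph with all parts of size $k+1$, i.e.\ the complement of $mK_{k+1}$.
\begin{itemize}
\item \emph{Upper bound.} Add a vertex $a$ adjacent to all of $V(G)$, locally complement at $a$, and delete $a$. This flips on $V(G)$ and yields $mK_{k+1}$, so the $1$-ancilla-integrity is at most $k+1$.
\item \emph{Lower bound.} The paper uses the analysis of the local-equivalence orbit of complete multipartite graphs from \cite[Lemma~7.3]{campbell2026erdHos}. From it one checks that deleting any $k$ vertices from any graph locally equivalent to $G$ leaves a component with at least $m\ge t$ vertices. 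Parts of size $k+1>k$ guarantee that no part can be deleted entirely.
\end{itemize}
Your guess of pieces of size $k+1$ such as cliques was on the right track; the crucial change is in how the pieces are glued together.
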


\begin{proof}
    Let $m = \max(k+2,t)$. Let $G$ be a complete $m$-partite graph, where each vertex set in the partition contains exactly $k+1$ vertices; so $G$ is of order $ m (k+1)$. An analysis of the orbit of $G$ by local complementations is given in~\cite[Lemma~7.3]{campbell2026erdHos}. One can check that deleting $k$ vertices from any locally equivalent graph leads to a graph with a connected component containing at least $m$ vertices. 
    
    Conversely, adding a vertex $a$ to $G$ connected to every vertex, then locally complementing on $a$, then deleting $a$, maps $G$ to a disjoint union of $m$ complete graphs with $k+1$ vertices.
\end{proof}

\subsection{Comparison with order-integrity}

In this subsection we take a step back from the graph states interpretation in order to compare rank-integrity with another known parameter. 

Recall that the \emph{$k$-order-integrity} of a graph $G$ is the minimum, over all sets $X$ of at most $k$ vertices, of the maximum component size of $G-X$. We show the following quick lemma which connects these two parameters in one direction.

\begin{lemma}
\label{lem:denseAnalogEasy}
    For any graph $G$ and any non-negative integer $k<|V(G)|$,\begin{align*}
(2k)\text{-rank-integrity}(G) \leq k\text{-order-integrity}(G).
\end{align*}
\end{lemma}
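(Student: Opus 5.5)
Fix a set $X$ of at most $k$ vertices that attains the $k$-order-integrity $t$, i.e.\ every component of $G-X$ has at most $t$ vertices. I will construct a looped graph $\widehat{G}$ of rank at most $2k$ such that in $G+\widehat{G}$ every vertex of $X$ is isolated (ignoring loops) and $G+\widehat{G}$ agrees with $G$ on $V(G)\setminus X$. Then the components of $G+\widehat{G}$ are the components of $G-X$ together with singletons $\{x\}$ for $x\in X$. Since $k<|V(G)|$, $G-X$ is nonempty, so $t\geq 1$, and every component has size at most $t$. This gives the claimed inequality.

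For the construction, let $\widehat{G}$ be the graph whose edges are exactly the edges of $G$ with at least one endpoint in $X$. Write its adjacency matrix $M$ in block form with respect to $(X, V\setminus X)$:
\begin{align*}
M=\begin{pmatrix} A_{X} & B\\ B^{T} & 0\end{pmatrix},
\end{align*}
where $A_X$ is the adjacency matrix of $G[X]$ and $B$ records the edges from $X$ to $V\setminus X$. Adding $M$ to the adjacency matrix of $G$ over $\text{GF}(2)$ deletes every edge incident to $X$ and leaves everything else unchanged, so $G+\widehat{G}$ is $G-X$ together with isolated vertices $X$.

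It remains to bound the rank. Every nonzero entry of $M$ lies in the rows indexed by $X$ or in the columns indexed by $X$. So $M=M_1+M_2$, where $M_1$ keeps the rows of $X$ (rank at most $|X|\leq k$) and $M_2$ keeps the columns of $X$ restricted to rows outside $X$ (rank at most $k$). Hence $\rk(M)\leq 2k$. No diagonal adjustment is needed, since $M$ has zero diagonal and the definition only requires the sum to have rank at most $2k$ before the diagonal is reset. I do not expect a real obstacle here. The only care points are that $k<|V(G)|$ ensures the singletons of $X$ do not exceed the maximum component size, and that the notion of component is the same whether or not loops are present.
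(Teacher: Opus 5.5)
Your proposal is correct and is essentially the paper's proof: remove all edges incident to $X$, note that the difference matrix is supported on at most $k$ rows and $k$ columns and hence has rank at most $2k$, and observe that the resulting components are those of $G-X$ together with singletons. Your explicit splitting $M=M_1+M_2$ and your remark that $k<|V(G)|$ keeps the singletons from exceeding the maximum component size just spell out steps the paper leaves implicit.
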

\begin{proof}
    Let $X \subseteq V(G)$ be a set of size at most $k$. Let $G'$ be the subgraph of $G$ which is obtained by removing all edges incident to a vertex in $X$.
    Then, all the nonzero entries of the adjacency matrix of $G+G'$ are contained in at most $k$ rows and $k$ columns. 
    Thus, this matrix has rank at most $2k$. Moreover, the components of the graph $G'$ are exactly the components of $G-X$ and the vertices of $X$ (each of which is an isolated vertex).
    Thus, all components of $G'$ have size at most the $k$-order-integrity of $G$, as desired.
\end{proof}

For graphs that are even slightly sparse, it turns out that the connection also works in the other direction: rank integrity and order integrity agree up to a change in the parameter. 

\begin{lemma}
\label{lem:denseAnalog}
    For any non-negative integers $k$ and $t$ and any graph $G$ which does not contain a $K_{t,t}$-subgraph,\begin{align*}
    (2^{2k} \cdot 3t)\text{-order-integrity}(G) \leq k\text{-rank-integrity}(G).
    \end{align*}
\end{lemma}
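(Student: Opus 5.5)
The plan is to write an optimal rank-$k$ perturbation as $G'=G+\widehat{G}$, where $\widehat{G}$ is a looped graph of rank at most $k$. From it we build a set $X$ of at most $2^{2k}\cdot 3t$ vertices with the following property:
\begin{itemize}
\item[($\star$)] every edge $uv$ of $G-X$ either is an edge of $G'$, or has $u$ and $v$ in the same component of $G'$.
\end{itemize}
Given ($\star$), every component of $G-X$ lies inside a single component of $G'$. So the maximum component size of $G-X$ is at most that of $G'$, which is the $k$-rank-integrity of $G$, and this gives the lemma.

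\textbf{Setup.} Let $\mathcal{P}$ be the type partition of $\widehat{G}$. Since $\widehat{G}$ has rank at most $k$, $\mathcal{P}$ has at most $2^k$ parts, so there are at most $2^{2k}$ ordered pairs $(P_i,P_j)$ of parts, with $i=j$ allowed. An edge of $G$ is missing from $G'$ only if its ends lie in parts $P_i,P_j$ that are adjacent in $\widehat{G}/\mathcal{P}$, or $P_i=P_j$ carries a loop. On such a pair, $G'$ is exactly the complement of $G$. Hence it suffices to handle each such pair separately, spending at most $3t$ vertices of $X$ on it.

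\textbf{Small pairs.} Fix such a pair. If $\min(|P_i|,|P_j|)<3t$, put the smaller part entirely into $X$. Then no bad edge between $P_i$ and $P_j$ survives in $G-X$.

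\textbf{Large pairs, $i\neq j$.} Otherwise $|P_i|,|P_j|\ge 3t$. Let $H$ be the bipartite complement of $G[P_i,P_j]$; it agrees with $G'$ between $P_i$ and $P_j$. The key claim is that $H$ has a component $L$ such that fewer than $t$ vertices of $P_i$, and fewer than $t$ vertices of $P_j$, lie outside $L$. We put these fewer than $2t$ vertices into $X$. Every surviving edge of $G$ between $P_i$ and $P_j$ then has both ends in $L$, which lies inside one component of $G'$, so ($\star$) holds for this pair.

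\textbf{Proving the claim (main step).} We use $K_{t,t}$-freeness in two stages.
\begin{itemize}
\item First, $H$ has a component meeting both sides in at least $t$ vertices. Suppose not, so every component is small on at least one side. Group the components into two unions $(A_1\cup B_1)$ and $(A_2\cup B_2)$ so that $A_1,B_2$ are both large, which is possible since $|P_i|,|P_j|\ge 3t$ and each component is small on some side. There are no $H$-edges between $A_1$ and $B_2$, so $G$ is complete between them, giving a $K_{t,t}$ in $G$. This is a contradiction.
\item Now let $L$ be that component, with $|L\cap P_i|,|L\cap P_j|\ge t$. All vertices of $P_i\setminus L$ are $H$-nonadjacent, hence $G$-complete, to $L\cap P_j$. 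So if $|P_i\setminus L|\ge t$ we again find a $K_{t,t}$ in $G$. Thus $|P_i\setminus L|<t$, and symmetrically $|P_j\setminus L|<t$.
\end{itemize}
The grouping argument in the first bullet is the step I expect to need the most care; it is also where the constant $3$ rather than $2$ gives slack.

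\textbf{Diagonal pairs.} The case $i=j$ is handled the same way. Now $H$ is the complement of $G[P_i]$. For each component $C$ of $H$, $G$ is complete between $C$ and $P_i\setminus C$. If every component had fewer than $t$ vertices, we could group components into two sides each of size at least $t$, giving a $K_{t,t}$ in $G$. So some component $L$ has at least $t$ vertices, and then $|P_i\setminus L|<t$ by the same argument. We put $P_i\setminus L$ into $X$, spending fewer than $t$ vertices.

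\textbf{Total.} Each pair costs at most $3t$ vertices and there are at most $2^{2k}$ pairs, so $|X|\le 2^{2k}\cdot 3t$. Together with ($\star$), this proves the lemma.
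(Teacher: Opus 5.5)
Your proof is correct and follows essentially the same route as the paper. For each pair of complete type classes you use $K_{t,t}$-freeness to show that all but fewer than $t$ vertices of each class lie in one component, and then you delete the rest; the only difference is that you take components of the bipartite complement of $G[P_i,P_j]$, whereas the paper takes the global components of $G+G^*$, phrased as a vertex cover of its graph $H(P,P')$.

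The grouping step you flag as needing care does go through, most cleanly by the paper's minimal-family argument. Take an inclusion-minimal family $I$ of components whose union meets $P_i$ in at least $t$ vertices. Then $K_{t,t}$-freeness leaves fewer than $t$ vertices of $P_j$ outside $\bigcup I$, and in turn fewer than $t$ vertices of $P_i$ outside $\bigcup I$. If $|I|\ge 2$, minimality forces $|P_i|<3t$, a contradiction, so $I$ is a single component $L$; this settles both of your bullets at once.
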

\begin{proof}
    For this proof we use some definitions that are introduced in \Cref{subsec:XPalgorithmOverview}. Let $G^*$ be a (looped) graph of rank at most $k$ which minimizes the maximum component size of $G+G^*$.
    Let ${\mathcal{P}}^*$ be the type partition of $G^*$, and note that $|{\mathcal{P}}^*| \leq 2^k$ (since a rank-$k$ binary matrix has at most $2^k$ columns; see \cref{obs:rank-quotient}).

    For $P, P' \in {\mathcal{P}}^*$ (not necessarily distinct), let $H(P, P')$ be the graph on vertex set $P \cup P'$ with an edge between $v \in P$ and $v' \in P'$ if and only if $v$ and $v'$ are adjacent in ${G}^*$ and in different components of $G+{G}^*$. (So first we restrict our attention to what happens between the components of $G+{G}^*$, and then we include an edge if that adjacency is ``flipped''.) If $vv'$ is an edge of $H(P, P')$, then $v$ and $v'$ are adjacent in ${G}^*$ but not in $G + {G}^*$ since they belong to different components, so they are also adjacent in $G$. Thus, $H(P, P')$ is a subgraph of $G$ so contains no $K_{t, t}$-subgraph. Next we show how to find a small vertex cover of each graph $H(P, P')$; then we put all of these covers together to obtain a set $X$ so that every component of $G-X$ is small.

    \begin{claim}\label{cl:vertex-cover}
        For any $P, P' \in {\mathcal{P}}^*$, the graph $H(P, P')$ has a vertex cover of size at most~$3t$.
    \end{claim}

    \begin{subproof}{\cref{cl:vertex-cover}} If $|P| \leq 3t$ or $|P'| \leq 3t$ then we are done, so assume that $|P|, |P'| > 3t$.
        Let $V_1, \ldots, V_r$ denote the vertex sets of the components of $G + {G}^*$, and write $H=H(P, P')$.
        Since every pair of vertices in $P$ and every pair of vertices in $P'$ are twins in $G^*$, the sets $P$ and $P'$ are either complete or anticomplete in ${G}^*$. (See \cref{lem:parts-complete-anti}.) If they are anticomplete then $H$ is edgeless and we are done, so suppose that $P$ and $P'$ are complete in~${G}^*$.

        Then, for all $i \neq j \in [r]$, $P \cap V_i$ and $P' \cap V_j$ form a complete bipartite graph in $H$.
        Let $I \subseteq [r]$ be inclusion-wise minimal such that $|P \cap \bigcup_{i \in I}V_i| \geq t$.
        Then, $|P' \cap \bigcup_{j \notin I}V_j| < t$ since $H$ is $K_{t, t}$-free.
        Since $|P'| \geq 3t$, this implies $|P' \cap \bigcup_{i \in I}V_i| \geq 2t$.
        In turn, this implies that $|P \cap \bigcup_{j \notin I}V_j| < t$ since $H$ is $K_{t, t}$-free.

        Suppose first that $I = \{i\}$ for some $i \in [r]$. Then, $(P \cup P') \cap \bigcup_{j \notin I}V_j$ is a vertex cover of $H$ of size at most $2t$. So, suppose now that $|I| \geq 2$. By the minimality of $I$, we have $|P \cap \bigcup_{i \in I}V_i| \leq 2t$. Then, $|P| \leq 3t$ and we are done.
    \end{subproof}

    For all $P, P' \in {\mathcal{P}}^*$, let $X(P, P')$ be a vertex cover of $H(P, P')$ of size at most $3t$.
    Let $X$ be the union of all such sets and note that $|X| \leq 2^{2k} \cdot 3t$.
    To complete the proof, we now argue that every component of $G-X$ is contained in a component of $G + {G}^*$.
    For this, it suffices to show that any two adjacent vertices in $G-X$ lie in the same component of $G+{G}^*$.
    So, let $v, v' \in V(G) \setminus X$ be adjacent in $G$.
    Let $P$ (resp. $P'$) be the part of ${\mathcal{P}}^*$ that contains $v$ (resp. $v'$). Since $v$ and $v'$ are not in the chosen cover $X(P, P')$ of $H(P,P')$, either they are not adjacent in ${G}^*$ or they are in the same component of $G+{G}^*$. Since they are adjacent in $G$, if they are not adjacent in ${G}^*$ then they are in the same component of $G+{G}^*$. Thus, in any case they lie in the same component of $G+{G}^*$, which concludes the proof.
\end{proof}

\section{An \XP{} algorithm for \textsc{rank integrity}}
\label{sec:XPalgorithm}

\subsection{Basic tools}

Recall that the type partition of a looped graph $G$ is the partition where two vertices are in the same class if and only if they have the same neighborhood in $G$.

\begin{lemma}\label{lem:parts-complete-anti}
    Let $G$ be a looped graph and let $\mathcal{P}$ be the type partition of $G$. If $P_1, P_2 \in \mathcal{P}$ then $P_1$ and $P_2$ are either complete or anticomplete in $G$.
\end{lemma}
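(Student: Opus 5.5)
The plan is to argue directly from the definition of the type partition: every vertex of $P_2$ has the same neighborhood $N_2$, and every vertex of $P_1$ has the same neighborhood $N_1$. We then split into two cases according to whether some vertex of $P_1$ lies in $N_2$.

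First, fix any $u \in P_1$. For every $v \in P_2$ we have $N(v) = N_2$, so $u$ is adjacent to $v$ if and only if $u \in N_2$. This condition does not depend on the choice of $v$. Next, we use that adjacency is symmetric in a looped graph: $u \in N(v)$ if and only if $v \in N(u) = N_1$. So $u \in N_2$ holds either for every $u \in P_1$ or for none. To see this, take $u, u' \in P_1$ and any $v \in P_2$. Then $u \in N_2$ means $u \in N(v)$, which means $v \in N(u) = N(u')$, which means $u' \in N(v) = N_2$. If $u \in N_2$ for all $u \in P_1$, every pair $(u,v) \in P_1 \times P_2$ is an edge and the sets are complete. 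Otherwise no pair is an edge and they are anticomplete.

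One detail needs checking: the case $P_1 = P_2$, where the relevant pairs include $u=v$ (loops) and the claim says $P_1$ is a clique with all loops or an independent set with no loops. The same chain of equivalences applies verbatim with $v$ ranging over $P_1$, including $v=u$, since a loop at $u$ means $u \in N(u)$. So no separate argument is needed, provided we read ``complete'' as including loops when $P_1 = P_2$, consistent with how $H/\mathcal{P}$ allows looped vertices.

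I do not expect a real obstacle. The only point requiring care is the symmetry step, which transfers uniformity from $P_2$'s side to $P_1$'s side, together with the loop convention just described.
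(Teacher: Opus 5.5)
Your proof is correct and follows essentially the same route as the paper's. Both arguments apply the twin property on one side, then symmetry of adjacency, then the twin property on the other side, which is exactly the paper's chain $u_1u_2 \in E(G) \iff u'_1u_2 \in E(G) \iff u'_1u'_2 \in E(G)$. Your handling of the case $P_1 = P_2$, with loops included, also matches how the paper uses the lemma for the looped quotient graph.
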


\begin{proof}
    Let $P_1, P_2 \in \mathcal{P}$ and let $u_1, u'_1 \in P_1$ and $u_2, u'_2 \in P_2$.
    Since $u_1 \sim u'_1$ and $u_2 \sim u'_2$, we have \[u_1u_2 \in E(G) \iff u'_1u_2 \in E(G) \iff u'_1u'_2 \in E(G).\]
    This shows that $P_1$ and $P_2$ are either complete or anticomplete in $G$.
\end{proof}

Recall that the quotient graph $G/\mathcal{P}$ is the looped graph on vertex set $\mathcal{P}$ where $P_1, P_2 \in \mathcal{P}$ are adjacent if and only if there is an edge between them in $G$ (in which case they are complete in $G$ by \cref{lem:parts-complete-anti}).
Note that $G/\mathcal{P}$ does not have twins by definition of $\mathcal{P}$; where we say that two vertices of a looped graph are twins if they have the same neighborhood.
Since $G/\mathcal{P}$ is an induced subgraph of $G$ and since $G$ can be obtained from $G/\mathcal{P}$ by substituting each looped vertex by a clique with a loop on each vertex and each non-looped vertex by a stable set, we have the following.

\begin{observation}\label{obs:rank-quotient}
    Let $G$ be a looped graph and let $\mathcal{P}$ be the type partition of $G$.
    Then, $G$ has rank at most $k$ if and only if $G/\mathcal{P}$ has rank at most $k$.
    Furthermore, if $G$ has rank at most $k$ then $|\mathcal{P}| \leq 2^k$.
\end{observation}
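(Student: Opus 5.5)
The plan is to write the adjacency matrix of $G$ in block form according to $\mathcal{P}$ and compare it with the adjacency matrix $M$ of $G/\mathcal{P}$. Let $p=|\mathcal{P}|$ and let $B$ be the $|V(G)|\times p$ binary incidence matrix whose column for $P\in\mathcal{P}$ is the indicator vector of $P$. By \cref{lem:parts-complete-anti}, and the definition of the quotient, two vertices $u\in P_1$ and $v\in P_2$ are adjacent in $G$ exactly when $P_1P_2$ is an edge (or a loop, if $P_1=P_2$) of $G/\mathcal{P}$. This includes the case $u=v$: a vertex of $P$ carries a loop iff $P$ is looped in the quotient, because all vertices of $P$ have the same neighbourhood. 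Hence the adjacency matrix of $G$ equals $BMB^{T}$ over $\text{GF}(2)$.

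The columns of $B$ have disjoint nonempty supports, so they are linearly independent. Thus $B$ has full column rank and admits a left inverse $L$ with $LB=I_p$; similarly $B^{T}$ has a right inverse. Then $\rk(BMB^{T})\le \rk(M)$ trivially, and $M=L(BMB^{T})L^{T}$ gives $\rk(M)\le\rk(BMB^{T})$. So $\rk(G)=\rk(G/\mathcal{P})$, which gives the first claim in both directions. Alternatively, one can avoid matrix factorization. Choosing one representative per part shows that $M$ is a principal submatrix of the adjacency matrix of $G$, which gives one direction. For the other direction, the adjacency matrix of $G$ is obtained from $M$ by duplicating rows and columns, which does not change rank.

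For the bound on $|\mathcal{P}|$, the key fact is that $G/\mathcal{P}$ has no twins: distinct parts have distinct neighbourhoods by the definition of $\sim$. So the $p$ columns of $M$ are pairwise distinct. They all lie in the column space of $M$, which has dimension at most $k$ over $\text{GF}(2)$ and so contains at most $2^k$ vectors. Hence $p\le 2^k$. An equivalent route is to observe directly that distinct parts correspond to distinct columns of the adjacency matrix of $G$, which lie in a space of size at most $2^{\rk(G)}\le 2^k$.

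I do not expect a real obstacle. The only point needing care is the diagonal, namely that loops behave consistently within a part. This holds because $v\in N(v)$ iff $v$ is looped, and all vertices of a part share the same neighbourhood, so either all or none of them are looped.
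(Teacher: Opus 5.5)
Your proposal is correct and follows the paper's reasoning. The paper observes that $G/\mathcal{P}$ is an induced subgraph of $G$ (a principal submatrix) and that $G$ is recovered from it by substituting looped cliques and stable sets (duplicating rows and columns); this is exactly your alternative route, and your factorization $BMB^{T}$ with a representative-picking left inverse $L$ encodes the same two facts. The bound $|\mathcal{P}|\le 2^k$ likewise comes, as you argue, from $G/\mathcal{P}$ being twin-free, so its pairwise distinct columns lie in a space containing at most $2^k$ vectors.
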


We now introduce a problem which will serve as a subroutine in several of our algorithms.
In the \textsc{weighted list balancing} problem, we are given as input a number $m$, items $v_1, \ldots, v_n$, each with a weight $w_i$ and a list $L_i \subseteq [m]$. 
The goal is to find an assignment $\alpha : [n] \to [m]$ such that $\alpha(i) \in L_i$ for every $i \in [n]$, that minimizes \[\max_{j \in [m]}\sum_{i : \alpha(i) = j}w_i.\]

This problem can be easily solved by dynamic programming, see for instance \cite[Exercise 8.18]{LS}. When the weights are given in unary, this gives an efficient algorithm for this problem.

\begin{lemma}\label{lem:list-balancing-easy}
    There is an algorithm solving \textsc{weighted list balancing} in time $\mathcal{O}(mnS^{m-1})$, where $S = \sum_{i \in [n]} w_i$.
\end{lemma}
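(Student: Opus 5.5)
We prove the lemma with a standard dynamic program over the items in order, whose states record the vector of current bin loads. Because only the maximum load matters, the states have to range over load vectors for $m-1$ bins only, which gives the factor $S^{m-1}$ rather than $S^m$.

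We process the items $v_1,\dots,v_n$ in order. For each $i\in\{0,\dots,n\}$ and each vector $(s_1,\dots,s_{m-1})\in\{0,\dots,S\}^{m-1}$, we define
\[
T_i(s_1,\dots,s_{m-1})
\]
to be the minimum possible load $\sum_{i'\le i:\alpha(i')=m}w_{i'}$ of bin $m$. The minimum is taken over all partial assignments $\alpha:[i]\to[m]$ such that $\alpha(i')\in L_{i'}$ for every $i'\le i$, and such that bin $j$ receives load exactly $s_j$ for each $j<m$. We set $T_i=\infty$ if no such assignment exists.

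The base case is $T_0(0,\dots,0)=0$, with all other entries of $T_0$ equal to $\infty$. The recurrence is
\[
T_i(s)=\min\Bigl(\{T_{i-1}(s)+w_i : m\in L_i\}\cup\{T_{i-1}(s-w_ie_j) : j\in L_i\setminus\{m\},\ s_j\ge w_i\}\Bigr),
\]
where $e_j$ is the $j$-th unit vector.

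Correctness follows by induction on $i$. Every valid partial assignment of the first $i$ items consists of a valid partial assignment of the first $i-1$ items followed by a choice $\alpha(i)\in L_i$. The recurrence considers exactly these extensions, and it keeps only the minimum load on bin $m$. Keeping only the minimum is safe because the objective is monotone in each coordinate: for a fixed vector $s$, a smaller load on bin $m$ is never worse.

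At the end, the answer is
\[
\min_s \max\bigl(s_1,\dots,s_{m-1},T_n(s)\bigr),
\]
taken over all $s$ with $T_n(s)<\infty$. An optimal assignment can be recovered by storing back-pointers.

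For the running time, there are $n+1$ tables. Each table has at most $(S+1)^{m-1}$ entries, and each entry is computed by taking a minimum over at most $m$ options. This gives $\mathcal{O}(mn(S+1)^{m-1})$ time, which is $\mathcal{O}(mnS^{m-1})$ for fixed $m$ (and $S\ge1$). The final minimization adds only $\mathcal{O}(m S^{m-1})$ time. Arithmetic on numbers of value at most $S$ counts as constant time.

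The only subtle point is the bookkeeping that uses $m-1$ rather than $m$ coordinates, so that the stated exponent $m-1$ is achieved. This is handled by the observation above that storing the minimum load of the last bin loses no information relevant to the objective.
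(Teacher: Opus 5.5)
Your dynamic program is correct. It is the standard algorithm the paper has in mind when it cites the scheduling literature instead of writing out a proof: track the loads of bins $1,\dots,m-1$ and store the minimum load of bin $m$, as in the classical pseudo-polynomial algorithm for unrelated machines. The only slack, $(S+1)^{m-1}$ versus $S^{m-1}$, does not matter for any use of the lemma in the paper, since there $m=2$ or $m$ is bounded by a function of $k$.
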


\subsection{Notions for the algorithm}

Recall that in the \textsc{rank integrity} problem, we are given as input a graph $G$ and an integer $k$, and our goal is to find a rank-$k$ perturbation $G' = G + \widehat{G}$ of $G$ that minimizes the maximum component size of $G'$.

Note that there could be several rank-$k$ perturbations of $G$ whose maximum component size is minimum.
Among all such perturbations $G'$, we are going to focus on one where $\widehat{G}$ is ``as simple as possible'' in the following sense: for any two vertices $v, v'$ that are in the same component of $G'$ but do not have the same neighborhood in $\widehat{G}$, there is a vertex outside of their component in $G'$ on which they disagree.
This intuition is captured by the following definitions.

Let $G, \widehat{G}$ be two looped graphs on the same vertex set. 
Let $\widehat{\mathcal{P}}$ be the type partition of $\widehat{G}$.
The \emph{score} of $\widehat{G}$ with respect to $G$ is the sum over all components $K$ of $G + \widehat{G}$ of the number of parts of $\widehat{\mathcal{P}}$ that intersect $V(K)$.
We say that $\widehat{G}$ is \emph{$k$-optimal} for $G$ if $G+\widehat{G}$ is a rank-$k$ perturbation of $G$ that \begin{itemize}
    \item minimizes the maximum component size of $G+\widehat{G}$, 
    \item among all such $\widehat{G}$, maximizes the number of components of $G+\widehat{G}$, and
    \item among all such $\widehat{G}$, has minimum score with respect to $G$.
\end{itemize} 

\begin{lemma}\label{lem:distinguished-outside}
    Let $G$ be a graph and let $\widehat{G}$ be $k$-optimal for $G$. Let $\widehat{\mathcal{P}}$ be the type partition of $\widehat{G}$.
    Let $P \neq P' \in \widehat{\mathcal{P}}$ and let $u \in P, u' \in P'$ belong to the same component $K$ of $G+\widehat{G}$.
    Then, there exists $v \in V(G)\setminus K$ that is adjacent to exactly one of $u, u'$ in $\widehat{G}$.
\end{lemma}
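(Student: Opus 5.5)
We argue by contradiction, using the third (score) criterion in the definition of $k$-optimality. Suppose $u\in P$ and $u'\in P'$ lie in the same component $K$ of $G+\widehat{G}$, with $P\neq P'$, and that every $v\in V(G)\setminus V(K)$ is adjacent in $\widehat{G}$ to both or neither of $u,u'$. The plan is to modify $\widehat{G}$ only inside $K$ so that $u$ gets exactly the $\widehat{G}$-neighbourhood of $u'$. We will check that the new looped graph $\widehat{G}'$ still has rank at most $k$, and that $G+\widehat{G}'$ has the same component partition as $G+\widehat{G}$. Then $\widehat{G}'$ ties with $\widehat{G}$ on the first two criteria but has strictly smaller score, which is the desired contradiction.

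\textbf{Construction.} Choose a single vertex $u$ (or, more robustly, all of $P\cap V(K)$) and move it into the class $P'$. Concretely, define $\widehat{G}'$ as follows. For every $x\in P\cap V(K)$ and every $y$, set $x\sim y$ in $\widehat{G}'$ iff $u'\sim y$ in $\widehat{G}$, and symmetrise. Loops and adjacencies among the moved vertices are set as for $P'$ with itself.

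\textbf{Rank.} The adjacency matrix of $\widehat{G}'$ is obtained by replacing some rows and columns with copies of the row and column of $u'$. Hence its column space is contained in that of $\widehat{G}$, and the rank does not increase. More carefully, $\widehat{G}'$ is an induced "blow-up" of $\widehat{G}-(P\cap V(K))$, since its vertices are twins of existing vertices. So its quotient is an induced subgraph of $\widehat{G}/\widehat{\mathcal{P}}$, and by \cref{obs:rank-quotient} its rank is at most $k$. (If $u'$ itself lies in $P\cap V(K)$ this does not happen, because $P\neq P'$.)

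\textbf{Components.} Every changed entry has one endpoint $x\in P\cap V(K)$. If the other endpoint $y\notin V(K)$, then by assumption (together with the fact that twins in $P$ agree) $x\sim y$ and $u'\sim y$ coincide in $\widehat{G}$, so nothing changes between $V(K)$ and the outside. Thus $V(K)$ is still a union of components of $G+\widehat{G}'$, and the other components are untouched.

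\textbf{Main obstacle.} $K$ might split under the modification, since edges inside $K$ changed. If it splits, the maximum component size does not increase and the number of components strictly increases, which contradicts the second criterion; if it stays connected, we move on to the score. So in either case we keep or improve the first criterion.

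\textbf{Score.} Assume $K$ remains a single component. The parts of the new type partition meeting $K$ are the old ones minus $P$ (which no longer meets $K$), possibly merged further. Other components $K'$ might meet $P$; the type of the moved vertices changes, but vertices in $P\setminus V(K)$ keep their old neighbourhood except on the moved vertices. Here we must check that the new classes meeting $K'$ are no more numerous than before. The point to verify is that classes can only merge or shrink and never split. A class could split if two twins $a,b$ now disagree on a moved vertex $x$. Since $a\sim x$ became $a\sim u'$, and $a,b$ were twins, they agree on $u'$, so no split occurs. Hence the score of components other than $K$ does not increase, and that of $K$ drops by at least one. This contradicts $k$-optimality.
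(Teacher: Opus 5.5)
Your proof is correct and is essentially the paper's argument. The paper moves $P'\cap V(K)$ into $P$ rather than $P\cap V(K)$ into $P'$, which is symmetric, and your check that type classes can only shrink or merge makes explicit the score decrease the paper merely asserts. Note that moving only the single vertex $u$ would not lower the score when $|P\cap V(K)|\ge 2$, and your ``column space'' remark is not literally right, but your actual construction (all of $P\cap V(K)$) and your blow-up rank argument are fine.
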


\begin{proof}
    By contradiction, suppose that every $v \in V(G)$ that is adjacent to exactly one of $u, u'$ in $\widehat{G}$ belongs to $V(K)$.
    Consider the graph $\widetilde{G}$ on vertex set $V$ obtained from $\widehat{G}$ by setting the neighborhood of all vertices in $P' \cap V(K)$ to the neighborhood of $u$ in $\widehat{G}$ (equivalently, this amounts to moving them from $P'$ to $P$).
    Then, the only differences between $G+\widehat{G}$ and $G+\widetilde{G}$ are edges inside $V(K)$, so every component of $G+\widehat{G}$ is the union of components of $G+\widetilde{G}$.
    Note that the quotient graph $\widetilde{G}/\widetilde{\mathcal{P}}$ is an induced subgraph of the quotient graph of $\widehat{G}/\widehat{\mathcal{P}}$ (with the obvious notations), so $\widetilde{G}$ has rank at most $k$ by \cref{obs:rank-quotient}.
    Since the partition into components of $G+\widehat{G}$ minimizes the maximum component size, it follows that $G+\widehat{G}$ and $G+\widetilde{G}$ have the same maximum component size. 
    Since among all such graphs, $G+\widehat{G}$ has the maximum number of components, it follows that $G+\widehat{G}$ and $G+\widetilde{G}$ have the same components. 
    Then, $\widetilde{G}$ has smaller score than $\widehat{G}$ with respect to $G$, a contradiction to $\widehat{G}$ being $k$-optimal.
\end{proof}

In our algorithm for \textsc{rank integrity}, we will try to reconstruct a $k$-optimal $\widehat{G}$ for $G$ along with the partition into components of $G+\widehat{G}$.
To reconstruct the graph $\widehat{G}$, we will ``guess'' the quotient graph $\widehat{G}/\widehat{\mathcal{P}}$, where $\widehat{\mathcal{P}}$ is the type partition of $\widehat{G}$, and try to recover the partition $\widehat{\mathcal{P}}$.
To obtain $\widehat{\mathcal{P}}$ and the partition into components of $G+\widehat{G}$, we will ``guess'' some vertices from the intersection of some of the components of $G+\widehat{G}$ and the parts of $\widehat{\mathcal{P}}$ and try to reconstruct the partition into components around them. 
Thankfully, $\widehat{\mathcal{P}}$ only consists of a bounded number of parts.
However, there is no reason why the number of components of $G+\widehat{G}$ should be bounded, so we cannot ``guess'' vertices from all the components of $G+\widehat{G}$.
For this reason, we will carefully select a ``characteristic'' set of components of $G+\widehat{G}$ from which we will ``guess'' our vertices.
This is formalized in the next definitions.

Let $G, \widehat{G}$ be two looped graphs on the same vertex set and let $\widehat{\mathcal{P}} = \{P_1, \ldots, P_p\}$ be the type partition of $\widehat{G}$.
The \emph{signature} of a component $K$ of $G+\widehat{G}$ is the set $\{P \in \widehat{\mathcal{P}} : P \cap V(K) \neq \emptyset\}$.
A set $\mathcal{K}$ of components of $G+\widehat{G}$ is \emph{typical} if it is a maximum set of components of $G+\widehat{G}$ no four of which have the same signature. 
In other words, for each signature, $\mathcal{K}$ contains three components with that signature if at least three such components exist, and contains all of them otherwise. 
If $\widehat{G}$ has rank at most $k$, \cref{obs:rank-quotient} implies that $p = |\widehat{\mathcal{P}}| \leq 2^k$, so there are at most $2^{2^k}$ possible signatures for a component. 
Since no four components of $\mathcal{K}$ have the same signature, it follows that $|\mathcal{K}| \leq 3 \cdot 2^{2^k}$ for any typical set of components.
Let $\mathcal{K} \eqqcolon \{K_1, \ldots, K_t\}$ be a typical set of components of $G+\widehat{G}$.
Let $\mathcal{R} \coloneqq \{(i, j) \in [t] \times [p] : V(K_i) \cap P_j \neq \emptyset\}$.
Note that for every $j \in [p]$, there exists $i \in [t]$ such that $(i, j) \in \mathcal{R}$.
For each $(i, j) \in \mathcal{R}$, pick an arbitrary \emph{representative} $v_{i, j} \in V(K_i) \cap P_j$, and let $R$ denote the family $(v_{i, j})_{(i, j) \in \mathcal{R}}$.
We say that the tuple $(t, p, \mathcal{R}, R, \widehat{G}/\widehat{\mathcal{P}})$ is a \emph{typical system of representatives} for $(G, \widehat{G})$.

\subsection{The algorithm}

We now turn to the main result of this section, which is restated below for convenience.

\thmAlgorithm*

We provide a high-level description of the algorithm for \cref{thm:algo-XP} in \cref{alg:pseudocode}. Formally, each ``Guess'' step corresponds to a brute-force search over all possibilities, and ``reject the guess'' means that the current branch is discarded.
The algorithm returns the optimal value over all non-rejected branches.

\begin{algorithm}
    \caption{Roadmap of the algorithm for \cref{thm:algo-XP}}\label{alg:pseudocode}
    \label{algo-XP}
    \begin{algorithmic}[1]
    \Require Graph $G$ and integer $k$
    \State Guess a possible typical system of representatives $(t, p, \mathcal{R}, R, H)$
    \State If $R$ is not consistent with $H$: reject the guess \label{line:consistent}
    \State Compute the compatibility function $f$
    \State If $f$ is not faithful or not functional: reject the guess \label{line:faithful-functional}
    \State Compute the strong conflict classes $\mathcal{U}$ and the set $X$ of crossing tuples
    \State Guess a crossing solver $\phi$ such that for every $x \in X$, we have $\phi(x) \in W_x \cup W'_x \cup \{\bot\}$
    \State Compute the updated compatibility function $\hat{f}$ and the eligibility map $\tau$ on $\mathcal{U}$
    \State Compute a suitable partition $\mathcal{C}$ of minimum width using \cref{lem:list-balancing-easy}
    \State If no such partition exists: reject the guess
    \State Compute $\psi : V(G) \to V(H)$ and build $\widehat{G}$
    \State If the partition into components of $G+\widehat{G}$ does not refine $\mathcal{C}$: reject the guess
    \State Return the width of $\mathcal{C}$
    \end{algorithmic}
\end{algorithm}

\begin{proof}
    Suppose that we are given an instance $(G, k)$ of \textsc{rank integrity}.
    For the rest of this proof, we fix a looped graph $G^*$ that is $k$-optimal for $G$ and a typical system of representatives $(t^*, p^*, \mathcal{R}^*, R^*, G^*/\mathcal{P}^*)$ for $(G, G^*)$. 
    We denote by $\mathcal{K}^* = \{K_1, \ldots, K_{t^*}\}$ the typical set of components of $G+G^*$ that corresponds to this typical system of representatives and by $K_1, \ldots, K_{c^*}$ the components of $G+G^*$.
    Note that $t^*$ is the number of components in the typical set of components $\mathcal{K}^*$, while $c^*$ is the number of components of $G+G^*$.
    We also write $\mathcal{P}^* = \{P_1, \ldots, P_{p^*}\}$, where $p^*$ is the number of parts of $\mathcal{P}^*$.
    Our goal is to reconstruct the partition into components of $G+G^*$ along with the graph $G^*$. 
    For this, we want to start from the typical system of representatives $(t^*, p^*, \mathcal{R}^*, R^*, G^*/\mathcal{P}^*)$.
    Thus, the first step of the algorithm is to guess: \begin{itemize}
        \item an integer $t \leq 3 \cdot 2^{2^k}$,
        \item an integer $p \leq 2^k$,
        \item a set $\mathcal{R} \subseteq [t] \times [p]$ such that for every $j \in [p]$, there exists $i \in [t]$ such that $(i, j) \in \mathcal{R}$,
        \item a family $R = (v_{i, j})_{(i, j) \in \mathcal{R}}$ of vertices of $G$, and
        \item a looped graph $H$ on vertex set $[p]$ of rank at most $k$ that does not contain twins.
    \end{itemize}     
    Note that $(t^*, p^*, \mathcal{R}^*, R^*, G^*/\mathcal{P}^*)$ indeed satisfies all these conditions.
    
    Throughout the proof, we will establish properties regarding $(t^*, p^*, \mathcal{R}^*, R^*, G^*/\mathcal{P}^*)$ for $(G, G^*)$. The algorithm will verify that the guess $(t, p, \mathcal{R}, R, H)$ satisfies these properties, and abort if it is not the case.

    \paragraph*{Consistency.} We start with a first property.
    We say that $R$ is \emph{consistent} with $H$ if for all $v_{i, j}, v_{i', j'} \in R$ with $i \neq i'$, we have $v_{i, j}v_{i', j'} \in E(G) \iff jj' \in E(H)$.
    
    \begin{claim}\label{cl:consistent}
        $R^*$ is consistent with $G^*/\mathcal{P}^*$.
    \end{claim}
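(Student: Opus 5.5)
The claim follows directly from the definitions, so the plan is a short unwinding of them. Take two representatives $v_{i,j}, v_{i',j'} \in R^*$ with $i \neq i'$. By construction $v_{i,j} \in V(K_i) \cap P_j$ and $v_{i',j'} \in V(K_{i'}) \cap P_{j'}$, where $K_i \neq K_{i'}$ are distinct components of $G+G^*$ (the typical set $\mathcal{K}^*$ consists of distinct components). Hence $v_{i,j}$ and $v_{i',j'}$ lie in different components of $G+G^*$, and in particular they are non-adjacent in $G+G^*$. Note that they are distinct vertices, since the components are disjoint.

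Next, translate non-adjacency in $G+G^*$ into a statement about $G$ and $G^*$. The edge set of $G+G^*$ is the symmetric difference of the edge sets of $G$ and $G^*$, so for two distinct vertices, non-adjacency in $G+G^*$ means that $v_{i,j}v_{i',j'} \in E(G)$ if and only if $v_{i,j}v_{i',j'} \in E(G^*)$. Loops play no role here because the vertices are distinct.

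Finally, pass to the quotient. Since $v_{i,j} \in P_j$ and $v_{i',j'} \in P_{j'}$, \cref{lem:parts-complete-anti} shows that $P_j$ and $P_{j'}$ are either complete or anticomplete in $G^*$. Combined with the definition of the quotient graph $G^*/\mathcal{P}^*$, identified with a looped graph on $[p^*]$ via $P_j \mapsto j$, this gives $v_{i,j}v_{i',j'} \in E(G^*)$ if and only if $jj' \in E(G^*/\mathcal{P}^*)$. This equivalence also covers the case $j = j'$, where it concerns whether $j$ carries a loop. Chaining the two equivalences yields the consistency condition. There is no genuine obstacle; the only point needing care is this case $j=j'$, where one must observe that the adjacency between two distinct vertices of the same type class $P_j$ is recorded precisely by the loop at $j$ in the quotient.
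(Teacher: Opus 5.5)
Your proposal is correct and follows the paper's argument: representatives with $i \neq i'$ lie in distinct components of $G+G^*$, so they are non-adjacent there, so their adjacency in $G$ equals their adjacency in $G^*$, which is recorded by $jj'$ in $G^*/\mathcal{P}^*$. Your extra care with the case $j=j'$ (adjacency of two distinct vertices of $P_j$ is encoded by the loop at $j$) is a valid detail that the paper leaves implicit.
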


    \begin{subproof}{\cref{cl:consistent}}
        Let $v_{i, j}, v_{i', j'} \in R^*$ with $i \neq i'$. 
        Since $v_{i, j}$ and $v_{i', j'}$ belong to different components of $G+G^*$, they are not adjacent in $G+G^*$. 
        Therefore, $v_{i, j}v_{i', j'} \in E(G) \iff v_{i, j}v_{i', j'} \in E(G^*) \iff jj' \in E(G^*/\mathcal{P}^*)$.
    \end{subproof}

    The algorithm verifies that $R$ is consistent with $H$, and aborts if it is not the case.
    We then start trying to construct the type partition $\mathcal{P}^*$ of $G^*$ and the partition into components of $G+G^*$.
    We begin with an observation.
    
    Let $v \in V(G)$ and $v_{i, j} \in R^*$ and suppose that $v$ and $v_{i, j}$ belong to the same component $K_i$ of $G+G^*$ and to the same part $P_j$ of the type partition $\mathcal{P}^*$ of $G^*$. 
    Then, $v$ and $v_{i, j}$ have the same neighborhood in $G^*$, and they are not adjacent in $G+G^*$ to any vertex $v_{i', j'}$ for $i' \neq i$. 
    Thus, $v$ and $v_{i, j}$ have the same neighborhood in $G$ on all the vertices $v_{i', j'} \in R^*$ with $i' \neq i$.
    
    \paragraph*{Compatibility.} With this idea in mind, we say that a vertex $v \in V(G)$ is \emph{compatible} with a vertex $v_{i, j} \in R$ if $v$ and $v_{i, j}$ have the same neighborhood in $G$ on all the vertices $v_{i', j'} \in R$ with $i' \neq i$.
    We then define the \emph{compatibility function} $f : V(G) \to 2^R$ as follows: \begin{itemize}
        \item If $v \in R$ then $f(v) = \{v\}$, 
        \item If $v \notin R$ then $f(v) = \{v_{i, j} \in R : \text{$v$ is compatible with $v_{i, j}$}\}$.
    \end{itemize}
    We denote by $f^*$ the compatibility function corresponding to $(t^*, p^*, \mathcal{R}^*, R^*, G^*/\mathcal{P}^*)$.
    Intuitively, for every vertex $u \in V(G)$, we should think of $f(u)$ as describing the parts of $\mathcal{P}^*$ and the components of $G+G^*$ (more precisely, those among $K_1, \ldots, K_{t^*}$) to which $u$ can belong.

    \paragraph*{Faithfulness.}
    We say that $f$ is \emph{faithful} if $|f(u)| \geq 1$ for every $u \in V(G)$.
    This means that for every $u \in V(G)$, there is some component $K_i \in \mathcal{K}^*$ with and some part $P_j$ of $\mathcal{P}^*$ where $u$ can go.

    \begin{claim}\label{cl:f*-faithful}
        Let $u \in V(G)$, let $i_u \in [c^*]$ such that $u \in V(K_{i_u})$ and let $j_u \in [p^*]$ such that $u \in P_{j_u}$.
        If $i_u \in [t^*]$ then $u$ is compatible with $v_{i_u, j_u}$.
        If $i_u \notin [t^*]$ then $u$ is compatible with $v_{i, j_u}$ for all $i \in [t^*]$ such that $(i, j_u) \in \mathcal{R}^*$.
        In particular, $f^*$ is faithful.
    \end{claim}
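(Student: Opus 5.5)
The proof is a direct unwinding of the definition of compatibility. We first isolate the one fact it rests on: for any two vertices $x,y$ lying in different components of $G+G^*$, $xy$ is a non-edge of $G+G^*$. Hence $xy \in E(G) \iff xy \in E(G^*)$. Combined with the definition of the type partition, adjacency in $G$ across different components then depends only on the parts of $\mathcal{P}^*$ containing $x$ and $y$. This is exactly the computation already used in \cref{cl:consistent}.

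\textbf{Case $i_u \in [t^*]$.} Since $u \in P_{j_u} \cap V(K_{i_u})$, we have $(i_u,j_u)\in\mathcal{R}^*$, so the representative $v_{i_u,j_u}$ exists. If $u = v_{i_u,j_u}$, then $u \in R^*$ and $f^*(u)=\{u\}$, so there is nothing to prove. (Note also that if $u\in R^*$, then $u$ is the representative of its own component and part, so this subcase covers all of $R^*$.)

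Otherwise, take any $v_{i',j'}\in R^*$ with $i'\neq i_u$. Both $u$ and $v_{i_u,j_u}$ lie in $K_{i_u}$, while $v_{i',j'}$ lies in $K_{i'}\neq K_{i_u}$. Applying the fact above twice, we get
\[uv_{i',j'}\in E(G)\iff uv_{i',j'}\in E(G^*)\iff v_{i_u,j_u}v_{i',j'}\in E(G^*)\iff v_{i_u,j_u}v_{i',j'}\in E(G).\]
The middle equivalence holds because $u$ and $v_{i_u,j_u}$ are in the same part $P_{j_u}$, so they have the same neighborhood in $G^*$. Thus $u$ is compatible with $v_{i_u,j_u}$.

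\textbf{Case $i_u\notin[t^*]$.} Here $u\notin R^*$, since every representative lies in one of $K_1,\dots,K_{t^*}$. Fix $i\in[t^*]$ with $(i,j_u)\in\mathcal{R}^*$, and take any $v_{i',j'}\in R^*$ with $i'\neq i$. Then $v_{i',j'}$ lies in $K_{i'}$ with $i'\in[t^*]$, so it is in a different component from $u$, whose component index is outside $[t^*]$. It is also in a different component from $v_{i,j_u}$, because $i'\neq i$. The same chain of equivalences therefore applies, with $v_{i,j_u}$ in place of $v_{i_u,j_u}$, and gives compatibility.

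\textbf{Faithfulness.} In the first case, $f^*(u)$ contains $v_{i_u,j_u}$ (or equals $\{u\}$). In the second case, we need some $i$ with $(i,j_u)\in\mathcal{R}^*$ to exist. This is the noted property of typical systems: every part $P_j$ meets some component in $\mathcal{K}^*$. That property holds because the component $K_{i_u}$ meets $P_{j_u}$, and so its signature contains $P_{j_u}$. By maximality of $\mathcal{K}^*$, some component with that same signature belongs to $\mathcal{K}^*$; otherwise $K_{i_u}$ could be added to $\mathcal{K}^*$. Hence $|f^*(u)|\ge 1$ in every case.

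There is no real obstacle here. The only point needing care is checking that "different component" holds in each use of the key fact, and in the second case this is exactly where $i_u\notin[t^*]$ is used.
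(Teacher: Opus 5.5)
Your proof is correct and takes essentially the same approach as the paper. In both cases you use that vertices in distinct components of $G+G^*$ have the same adjacency in $G$ as in $G^*$, together with the fact that $u$ and the chosen representative lie in the same part of $\mathcal{P}^*$. Your argument that some $(i,j_u)\in\mathcal{R}^*$ exists, via maximality of the typical set, and your handling of $u\in R^*$ just spell out facts the paper uses implicitly.
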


    \begin{subproof}{\cref{cl:f*-faithful}}
        Suppose first that $i_u \in [t^*]$, so there is a vertex $v_{i_u, j_u} \in R^*$.
        In this case, we already argued that $v$ and $v_{i_u, j_u}$ are compatible.
    
        Suppose now that $i_u \notin [t^*]$ and let $i \in [t^*]$ such that $(i, j_u) \in \mathcal{R}^*$.
        We show that $v$ is compatible with $v_{i, j_u}$.
        Let $i' \neq i \in [t^*]$ and $j' \in [p^*]$ such that $v_{i', j'} \in R^*$.
        Observe that $K_{i_u} \neq K_{i'}$ since $i_u \notin [t^*]$.
        Thus, both $v$ and $v_{i, j_u}$ are not adjacent to $v_{i', j'}$ in $G+G^*$.
        Since $v, v_{i, j_u} \in P_{j_u}$, $v$ and $v_{i, j_u}$ have the same neighborhood in $G^*$.
        Thus, $v$ and $v_{i, j_u}$ have the same adjacency in $G$ to $v_{i', j'}$.
        This proves that $v$ and $v_{i, j_u}$ are compatible.

        The fact that $f^*$ is faithful follows immediately, since there exists $i \in [t^*]$ such that $(i, j_u) \in \mathcal{R}^*$.
    \end{subproof}

    \paragraph*{Functionality.}
    We say that $f$ is \emph{functional} if for every $u \in V(G)$, for every $i \in [t]$, there is at most one $j \in [p]$ such that $v_{i, j} \in f(u)$.
    Informally, this means that $v$ cannot be compatible with two representatives that live in the same component of $G^*$.

    \begin{claim}\label{f*-functional}
        $f^*$ is functional.
    \end{claim}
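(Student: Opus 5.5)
We argue by contradiction. Suppose some $u \in V(G)$ and $i \in [t^*]$ admit two distinct indices $j \neq j'$ with $v_{i,j}, v_{i,j'} \in f^*(u)$. If $u \in R^*$, then $f^*(u)=\{u\}$ is a singleton and there is nothing to prove, so assume $u \notin R^*$. Compatibility is defined only through adjacencies in $G$ to representatives $v_{i',j''}$ with $i' \neq i$, so the plan is to translate compatibility of $u$ with both $v_{i,j}$ and $v_{i,j'}$ into the statement that $v_{i,j}$ and $v_{i,j'}$ agree, in $G^*$, on every vertex outside $K_i$. This will contradict \cref{lem:distinguished-outside}, applied to the $k$-optimal $G^*$.

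\textbf{Step 1.} Since $u$ is compatible with both representatives, $v_{i,j}$ and $v_{i,j'}$ have the same adjacency in $G$ to every $v_{i',j''} \in R^*$ with $i' \neq i$. Both lie in $K_i$, while $v_{i',j''}$ lies in $K_{i'} \neq K_i$. Hence neither is adjacent to $v_{i',j''}$ in $G+G^*$, so adjacency in $G$ equals adjacency in $G^*$ for these pairs. Therefore $v_{i,j}$ and $v_{i,j'}$ agree in $G^*$ on all representatives outside $K_i$. Note that this step uses only compatibility, so it needs no knowledge of which part or component contains $u$.

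\textbf{Step 2.} We extend this agreement to every vertex $w \in V(G) \setminus V(K_i)$. Let $w \in P_{j_w}$. Adjacency in $G^*$ depends only on the part, so it suffices to find a representative $v_{i',j_w}$ with $i' \neq i$. If $w$ lies in a component $K_{i_w}$ with $i_w \in [t^*]$, then $i_w \neq i$ and $v_{i_w,j_w}$ exists, so we are done.

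\textbf{Step 3 (main obstacle).} The difficult case is when $w$ lies in a component outside the typical set. Such a component has some signature $\sigma \ni P_{j_w}$. Because the component was not included in $\mathcal{K}^*$, the typical set must already contain three components with signature $\sigma$. At most one of these three can be $K_i$, so some $K_{i'}$ with $i' \neq i$ has signature $\sigma$. This $K_{i'}$ meets $P_{j_w}$, so $(i',j_w) \in \mathcal{R}^*$ and the representative $v_{i',j_w}$ exists. This is exactly where the choice of three components per signature in a typical set is used: two would not suffice once one of them might be $K_i$.

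\textbf{Conclusion.} Combining Steps 1--3, $v_{i,j} \in P_j$ and $v_{i,j'} \in P_{j'}$ with $P_j \neq P_{j'}$ lie in the same component $K_i$, yet no vertex outside $K_i$ is adjacent in $G^*$ to exactly one of them. This contradicts \cref{lem:distinguished-outside}, so $f^*$ is functional.
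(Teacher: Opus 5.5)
Your proof is correct and follows essentially the paper's argument: \cref{lem:distinguished-outside} applied to $v_{i,j},v_{i,j'}$, transfer of the distinguishing vertex to a same-part representative in another typical component, and the fact that cross-component adjacencies in $G$ and $G^*$ coincide. You phrase it contrapositively and spell out the typicality step that the paper leaves implicit. One small correction to your side remark in Step 3: two components per signature would already suffice there, since at most one of them can be $K_i$. The third copy is genuinely needed elsewhere, e.g.\ in \cref{cl:updated-compatibility-info}.
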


    \begin{subproof}{\cref{f*-functional}}
        Let $u \in V(G)$, $i \in [t^*]$ and $j \neq j' \in [p^*]$ such that $v_{i, j}, v_{i, j'} \in R^*$. 
        By \cref{lem:distinguished-outside}, there exists $w \in V(G)$ that does not belong to $K_i$ and that is adjacent to exactly one of $v_{i, j}, v_{i, j'}$ in $G^*$.
        Let $j_0 \in [p^*]$ such that $w \in P_{j_0}$.
        Since $\mathcal{K}^*$ is a typical set of components of $G+G^*$, there exists $i_0 \neq i \in [t^*]$ such that $(i_0, j_0) \in \mathcal{R}^*$, so there is a vertex $v_{i_0, j_0} \in R^*$.
        By \cref{lem:parts-complete-anti}, it follows that $v_{i_0, j_0}$ is adjacent to exactly one of $v_{i, j}, v_{i, j'}$ in $G^*$.
        Since $i_0 \neq i$, $v_{i_0, j_0}$ is adjacent to neither $v_{i, j}$ nor $v_{i, j'}$ in $G+G^*$.
        Thus, $v_{i_0, j_0}$ is adjacent to exactly one of $v_{i, j}, v_{i, j'}$ in $G$. Therefore, $u$ cannot be compatible with both $v_{i, j}$ and $v_{i, j'}$.
    \end{subproof}

    The algorithm verifies that $f$ is faithful and functional, and aborts if it is not the case.
    From there, we would like to be able to decide which part of $\mathcal{P}^*$ each $u \in V(G)$ belongs to.
    If $u$ is compatible with several vertices that belong to the same part $P_j$ of $\mathcal{P}^*$, this is a good indication that maybe we should try to put $u$ in the part $P_j$.
    Formally, a vertex $u \in V(G)$ is \emph{flagged} if there exists $j \in [p]$ such that $|\{i \in [t] : v_{i, j} \in f(u)\}| \geq 2$.
    This terminology is justified by the following observation -- which holds even if the guess is not $(t^*, p^*, \mathcal{R}^*, R^*, G^*/\mathcal{P}^*)$.

    \begin{claim}\label{cl:flagged-unique}
        Let $u \in V(G)$ be a flagged vertex. 
        There exists a unique $j \in [p]$ such that $|\{i \in [t] : v_{i, j} \in f(u)\}| \geq 2$.
    \end{claim}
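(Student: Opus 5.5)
The plan is to argue by contradiction, using only the consistency of $R$ with $H$ (checked at \cref{line:consistent}) and the fact that $H$ was guessed to be twin-free. We do not need the guess to be the optimal one, and we do not even need functionality.

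First, a flagged vertex $u$ cannot lie in $R$. Indeed, $f(u)=\{u\}$ for $u\in R$, which has only one element. So suppose $u\notin R$, and suppose for contradiction that there are distinct $j,j'\in[p]$ and indices $i_1\neq i_2$ and $i_3\neq i_4$ in $[t]$ such that $v_{i_1,j},v_{i_2,j},v_{i_3,j'},v_{i_4,j'}\in f(u)$. We will show that $j$ and $j'$ are twins in $H$, contradicting that $H$ has no twins.

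Fix any $j''\in[p]$; this includes the cases $j''=j$ and $j''=j'$, which handle loops. By the definition of $\mathcal{R}$ there is some $b\in[t]$ with $(b,j'')\in\mathcal{R}$; let $x=v_{b,j''}$. Since $i_1\neq i_2$, the index $b$ differs from at least one of them, say $i_1$ (the other case is symmetric). Because $u$ is compatible with $v_{i_1,j}$ and $x$ is a representative whose component index $b$ differs from $i_1$, we have $ux\in E(G)\iff v_{i_1,j}x\in E(G)$. By consistency, since the component indices differ, this holds iff $jj''\in E(H)$.

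Repeating the argument with $i_3,i_4$ in place of $i_1,i_2$, $b$ differs from one of $i_3,i_4$, and the same reasoning gives $ux\in E(G)\iff j'j''\in E(H)$. Hence $jj''\in E(H)\iff j'j''\in E(H)$ for every $j''\in[p]$. So $j$ and $j'$ have the same neighborhood in $H$, i.e.\ they are twins, which is the desired contradiction. Existence of some such $j$ is exactly the definition of $u$ being flagged, so uniqueness is all that needs proving.

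The only delicate point is making sure that a representative $x$ of each part $j''$ always exists with a component index avoiding one of the two indices attached to $j$ and one of the two attached to $j'$. The first is guaranteed by the condition on $\mathcal{R}$ that every $j\in[p]$ appears. The second is exactly why flagging requires two distinct indices $i$: a single index would not let $b$ avoid it.
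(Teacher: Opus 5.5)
Your proof is correct and follows the paper's argument. The paper fixes a $j_0$ on which $j$ and $j'$ disagree in the twin-free graph $H$ and derives the chain $jj_0\in E(H)\iff v_{i_1,j}v_{i_0,j_0}\in E(G)\iff uv_{i_0,j_0}\in E(G)\iff v_{i'_1,j'}v_{i_0,j_0}\in E(G)\iff j'j_0\in E(H)$ via compatibility and consistency; you run the same chain for every $j''$ and conclude that $j,j'$ are twins, which is just the contrapositive form, and your remark that a flagged $u$ cannot lie in $R$ makes explicit a small detail the paper leaves implicit.
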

    
    \begin{subproof}{\cref{cl:flagged-unique}}
        The existence of $j$ follows from the definition of flagged. 
        We now prove the uniqueness.
        By contradiction, suppose that there exist $j \neq j' \in [p]$, $i_1 \neq i_2 \in [t]$, and $i'_1 \neq i'_2 \in [t]$ such that $v_{i_1, j}, v_{i_2, j}, v_{i'_1, j'}, v_{i'_2, j'} \in f(u)$.
        Since $H$ is twin-free, there exists $j_0 \in [p]$ such that exactly one of $jj_0, j'j_0$ is an edge of $H$.
        By construction, there exists $i_0 \in [t]$ such that $(i_0, j_0) \in \mathcal{R}$, so there is a vertex $v_{i_0, j_0} \in R$.
        Without loss of generality, we may assume that $i_1 \neq i_0 \neq i'_1$.
        Then, since $R$ is consistent with $H$ (otherwise the algorithm would have aborted in Line~\ref{line:consistent}) and $u$ is compatible with both $v_{i_1, j}$ and $v_{i'_1, j'}$, we have \begin{align*}
            jj_0 \in E(H) &\iff v_{i_1, j}v_{i_0, j_0} \in E(G)  \\
                &\iff uv_{i_0, j_0} \in E(G) \\
                &\iff v_{i'_1, j'}v_{i_0, j_0} \in E(G) \\
                &\iff j'j_0 \in E(H),
        \end{align*} a contradiction.
    \end{subproof}

    \paragraph*{Strong conficts.}
    The information stored by $f$ is only based on the family $R$ that we guessed, and it can be inconsistent between different vertices of $G$.
    Say that two vertices $u, u' \in V(G)$ are \emph{strongly conflicting} if for every $v_{i, j} \in f(u)$ and every $v_{i', j'} \in f(u')$ such that $i \neq i'$, the adjacency in $G$ between $v_{i, j}$ and $v_{i', j'}$ is different from that between $u$ and $u'$.
    This means that if we want to put $u$ and $u'$ in the components and parts described by $f$, they must be placed in the same component.
    
    The reflexive transitive closure of the ``strongly conflicting'' relation is an equivalence relation.
    Let $\mathcal{U}$ be the partition of $V(G)$ into equivalence classes under this relation.
    We denote by $\mathcal{U}^*$ the partition corresponding to the function $f^*$.
    As we argued above, the only way to reconcile two strongly conflicting vertices should be to put them in the same component of the perturbation. This is indeed the case in $G^*$.

    \begin{claim}\label{cl:conflicting-same-cc}
        Every $U \in \mathcal{U}^*$ is contained in a component of $G+G^*$.
    \end{claim}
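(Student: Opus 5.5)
Since $\mathcal{U}^*$ consists of the equivalence classes of the reflexive transitive closure of the strong-conflict relation (computed with respect to $f^*$), and "lying in the same component of $G+G^*$" is itself an equivalence relation, it suffices to prove the following: any two strongly conflicting vertices $u,u'$ lie in the same component of $G+G^*$. Transitivity then extends this to the whole class.

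So suppose $u$ and $u'$ are strongly conflicting, and suppose for contradiction that they lie in different components $K_{i_u} \neq K_{i_{u'}}$ of $G+G^*$. Let $P_{j_u}$ and $P_{j_{u'}}$ be the parts of $\mathcal{P}^*$ containing them. The goal is to exhibit representatives $v_{i,j_u} \in f^*(u)$ and $v_{i',j_{u'}} \in f^*(u')$ with $i \neq i'$ whose adjacency in $G$ equals that of $uu'$. This contradicts the definition of strong conflict.

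First I will check that such representatives have the right adjacency. Since $u$ and $u'$ lie in different components, $uu' \notin E(G+G^*)$, so $uu' \in E(G)$ if and only if $j_uj_{u'} \in E(G^*/\mathcal{P}^*)$. Likewise, for $i \neq i'$ the representatives $v_{i,j_u}$ and $v_{i',j_{u'}}$ lie in different components, so by the computation in \cref{cl:consistent} their adjacency in $G$ is also given by $j_uj_{u'} \in E(G^*/\mathcal{P}^*)$. Hence any choice with $v_{i,j_u} \in f^*(u)$, $v_{i',j_{u'}} \in f^*(u')$ and $i \neq i'$ gives the contradiction.

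It remains to find such a pair, using \cref{cl:f*-faithful} and a case analysis on whether $i_u$ and $i_{u'}$ lie in $[t^*]$.
\begin{itemize}
\item If both $i_u, i_{u'} \in [t^*]$, take $v_{i_u,j_u}$ and $v_{i_{u'},j_{u'}}$. Since $i_u \neq i_{u'}$, we are done.
\item If exactly one of them, say $i_u$, lies in $[t^*]$, then \cref{cl:f*-faithful} shows that $u'$ is compatible with every $v_{i,j_{u'}}$ such that $(i,j_{u'}) \in \mathcal{R}^*$. We need one such $i$ with $i \neq i_u$. The component $K_{i_{u'}}$ was not selected into $\mathcal{K}^*$, so by typicality $\mathcal{K}^*$ already contains three components with the signature of $K_{i_{u'}}$. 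Each of these meets $P_{j_{u'}}$, so at least one of them differs from $K_{i_u}$.
\item If neither lies in $[t^*]$, both $u$ and $u'$ are compatible with all representatives in their respective parts. Typicality again gives at least three indices $i$ with $(i,j_u) \in \mathcal{R}^*$ and three with $(i,j_{u'}) \in \mathcal{R}^*$, so we can choose two that are distinct.
\end{itemize}

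One subtlety is that $f^*(v) = \{v\}$ for $v \in R^*$. If $u$ is itself a representative, then $u = v_{i_u,j_u}$ with $i_u \in [t^*]$, and the first choice above still applies, because $f^*(u)$ contains exactly that representative.

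I expect the main obstacle to be the case where a component lies outside the typical set. There, one must use the "three components per signature" property to guarantee that the two representatives can be taken from distinct components.
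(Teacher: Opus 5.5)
Your proof is correct and follows the paper's argument. You make the same reduction to a single strongly conflicting pair, and you run the same three-way case analysis on whether the two component indices lie in $[t^*]$, using \cref{cl:f*-faithful} and typicality to find representatives in $P_{j_u}$ and $P_{j_{u'}}$ from distinct components. The only difference is cosmetic and lies in the final contradiction. The paper counts edges between two vertex pairs modulo $4$ in $G$ versus $G^*$, which requires all four cross pairs to be cross-component. You instead compare just the two relevant cross-component pairs through $G^*/\mathcal{P}^*$ as in \cref{cl:consistent}, which needs slightly less and makes the non-vacuity of the strong-conflict condition explicit.
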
    

    \begin{subproof}{\cref{cl:conflicting-same-cc}}
        It suffices to prove that if $u, w \in V(G)$ are strongly conflicting for $f^*$, they belong to the same component of $G+G^*$.
        Let $u, w \in V(G)$ be strongly conflicting for $f^*$.
        Let $K_{i_u}$ (resp. $K_{i_{w}}$) be the component of $G+G^*$ that contains $u$ (resp. $w$) and let $P_{j_u}$ (resp. $P_{j_{w}}$) be the part of $\mathcal{P}^*$ that contains $u$ (resp. $w$).
        By contradiction, suppose that $i_u \neq i_{w}$.

        We first show that there exist vertices $u', w' \in V(G)$ such that: \begin{itemize}
            \item all vertices among $u, u', w, w'$ belong to different components of $G+G^*$, except maybe $\{u, u'\}$, and $\{w, w'\}$,
            \item $u' \in P_{j_u}$ and $w' \in P_{j_w}$, and
            \item the number of edges between $\{u, u'\}$ and $\{w, w'\}$ in $G$ is nonzero modulo $4$.
        \end{itemize}

        Consider first the case where $i_u, i_{w} \in [t^*]$.
        Then, there are vertices $v_{i_u, j_u}, v_{i_{w}, j_{w}} \in R^*$, and $u$ is compatible with $v_{i_u, j_u}$ and $w$ with $v_{i_{w}, j_{w}}$ by \cref{cl:f*-faithful}.
        Since $u, w$ are strongly conflicting, exactly one of $uw$ and $v_{i_u, j_u}v_{i_{w}, j_{w}}$ is an edge in $G$, so the number of edges between $\{u, v_{i_u, j_u}\}$ and $\{w, v_{i_{w}, j_{w}}\}$ in $G$ is nonzero modulo 4.
        In this case, set $u' = v_{i_u, j_u}$ and $w' = v_{i_w, j_w}$.
        
        Suppose now that $i_u \notin [t^*]$ and $i_w \in [t^*]$.
        By definition of $\mathcal{K}^*$, there exists $i'_u \neq i_w \in [t^*]$ such that $v_{i'_u, j_u} \in R^*$.
        Then, $u$ is compatible with $v_{i'_u, j_u}$ and $w$ with $v_{i_w, j_w}$ by \cref{cl:f*-faithful}.
        Since $u, w$ are strongly conflicting, exactly one of $uw$ and $v_{i'_u, j_u}v_{i_w, j_w}$ is an edge in $G$, so the number of edges between $\{u, v_{i'_u, j_u}\}$ and $\{w, v_{i_w, j_w}\}$ in $G$ is nonzero modulo 4.
        In this case, set $u' = v_{i'_u, j_u}$ and $w' = v_{i_w, j_w}$.
        The case where $i_u \in [t^*]$ and $i_w \notin [t^*]$ is symmetric.
        
        Suppose finally that $i_u, i_w \notin [t^*]$.
        By definition of $\mathcal{K}^*$, there exist $i'_u \in [t^*]$ such that $v_{i'_u, j_u} \in R^*$, and $i'_w \neq i'_u \in [t^*]$ such that $v_{i'_w, j_w} \in R^*$.
        Then, $u$ is compatible with $v_{i'_u, j_u}$ and $w$ with $v_{i'_w, j_w}$ by \cref{cl:f*-faithful}.
        Since $u, w$ are strongly conflicting, exactly one of $uw$ and $v_{i'_u, j_u}v_{i'_w, j_w}$ is an edge in $G$, so the number of edges between $\{u, v_{i'_u, j_u}\}$ and $\{w, v_{i'_w, j_w}\}$ in $G$ is nonzero modulo 4.
        In this case, set $u' = v_{i'_u, j_u}$ and $w' = v_{i'_w, j_w}$.

        This proves the existence of $u', w'$ with the desired properties.
        Since $u, u' \in P_{j_u}$ and $w, w' \in P_{j_{w}}$, the number of edges between $\{u, u'\}$ and $\{w, w'\}$ in $G^*$ is zero modulo 4.
        Thus, the number of edges between $\{u, u'\}$ and $\{w, w'\}$ in $G+G^*$ is nonzero modulo 4. 
        This contradicts that all these vertices belong to different components of $G+G^*$, except maybe $u$ and $u'$, and $w$ and $w'$.
        This concludes the proof that $i_u = i_w$, so $u, w$ belong to the same component of $G+G^*$.
    \end{subproof}

    \paragraph*{Weak conflicts.}
    There can also be inconsistencies between vertices that are not strongly conflicting.
    Say that two vertices $u, u' \in V(G)$ are \emph{conflicting} if there exist $v_{i, j} \in f(u)$ and $v_{i', j'} \in f(u')$ with $i \neq i'$ such that the adjacency in $G$ between $v_{i, j}$ and $v_{i', j'}$ is different from that between $u$ and $u'$.
    They are \emph{weakly conflicting} if they are conflicting but not strongly conflicting.
    We show that weakly conflicting vertices have a very specific behavior, even if the guess is not $(t^*, p^*, \mathcal{R}^*, R^*, G^*/\mathcal{P}^*)$.

    \begin{claim}\label{cl:setup-weakly-conflicting}
        Let $u, u' \in V(G)$ be weakly conflicting.
        Then, there exist $i_1 \neq i_2 \in [t]$ and $j_1, j_2, j'_1, j'_2 \in [p]$ such that $f(u) = \{v_{i_1, j_1}, v_{i_2, j_2}\}$ and $f(u') = \{v_{i_1, j'_1}, v_{i_2, j'_2}\}$.
        Furthermore, exactly one of $v_{i_1, j_1}v_{i_2, j'_2}$ and $v_{i_1, j'_1}v_{i_2, j_2}$ is an edge in $G$.
    \end{claim}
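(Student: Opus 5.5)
The plan is to exploit the fact that compatibility lets us transfer adjacencies between $u$ or $u'$ and representatives. The key observation is the following. Let $x = v_{m, \ell} \in f(u)$ and $y = v_{m', \ell'} \in f(u')$ with $m \neq m'$. If $y' = v_{m'', \ell''} \in f(u')$ is another element with $m'' \neq m$, then $u'$ is compatible with both $y$ and $y'$. Hence $u'$ has the same adjacency in $G$ to $x$ as $y$ does, and also the same adjacency to $x$ as $y'$ does, because $x$ lies in a component index different from those of $y$ and $y'$. So the adjacency of $x$ to $y$ equals the adjacency of $x$ to $y'$. The case $u' \in R$ is trivial, since then $f(u') = \{u'\}$. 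Symmetrically, we may replace $x$ by another element $x' \in f(u)$ whose component index differs from that of $y$, without changing the adjacency.

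So consider the auxiliary graph $\Pi$ whose vertices are the pairs $(x, y) \in f(u) \times f(u')$ lying in different component indices. Two pairs are joined if they differ in exactly one coordinate. The observation says that the adjacency of $x$ to $y$ in $G$ is constant on each connected component of $\Pi$. By definition, conflicting gives a pair $(a, b)$ whose adjacency differs from that of $uu'$. Not strongly conflicting gives a pair $(c, d)$ whose adjacency equals that of $uu'$. Hence $\Pi$ is disconnected.

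Next I translate this into a combinatorial statement about index sets. By functionality (the guess survived Line~\ref{line:faithful-functional}), each index in $[t]$ occurs at most once in $f(u)$ and at most once in $f(u')$. So $f(u)$ and $f(u')$ are identified with index sets $I, I' \subseteq [t]$, and $\Pi$ becomes the graph on ordered pairs $(m, m') \in I \times I'$ with $m \neq m'$, where adjacent pairs differ in one coordinate.

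The main step, which is a small case analysis, is to show that this graph is connected unless $I = I' = \{i_1, i_2\}$ with $i_1 \neq i_2$, in which case its vertices are exactly the two isolated pairs $(i_1, i_2)$ and $(i_2, i_1)$. I would argue this as follows. If $|I| = 1$ or $|I'| = 1$, all pairs share a coordinate, so $\Pi$ is a clique (a star of moves). Otherwise, take two pairs $(m_1, m'_1)$ and $(m_2, m'_2)$. I try to route between them through a pair $(m, m')$ using an index in $I \setminus I'$, in $I' \setminus I$, or a third index in $I \cap I'$. Such an index exists precisely when we are not in the case $I = I'$ with $|I| = 2$. The examples $I = \{1,2\}, I' = \{1,3\}$ and $I = \{1,2,3\}, I' = \{1,2\}$ indicate the pattern. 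Checking this cleanly for all configurations is where I expect the only real care to be needed.

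Given disconnectedness, we conclude $f(u) = \{v_{i_1, j_1}, v_{i_2, j_2}\}$ and $f(u') = \{v_{i_1, j'_1}, v_{i_2, j'_2}\}$. The two pairs of $\Pi$ are $(v_{i_1, j_1}, v_{i_2, j'_2})$ and $(v_{i_2, j_2}, v_{i_1, j'_1})$. One of them must be the conflicting pair and the other the non-conflicting pair, so exactly one of $v_{i_1, j_1}v_{i_2, j'_2}$ and $v_{i_1, j'_1}v_{i_2, j_2}$ is an edge of $G$, as claimed.
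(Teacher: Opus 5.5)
Your proof is correct, and it is organized differently from the paper's. The paper argues in three separate steps. It rules out $|f(u)|=1$, then $|f(u)|\ge 3$ (by picking a third representative $v_{i_3,j_3}$ and chasing adjacencies through $u$ and $u'$), and then $\{i_1,i_2\}\neq\{i'_1,i'_2\}$. Each step is an explicit chain of compatibility transfers preceded by a ``without loss of generality'' relabelling of indices. Those chains are exactly paths in your auxiliary graph $\Pi$.

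You instead prove the transfer step once: the adjacency of $x$ to $y$ is unchanged when one coordinate of the pair is replaced. You then reduce everything to a purely combinatorial fact: the off-diagonal rook graph on $I\times I'$ is connected unless $I=I'$ has exactly two elements. This buys uniformity. The exclusion of $|f(u)|=1$, the bound $|f(u)|\le 2$, and the equality of the index sets all follow from one lemma. It also replaces the paper's relabellings and its ``since these were arbitrary'' conclusion with a single clean connectivity statement. The paper's version is more hands-on and exhibits the intermediate facts directly.

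The step you flagged is easy to finish. Take vertices $(a,b)$ and $(c,d)$ of $\Pi$. If $a\neq d$, route through $(a,d)$; if $c\neq b$, route through $(c,b)$. Otherwise the two vertices are $(a,b)$ and $(b,a)$ with $a\neq b$ and $a,b\in I\cap I'$. In that case any $e\in (I\cup I')\setminus\{a,b\}$ gives a detour: $(a,b),(a,e),(b,e),(b,a)$ if $e\in I'$, or $(a,b),(e,b),(e,a),(b,a)$ if $e\in I$. Such an $e$ is missing only when $I=I'=\{a,b\}$. Then $(a,b)$ and $(b,a)$ are the only vertices of $\Pi$, and they are non-adjacent.
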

    
    \begin{subproof}{\cref{cl:setup-weakly-conflicting}}
        Since $f$ is faithful (otherwise the algorithm would have aborted in Line~\ref{line:faithful-functional}), we have $|f(u)|, |f(u')| \geq 1$.
        By contradiction, suppose that $|f(u)| = 1$ and let $f(u) \coloneqq \{v_{i, j}\}$.
        By definition, every $v_{i', j'} \in f(u')$ with $i' \neq i$ has the same adjacency as $u'$ on $v_{i, j}$ in $G$. 
        Thus, either $u$ and $u'$ are strongly conflicting or they are not conflicting, which is impossible by assumption.
        Thus, $|f(u)| \geq 2$, and by symmetry $|f(u')| \geq 2$.
    
        By contradiction, suppose that $|f(u)| \geq 3$ and let $v_{i_1, j_1}, v_{i_2, j_2}, v_{i_3, j_3} \in f(u)$. 
        Since $f$ is functional (otherwise the algorithm would have aborted in Line~\ref{line:faithful-functional}), $i_1, i_2$ and $i_3$ are distinct.
        Let $v_{i'_1, j'_1}, v_{i'_2, j'_2} \in f(u')$. 
        Again, $i'_1$ and $i'_2$ are distinct. 
        Without loss of generality, we can assume that $i_1, i_2, i_3, i'_1, i'_2$ are pairwise distinct, except maybe $i_1, i'_1$, and $i_2, i'_2$.
        By symmetry, suppose that $u'$ is adjacent to $v_{i_3, j_3}$ in $G$.
        Since $u'$ is compatible with $v_{i'_1, j'_1}$ and $v_{i'_2, j'_2}$, they are both adjacent to $v_{i_3, j_3}$ in $G$.
        Since $u$ is compatible with $v_{i_3, j_3}$, $u$ is adjacent to both $v_{i'_1, j'_1}$ and $v_{i'_2, j'_2}$ in $G$.
        Since $u$ is compatible with $v_{i_1, j_1}$ and $v_{i_2, j_2}$, they are both adjacent to $v_{i'_1, j'_1}$ and $v_{i'_2, j'_2}$ in $G$, except maybe $v_{i_1, j_1}$ and $v_{i'_1, j'_1}$ if $i_1 = i'_1$, and $v_{i_2, j_2}$ and $v_{i'_2, j'_2}$ if $i_2 = i'_2$.
        Since $v_{i_1, j_1}, v_{i_2, j_2}, v_{i_3, j_3} \in f(u)$ and $v_{i'_1, j'_1}, v_{i'_2, j'_2} \in f(u')$ were arbitrary, it follows that either $u, u'$ are strongly conflicting or they are not conflicting. 
        Both these outcomes are impossible, so $|f(u)| = 2$ and symmetrically $|f(u')| = 2$.
    
        Write $f(u) = \{v_{i_1, j_1}, v_{i_2, j_2}\}$ and $f(u') = \{v_{i'_1, j'_1}, v_{i'_2, j'_2}\}$. 
        Since $f$ is functional, $i_1$ and $i_2$ are distinct and so are $i'_1$ and $i'_2$.
        By contradiction, suppose that $\{i_1, i_2\} \neq \{i'_1, i'_2\}$.
        Without loss of generality, we can assume that $i_1, i_2, i'_1, i'_2$ are pairwise distinct, except maybe $i_2, i'_2$.
        By symmetry, suppose that $u'$ is adjacent to $v_{i_1, j_1}$ in $G$.
        Since $u'$ is compatible with $v_{i'_1, j'_1}$ and $v_{i'_2, j'_2}$, they are both adjacent to $v_{i_1, j_1}$ in $G$.
        Since $u$ is compatible with $v_{i_1, j_1}$, $u$ is adjacent to both $v_{i'_1, j'_1}$ and $v_{i'_2, j'_2}$ in $G$.
        Since $u$ is compatible with $v_{i_2, j_2}$, $v_{i_2, j_2}$ is adjacent to $v_{i'_1, j'_1}$ in $G$, and to $v_{i'_2, j'_2}$ in $G$ unless maybe if $i_2 = i'_2$.
        Then, it follows that either $u, u'$ are strongly conflicting or they are not conflicting. 
        Both these outcomes are impossible, so $\{i_1, i_2\} = \{i'_1, i'_2\}$.
        This proves the first part of the statement.
        Since $u$ and $u'$ are weakly conflicting, it follows immediately that exactly one of $v_{i_1, j_1}v_{i_2, j'_2}$ and $v_{i_1, j'_1}v_{i_2, j_2}$ is an edge in $G$.
    \end{subproof}

    \paragraph*{Crossings.}
    \cref{cl:setup-weakly-conflicting} motivates the following definition.
    Let $i_1 \neq i_2 \in [t]$ and $j_1, j_2, j'_1, j'_2 \in [p]$ such that $(i_1, j_1), (i_1, j'_1), (i_2, j_2), (i_2, j'_2) \in \mathcal{R}$.
    We say that $(i_1, i_2, j_1, j'_1, j_2, j'_2)$ is \emph{crossing} if $v_{i_1, j_1}v_{i_2, j'_2} \in E(G)$ and $v_{i_1, j'_1}v_{i_2, j_2} \notin E(G)$.
    Let $X \subseteq [t]^2 \times [p]^4$ denote the set of all such tuples that are crossing.
    Let $X^* \subseteq [t^*]^2 \times [p^*]^4$ be the corresponding set for $(t^*, p^*, \mathcal{R}^*, R^*, G^*/\mathcal{P}^*)$.

    Thus, \cref{cl:setup-weakly-conflicting} asserts that if $u, u' \in V(G)$ are weakly conflicting, there exists $x = (i_1, i_2, j_1, j_2, j'_1, j'_2) \in X$ such that $f(u) = \{v_{i_1, j_1}, v_{i_2, j_2}\}$ and $f(u') = \{v_{i_1, j'_1}, v_{i_2, j'_2}\}$, say.
    Suppose that we know that $u, u' \in V(K_{i_1} \cup K_{i_2})$ and that if $u \in V(K_{i_1})$ then $u \in P_{j_1}$ and so on.
    In that case, if $uu' \in E(G)$, we have $u' \in K_{i_1} \Rightarrow u \in K_{i_1}$, and if $uu' \notin E(G)$, we have $u \in K_{i_1} \Rightarrow u' \in K_{i_1}$.
    We represent these constraints as follows.

    Let $x = (i_1, i_2, j_1, j_2, j'_1, j'_2) \in X$.
    Let $W_x = \{u \in V(G) : f(u) = \{v_{i_1, j_1}, v_{i_2, j_2}\}\}$ and $W'_x = \{u \in V(G) : f(u) = \{v_{i_1, j'_1}, v_{i_2, j'_2}\}\}$.
    The \emph{crossing digraph for $x$} is the directed graph $D_x$ on vertex set $W_x \cup W'_x$ with an arc from $u \in W_x$ to $u' \in W'_x$ if and only if $uu' \notin E(G)$ and an arc from $u' \in W'_x$ to $u \in W_x$ if and only if $uu' \in E(G)$.
    We show how this digraph can be used to solve the weak conflicts between $W_x$ and $W'_x$ for $(G, G^*)$.

    \begin{claim}\label{cl:exists-crossing-solver}
        For every $x = (i_1, i_2, j_1, j_2, j'_1, j'_2) \in X^*$, one of the following holds in $G+G^*$: \begin{itemize}
            \item $W_x \cup W'_x \subseteq V(K_{i_2})$.
            \item There exists $u_0 \in W_x \cup W'_x$ such that every $u \in W_x \cup W'_x$ that can be reached from $u_0$ in $D_x$ satisfies $u \in V(K_{i_1})$, and every $u \in W_x \cup W'_x$ that cannot be reached from $u_0$ in $D_x$ but that can reach $u_0$ in $D_x$ satisfies $u \in V(K_{i_2})$.
        \end{itemize}
    \end{claim}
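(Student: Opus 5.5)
The plan is to first pin down where the vertices of $W_x \cup W'_x$ can lie in $G+G^*$. Then I will show that the arcs of $D_x$ encode valid implications of the form ``if the tail is in $K_{i_1}$ then so is the head''. Finally, I will pick $u_0$ by a maximality argument.

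\emph{Step 1: location of the vertices.} Let $u \in W_x$, so $f^*(u) = \{v_{i_1,j_1}, v_{i_2,j_2}\}$. I claim that $u \in V(K_{i_1}) \cap P_{j_1}$ or $u \in V(K_{i_2}) \cap P_{j_2}$. Suppose first that $u$ lies in a typical component $K_i$ and in a part $P_j$. Then \cref{cl:f*-faithful} gives $v_{i,j} \in f^*(u)$, which forces $(i,j) \in \{(i_1,j_1),(i_2,j_2)\}$. Suppose instead that $u$ lies in a non-typical component and in a part $P_j$. Then $\mathcal{K}^*$ contains three components with the same signature as that component, and that signature contains $P_j$. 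Hence there are at least three indices $i$ with $(i,j) \in \mathcal{R}^*$. By \cref{cl:f*-faithful}, $u$ is compatible with all three of the corresponding representatives, so $|f^*(u)| \geq 3$, a contradiction. The same argument for $u' \in W'_x$ shows that $u' \in V(K_{i_1}) \cap P_{j'_1}$ or $u' \in V(K_{i_2}) \cap P_{j'_2}$.

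\emph{Step 2: arcs propagate membership in $K_{i_1}$.} Consider an arc $u \to u'$ with $u \in W_x$, $u' \in W'_x$, $uu' \notin E(G)$, and suppose that $u \in V(K_{i_1})$. If $u' \in V(K_{i_2})$, then $u$ and $u'$ lie in different components of $G+G^*$. So their adjacency in $G$ equals their adjacency in $G^*$, which is the adjacency of $j_1$ and $j'_2$ in $G^*/\mathcal{P}^*$. By \cref{cl:consistent}, this equals the adjacency of $v_{i_1,j_1}$ and $v_{i_2,j'_2}$ in $G$, which is an edge because $x$ is crossing. This contradicts $uu' \notin E(G)$, so $u' \in V(K_{i_1})$. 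Symmetrically, consider an arc $u' \to u$ with $uu' \in E(G)$ and $u' \in V(K_{i_1})$. Placing $u$ in $K_{i_2}$ would force $uu'$ to have the same adjacency as $v_{i_1,j'_1}v_{i_2,j_2}$, which is a non-edge. Hence everything reachable from a vertex of $K_{i_1}$ lies in $K_{i_1}$. Contrapositively, every vertex that can reach a vertex of $K_{i_2}$ lies in $K_{i_2}$.

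\emph{Step 3: choice of $u_0$.} If $W_x \cup W'_x \subseteq V(K_{i_2})$, the first outcome holds. Otherwise, by Step 1 the set $S \coloneqq (W_x \cup W'_x) \cap V(K_{i_1})$ is nonempty. For each $u \in W_x \cup W'_x$, let $R^+(u)$ be the set of vertices reachable from $u$, including $u$ itself. Choose $u_0 \in S$ maximizing $|R^+(u_0)|$. By Step 2, every vertex reachable from $u_0$ lies in $K_{i_1}$. Now let $w$ be a vertex that is not reachable from $u_0$ but can reach $u_0$. If $w$ were in $S$, then $R^+(w) \supseteq R^+(u_0)$. By maximality this is an equality, so $w \in R^+(u_0)$, a contradiction. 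Hence $w \notin S$, and by Step 1, $w \in V(K_{i_2})$. This gives the second outcome.

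The main obstacle is Step 1, namely ruling out non-typical components. This is exactly where taking three components per signature in a typical set is used: it guarantees that a vertex in a non-typical component has at least three compatible representatives. The maximality trick in Step 3 is needed because $D_x$ need not be strongly connected, or even have comparable vertices on the same side.
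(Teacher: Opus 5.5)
Your proof is correct and follows the paper's structure. You locate every vertex of $W_x \cup W'_x$ in $K_{i_1}$ or $K_{i_2}$ (with the matching part) via \cref{cl:f*-faithful} and the three-per-signature property, show that arcs of $D_x$ propagate membership in $K_{i_1}$ using \cref{cl:consistent} and the crossing condition, and then choose an extremal $u_0$. The only difference is cosmetic: the paper takes $u_0$ in the first strongly connected component, in a topological order of the condensation, that lies in $K_{i_1}$, whereas you maximize $|R^+(u_0)|$ over $S$, which gives the same property without forming the condensation.
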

    
    \begin{subproof}{\cref{cl:exists-crossing-solver}}
        Let $\mathcal{S} = \{S_1, \ldots, S_s\}$ be the set of strongly connected components of $D_x$.
        Let $D'_x$ be the directed acyclic graph on vertex set $\mathcal{S}$ obtained by contracting each $S_i$ to a single vertex.
        It follows from \cref{cl:f*-faithful} that $W_x \cup W'_x \subseteq V(K_{i_1}) \cup V(K_{i_2})$.
        \cref{cl:f*-faithful} also implies that for every $u \in W_x$, if $u \in V(K_{i_1})$ then $u \in P_{j_1}$, and so on.
        Thus, as we argued before, an arc from $u$ to $v$ in $D_x$ means that $u \in V(K_{i_1})$ implies $v \in V(K_{i_1})$.        
        Thus, for every $a \in [s]$, either $S_a \subseteq V(K_{i_1})$ or $S_a \subseteq V(K_{i_2})$.
        Without loss of generality, we can assume that $S_1 \prec S_2 \prec \ldots \prec S_s$ is a topological ordering of $D'_x$.
        
        If $W_x \cup W'_x \subseteq V(K_{i_2})$, we are done, so suppose that $W_x \cup W'_x \not\subseteq V(K_{i_2})$.
        Let $a \in [s]$ be minimum such that $S_a \subseteq V(K_{i_1})$, which is well-defined since $W_x \cup W'_x \not\subseteq V(K_{i_2})$.
        Let $u_0$ be an arbitrary vertex of $S_a$.
        Let $u \in V(D_x)$ be reachable from $u_0$ in $D_x$.
        Since $u_0 \in V(K_{i_1})$ and since there exists a directed path from $u_0$ to $u$ in $D_x$, it follows that $u \in V(K_{i_1})$.
        Let $u \in V(D)$ that cannot be reached from $u_0$ but can reach $u_0$ in $D_x$.
        Let $b \in [s]$ such that $u \in S_b$.
        Since $u$ can reach $u_0 \in S_a$, we have $b \leq a$.
        Since $S_a$ is strongly connected in $D_x$ and $u_0$ cannot reach $u$ in $D_x$, we have $b \neq a$, so $b < a$.
        Thus, by minimality of $a$, we have $S_b \not \subseteq V(K_{i_1})$, so $S_b \subseteq V(K_{i_2})$ so $u \in V(K_{i_2})$.
    \end{subproof}

    \paragraph*{Crossing solvers.}
    A function $\phi : X^* \to V(G) \cup \{\bot\}$ is a \emph{crossing solver} for $(t^*, p^*, \mathcal{R}^*, R^*, G^*/\mathcal{P}^*)$ if the following holds for every $x = (i_1, i_2, j_1, j_2, j'_1, j'_2) \in X^*$: \begin{itemize}
        \item If $\phi(x) = \bot$ then $W_x \cup W'_x \subseteq V(K_{i_2})$.
        \item Otherwise, $\phi(x) \in W_x \cup W'_x$,  every $u \in W_x \cup W'_x$ that can be reached from $\phi(x)$ in $D_x$ satisfies $u \in V(K_{i_1})$, and every $u \in W_x \cup W'_x$ that cannot be reached from $\phi(x)$ in $D_x$ but that can reach $\phi(x)$ in $D_x$ satisfies $u \in V(K_{i_2})$.
    \end{itemize}
    \cref{cl:exists-crossing-solver} exactly says that there exists a crossing solver for $(t^*, p^*, \mathcal{R}^*, R^*, G^*/\mathcal{P}^*)$. Let $\phi^*$ be such a crossing solver.

    In the algorithm, we now guess a function $\phi : X \to V(G) \cup \bot$ such that for every $x \in X$, we have $\phi(x) \in W_x \cup W'_x \cup \{\bot\}$.

    \paragraph*{Updated compatibility.}
    Given \cref{cl:exists-crossing-solver}, it is natural to update the compatibility function $f$ to an \emph{updated compatibility function} $\widehat{f} : V(G) \to 2^R$ as follows.
    For every $u \in V(G)$, $\widehat{f}(u)$ is the maximal subset of $f(u)$ that satisfies the following conditions. 
    For every crossing $x = (i_1, i_2, \cdot, \cdot, \cdot, \cdot) \in X$ such that $u \in W_x \cup W'_x$, we have: \begin{itemize}
        \item If $\phi(x) = \bot$ then $\widehat{f}(u) \subseteq \{v_{i_2, j_2} : j_2 \in [p], (i_2, j_2) \in \mathcal{R}\}$.
        \item Otherwise, let $D_x$ be the crossing digraph for $x$. If $\phi(x)$ can reach $u$ in $D_x$ then $\widehat{f}(u) \subseteq \{v_{i_1, j_1} : j_1 \in [p], (i_1, j_1) \in \mathcal{R}\}$. If $\phi(x)$ cannot reach $u$ but $u$ can reach $\phi(x)$ in $D_x$ then $\widehat{f}(u) \subseteq \{v_{i_2, j_2} : j_2 \in [p], (i_2, j_2) \in \mathcal{R}\}$.
    \end{itemize}
    We then define $\tau : \mathcal{U} \to [t]$ as follows: for every $U \in \mathcal{U}$, $\tau(U) = \{i \in [t] : \forall u \in U, \exists j \in [p] : v_{i, j} \in \widehat{f}(u)\}$.
    Intuitively, $\widehat{f}$ stores where each vertex can go with respect to $\mathcal{P}^*$ and the partition into components of $G+G^*$. Given \cref{cl:conflicting-same-cc}, for every $U \in \mathcal{U}$, we want to put all vertices of $U$ in the same component of $G+G^*$. Then, $\tau(U)$ represents the set of all eligible components (among those from $\mathcal{K}^*$).
    Let $\widehat{f^*}$ be the updated compatibility function obtained from $f^*$ and $\phi^*$, and let $\tau^* : \mathcal{U}^* \to [t^*]$ be the corresponding function.
    It will follow from \cref{cl:updated-compatibility-info} that $\tau^*(U) \neq \emptyset$ for every $U \in \mathcal{U}^*$.

    The next lemma gives a sufficient condition for a part $U \in \mathcal{U}^*$ to be contained in one of the components that were selected in $\mathcal{K}^*$. 
    It turns out that this condition is also necessary, although we do not state it since we will not use it (explicitly).
    
    \begin{claim}\label{cl:updated-compatibility-info}
        If $U \in \mathcal{U}^*$ is such that $|\tau^*(U)| \leq 2$ or some $u \in U$ is not flagged then there exists $i \in \tau^*(U)$ such that $U \subseteq V(K_i)$.
    \end{claim}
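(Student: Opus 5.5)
The plan is to locate the component of $G+G^*$ that contains $U$ and show it is one of the typical components and belongs to $\tau^*(U)$. By \cref{cl:conflicting-same-cc}, $U$ is contained in a single component $K_c$ of $G+G^*$, with $c\in[c^*]$. So it suffices to prove two things: first, that if $c\in[t^*]$ then $c\in\tau^*(U)$; second, that $c\notin[t^*]$ is impossible under either hypothesis of the claim.

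\emph{Step 1: the true component survives the update.} Assume $c\in[t^*]$ and let $u\in U$ with $u\in P_{j_u}$. By \cref{cl:f*-faithful}, $v_{c,j_u}\in f^*(u)$. I check that no restriction defining $\widehat{f^*}$ removes $v_{c,j_u}$. Consider a crossing $x=(i_1,i_2,\dots)\in X^*$ with $u\in W_x\cup W'_x$.
\begin{itemize}
\item If $\phi^*(x)=\bot$, the definition of a crossing solver gives $u\in V(K_{i_2})$. Hence $c=i_2$, and the restriction to $i_2$-representatives keeps $v_{c,j_u}$.
\item If $\phi^*(x)$ reaches $u$ in $D_x$, then $u\in V(K_{i_1})$, so $c=i_1$ and $v_{c,j_u}$ is kept.
\item If $u$ reaches $\phi^*(x)$ but not conversely, then $u\in V(K_{i_2})$, so $c=i_2$ and $v_{c,j_u}$ is kept.
\end{itemize}
Since $u\in U$ was arbitrary, the definition of $\tau^*$ gives $c\in\tau^*(U)$, and we may take $i=c$.

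\emph{Step 2: $c\notin[t^*]$ is impossible.} Suppose $K_c$ is not in the typical set $\mathcal{K}^*$. By the definition of a typical set, at least three components $K_{a_1},K_{a_2},K_{a_3}\in\mathcal{K}^*$ have the same signature as $K_c$.

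Take any $u\in U$ with $u\in P_{j_u}$. Then $P_{j_u}$ lies in this common signature, so $(a_\ell,j_u)\in\mathcal{R}^*$ for each $\ell=1,2,3$. By the second part of \cref{cl:f*-faithful}, $v_{a_1,j_u},v_{a_2,j_u},v_{a_3,j_u}\in f^*(u)$. In particular every $u\in U$ is flagged, which contradicts the hypothesis that some vertex of $U$ is not flagged.

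For the hypothesis $|\tau^*(U)|\le 2$, I use that $|f^*(u)|\ge 3$ for every $u\in U$. Every vertex of a set $W_x\cup W'_x$ has $|f^*|=2$ by definition, so $u$ lies in no such set. Hence no update applies to $u$, and $\widehat{f^*}(u)=f^*(u)$. It follows that $\{a_1,a_2,a_3\}\subseteq\tau^*(U)$, contradicting $|\tau^*(U)|\le 2$.

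The main obstacle is Step 1. There one must make sure that the updated compatibility, which is defined through the crossing solver $\phi^*$, never discards the representative of the true component. This is exactly where the defining guarantees of a crossing solver coming from \cref{cl:exists-crossing-solver} are used, and the case analysis above needs to be checked against the orientation conventions of $D_x$. Step 2 is a direct counting argument using typicality.
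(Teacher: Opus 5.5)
Your proof is correct and follows the paper's argument. Both proofs place $U$ inside a single component via the strong-conflict claim. Both then rule out a non-typical component by exhibiting three same-signature typical components, which makes every $u\in U$ flagged and forces $|f^*(u)|\geq 3$, so no crossing update touches $u$ and $|\tau^*(U)|\geq 3$. Finally, both use the crossing-solver guarantees to show that the true representative $v_{c,j_u}$ survives in $\widehat{f^*}(u)$. The differences are only presentational: you spell out the case analysis over crossing tuples that the paper compresses into ``unpacking the definitions,'' and you treat the two steps in the reverse order.
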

    
    \begin{subproof}{\cref{cl:updated-compatibility-info}}
        Consider such a $U \in \mathcal{U}^*$. 
        By \cref{cl:conflicting-same-cc}, all vertices of $U$ belong to the same component $K_i$ of $G+G^*$, for some $i \in [c^*]$.
        By contradiction, suppose that $i \notin [t^*]$. 
        Since $\mathcal{K}^*$ is a typical set of components of $G+G^*$, there exist distinct $i_1, i_2, i_3 \in [t^*]$ such that $K_{i_1}, K_{i_2}, K_{i_3}$ all have the same signature as $K_i$. 
        Let $u \in U$ and let $j \in [p^*]$ such that $u \in P_j$.
        Then, $v_{i_1, j}, v_{i_2, j}, v_{i_3, j} \in f^*(u)$ by \cref{cl:f*-faithful}, so $u$ is flagged.
        Furthermore, $|f^*(u)| \geq 3$ so $\widehat{f^*}(u) = f^*(u)$.
        Thus, $i_1, i_2, i_3 \in \tau^*(U)$, so $|\tau^*(U)| \geq 3$, a contradiction.
    
        Thus, we have $i \in [t^*]$.
        Let $u \in U$ and let $j \in [p^*]$ such that $u \in P_j$.
        Then, $v_{i, j} \in f^*(u)$ by \cref{cl:f*-faithful}. 
        Using that $\phi^*$ is a crossing solver, it then follows from unpacking the definitions of crossing solver and of updated compatibility function that $v_{i, j} \in \widehat{f^*}(u)$.
        Thus, $i \in \tau^*(U)$.
    \end{subproof}

    \paragraph*{Suitable partition.}
    The next step will be to solve an auxiliary problem to try to compute the partition into components of $G+G^*$.
    We will prove in \cref{cl:C*-cc-G+G'} that if the guesses are $(t^*, p^*, \mathcal{R}^*, R^*, G^*/\mathcal{P}^*)$ then the partition we compute is (roughly) the partition into components of some rank-$k$ perturbation $G'$ of $G$ that minimizes the largest component size.
    
    The \emph{width} of a partition $\mathcal{C}$ is the maximum size of a part of $\mathcal{C}$.
    Say that a partition $\mathcal{C}$ of $V(G)$ is \emph{suitable} if it satisfies the following constraints: \begin{itemize}
        \item $\mathcal{U}$ refines $\mathcal{C}$, 
        \item For every $i \in [t]$ there exists $C_i \in \mathcal{C}$ such that $v_{i, j} \in C_i$ for every $j \in [p]$ such that $(i, j) \in \mathcal{R}$.
        \item If $U \in \mathcal{U}$ is such that $|\tau(U)| \leq 2$ or some $u \in U$ is not flagged then there exists $i \in \tau(U)$ such that $U \subseteq C_i$.
    \end{itemize}
    The algorithm computes a suitable partition $\mathcal{C}$ of $V(G)$ of minimum width.
    If there is no such partition, the algorithm aborts. 
    Note that the partition with a single part equal to $V(G)$ is always suitable, unless there exists $U \in \mathcal{U}$ with $\tau(U) = \emptyset$, in which case there is no suitable partition. 
    Thus, the algorithm only aborts if there exists $U \in \mathcal{U}$ with $\tau(U) = \emptyset$.
    Note that in the case where we guessed $(t^*, p^*, \mathcal{R}^*, R^*, G^*/\mathcal{P}^*)$ and $\phi^*$, the partition into components of $G+G^*$ is a suitable partition by \cref{cl:conflicting-same-cc}, by definition of $R^*$ and by \cref{cl:updated-compatibility-info}.

    We now argue that such a partition $\mathcal{C}$ can be computed efficiently.
    Note that there is always a partition of minimum width where $C_i \neq C_{i'}$ for all $i \neq i'$, so we can restrict our attention to such partitions.
    Similarly, there is always a partition of minimum width where every $U \in \mathcal{U}$ such that $|\tau(U)| \geq 3$ and every $u \in U$ is flagged forms its own part of $\mathcal{C}$, so we can restrict our attention to such partitions. 
    Then, denote the other parts of $\mathcal{U}$ by $U_1, \ldots, U_r$.
    For each $i \in [r]$, let $w_i \coloneqq |U_i|$ and $L_i \coloneqq \tau(U_i) \subseteq [t]$.
    Note that $\sum_{i = 1}^r|U_i| \leq n$.
    Observe now that our problem is simply an instance of \textsc{weighted list balancing}, as we want to find an assignment $\alpha : [r] \to [t]$ of the $U_i$ to $C_1, \ldots, C_t$ such that $\alpha(i) \in L_i$ for every $i \in [r]$.
    Note that the fact that $v_{i, j} \in C_i$ for all $v_{i, j} \in R$ will follow from the fact that $f(v_{i, j}) = \{v_{i, j}\}$ so the part $U \in \mathcal{U}$ that contains $v_{i, j}$ satisfies $\tau(U) \subseteq \{i\}$.
    By \cref{lem:list-balancing-easy}, we can then find the desired suitable partition $\mathcal{C}$ in time $\mathcal{O}(t\cdot |\mathcal{U}| \cdot |V(G)|^{t-1}) = n^{f(k)}$ for some computable function $f$.

    \paragraph*{Perturbation.}

    We would now like to say that $\mathcal{C}$ can be realized as the partition into components of some rank-$k$ perturbation $G+\widehat{G}$ of $G$. 
    Note that since we are given $G$ and the partition $\mathcal{C}$, this uniquely determines the edges of $\widehat{G}$ between vertices of $G$ that belong to different parts of $\mathcal{C}$.
    Thus, it suffices to figure out whether there is a way to choose the edges inside each part of $\mathcal{C}$ so that the resulting looped graph has rank at most $k$.
    It follows from the work of Peeters \cite{P96} that the general problem of deciding whether a partial binary matrix can be completed to a binary matrix of rank at most $k$ is already \NP-hard for $k=3$, and we do not see how to solve this problem more efficiently in our context.

    However, with all the information that we have gathered up to this point, there is a natural way to define a candidate looped graph $\widehat{G}$ of rank at most $k$ such that the partition into components of $G+\widehat{G}$ might refine $\mathcal{C}$.

    We define a function $\psi : V(G) \to [p]$.
    We will then define $\widehat{G}$ as the graph on vertex set $V(G)$ where $u, u' \in V(G)$ are adjacent if and only if $\psi(u)$ and $\psi(u')$ are adjacent in $H$.
    Since $H$ has rank at most $k$, it follows from \cref{obs:rank-quotient} that $\widehat{G}$ has rank at most $k$.

    Let $u \in V(G)$ and let $U \in \mathcal{U}$ such that $u \in U$.
    Suppose first that $|\tau(U)| \geq 3$ and every $u \in U$ is flagged.
    Then, by \cref{cl:flagged-unique}, there exists a unique $j \in [p]$ such that $|\{i \in [t] : v_{i, j} \in f(u)\}| \geq 2$.
    We set $\psi(u) = j$.
    Otherwise, by assumption there exists $i \in \tau(U)$ such that $U \subseteq C_i$.
    Since $i \in \tau(U)$, there exists $j \in [p]$ such that $v_{i, j} \in \widehat{f}(u) \subseteq f(u)$. 
    Such a $j$ is unique since $f$ is functional.
    We set $\psi(u) = j$.

    The algorithm then verifies that the partition into components of $G+\widehat{G}$ refines $\mathcal{C}$.
    If so, it returns the width of $\mathcal{C}$, and if not it aborts.
    Then, the algorithm outputs the minimum width that was returned among all guesses for which the corresponding run did not abort.

    \paragraph*{Correctness.}

    Let $OPT$ be the solution of the instance $(G, k)$, so $OPT$ is the maximum size of a component of $G+G^*$ since $G^*$ is $k$-optimal for $G$.
    Let $w$ be the value output by the algorithm. 
    We now argue that $w = OPT$.

    First, observe that for every guess, if the algorithm returns some value $v$ then there is a rank-$k$ perturbation of $G$ whose maximum component has size at most $v$.

    Consider now the run of the algorithm where the guesses are $(t^*, p^*, \mathcal{R}^*, R^*, G^*/\mathcal{P}^*)$ and $\phi^*$.
    As before, let $f^*, \mathcal{U}^*, X^*, \widehat{f^*}, \tau^*, \mathcal{C}^*, \psi^*$ be the values computed by the algorithm during this run.
    Note that $R^*$ is consistent with $G^*/\mathcal{P}^*$ by \cref{cl:consistent}, $f^*$ is faithful and functional by \cref{cl:f*-faithful,f*-functional} and the partition into components of $G+G^*$ is a candidate for the partition $\mathcal{C}^*$ as we already argued. 
    Thus, this run does not abort before computing $\widehat{G}$, and the width of the partition $\mathcal{C}^*$ is at most $OPT$. We now show that this run does not abort at all.

    \begin{claim}\label{cl:C*-cc-G+G'}
        The partition into components of $G+\widehat{G}$ refines $\mathcal{C}^*$.
    \end{claim}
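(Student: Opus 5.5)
The claim amounts to: whenever $u,u'$ lie in different parts of $\mathcal{C}^*$, they are non-adjacent in $G+\widehat{G}$. Equivalently, $uu'\in E(G)$ if and only if $\psi^*(u)\psi^*(u')\in E(G^*/\mathcal{P}^*)$. I will prove this by finding, for each of $u$ and $u'$, a representative in $R^*$ that stands in for it, with the two representatives in distinct components of $\mathcal{K}^*$. Consistency (\cref{cl:consistent}) then translates the adjacency of the representatives into adjacency in $H=G^*/\mathcal{P}^*$.

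\textbf{Step 1: attach a representative to each vertex.} For each $u$, I want a representative $r(u)=v_{i,\psi^*(u)}\in f^*(u)$ whose index $i$ can be chosen freely among at least two values, or is forced to be the index of the part of $\mathcal{C}^*$ containing $u$. There are two cases.
\begin{itemize}
\item If the class $U\ni u$ has $|\tau^*(U)|\geq 3$ and all its vertices are flagged, then by construction $\psi^*(u)=j$ with at least two indices $i$ such that $v_{i,j}\in f^*(u)$. So $r(u)$ can be chosen with either of two different component indices.
\item Otherwise $U\subseteq C_i$ for some $i\in\tau^*(U)$, and $v_{i,\psi^*(u)}\in\widehat{f^*}(u)\subseteq f^*(u)$. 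Here I take $r(u)=v_{i,\psi^*(u)}$, with $i$ the index of $u$'s part.
\end{itemize}
In every case $u$ is compatible with $r(u)$. Compatibility means $u$ and $r(u)$ have the same $G$-adjacency to every representative in a different component index.

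\textbf{Step 2: compare $u$ and $u'$.} Take $u,u'$ in different parts of $\mathcal{C}^*$, so they lie in different classes of $\mathcal{U}^*$ and are in particular not strongly conflicting. I aim to choose $r(u)=v_{i,j}$ and $r(u')=v_{i',j'}$ with $i\neq i'$ and show $uu'\in E(G)\iff v_{i,j}v_{i',j'}\in E(G)$. Consistency then gives this is equivalent to $jj'\in E(H)$, which is exactly non-adjacency in $G+\widehat{G}$.

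Choosing $i\neq i'$ needs some care.
\begin{itemize}
\item If both vertices are forced, their indices are the indices of distinct parts $C_i\neq C_{i'}$. These indices are distinct because the $C_i$ were chosen distinct.
\item If at least one vertex has a free choice, I pick its index to differ from the other's.
\end{itemize}
Given $i\neq i'$, I expect the equivalence to follow by a double application of compatibility: $uu'$ matches $u\,r(u')$, which matches $r(u)\,r(u')$. The catch is that this chain uses compatibility of $u$ with $r(u)$ on $r(u')$, but compatibility is defined only with respect to representatives, not the vertex $u'$ itself. So the real content is that $u$ and $u'$ must not conflict on the chosen pair.

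\textbf{Step 3 (main obstacle): rule out conflicts on the chosen pair.} Not strongly conflicting means only that some pair is non-conflicting, not necessarily the chosen one. So I have to exclude weak conflicts on the specific pair picked in Step 2. By \cref{cl:setup-weakly-conflicting}, a weak conflict forces $|f^*(u)|=|f^*(u')|=2$ with a crossing tuple $x$ such that $u\in W_x$ and $u'\in W'_x$.
\begin{itemize}
\item These vertices are not flagged, since each has only one representative per part index. Hence both fall in the forced case.
\item In the forced case, $\widehat{f^*}$ was pruned by the crossing solver $\phi^*$ along the digraph $D_x$. The remaining options are closed under the implications encoded by the arcs: an arc $u\to u'$ means $u$ in $K_{i_1}$ forces $u'$ in $K_{i_1}$.
\item I will check that any forced pair $(r(u),r(u'))$ consistent with these reachability constraints uses a non-conflicting combination, i.e., the one matching the $G$-adjacency of $uu'$.
\end{itemize}
This case analysis on reachability from $\phi^*(x)$, including the $\bot$ case, is where I expect most of the work.

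I also need to rule out that the free choice in Step 2 lands on a conflicting pair. For flagged vertices, $|f^*(u)|\geq 3$, so by \cref{cl:setup-weakly-conflicting} there is no weak conflict at all. Then every cross pair agrees with $uu'$, unless the pair is strongly conflicting, which is excluded.
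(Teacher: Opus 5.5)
Your outline follows the paper's proof: the case split on ``$|\tau^*(U)|\ge 3$ and all vertices flagged'' versus the forced case, the reduction to ``no conflict on the chosen representative pair'', and then \cref{cl:consistent} with the definition of $\widehat{G}$. But it stops exactly where the content is. The weakly conflicting case is only announced. The property you assert for it, that the options left by $\widehat{f^*}$ are closed under the arc implications of $D_x$, does not follow from the definition of $\widehat{f}$ alone. A vertex that neither reaches nor is reached from $\phi^*(x)$ receives no restriction from $x$. If both $u$ and $u'$ were of this kind, nothing would forbid the conflicting placement.

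The missing ingredient is that $D_x$ orients a complete bipartite graph: every vertex of $W_x$ and every vertex of $W'_x$ are joined by exactly one arc. Take $x$ crossing with $u\in W_x$, $u'\in W'_x$, as you set it up. Then the conflicting placement is precisely ``tail $a$ of the arc between $u,u'$ at index $i_1$, head $b$ at $i_2$''. Let $A\ni a$ and $B\ni b$ be their classes in $\mathcal{U}^*$.
\begin{itemize}
\item If $\phi^*(x)=\bot$, then $\tau^*(A)\subseteq\{i_2\}$.
\item Otherwise put $w=\phi^*(x)$. If $w$ reaches $a$ or $b$, it reaches $b$, so $\tau^*(B)\subseteq\{i_1\}$.
\item If $w$ reaches neither, consider whichever of $a,b$ lies on the other side of the bipartition from $w$. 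The arc joining it to $w$ must point into $w$. So $a$ reaches $w$ (directly or through $b$) while $w$ does not reach $a$, giving $\tau^*(A)\subseteq\{i_2\}$.
\end{itemize}
Either way the forced indices cannot be $(i_1,i_2)$. This is the paper's case analysis, and your proposal needs it written out.

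Two of your justifications are also wrong, although the conclusions survive. Being flagged does not imply $|f^*(u)|\ge 3$; for instance $f^*(u)=\{v_{i_1,j},v_{i_2,j}\}$ is flagged. Conversely, weakly conflicting vertices can be flagged, namely when $j_1=j_2$. In both places the right condition is $|\tau^*(U)|\ge 3$. It gives $|f^*(u)|\ge|\widehat{f^*}(u)|\ge 3$ for the free vertices, so \cref{cl:setup-weakly-conflicting} rules out weak conflicts there. In the other direction, a weakly conflicting vertex has $|f^*(u)|=2$, hence $|\tau^*(U)|\le 2$, hence lies in the forced case.

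Finally, in your naive chain the step that fails is the first one, comparing $uu'$ with $u\,r(u')$. It would need compatibility of $u'$ with $r(u')$ tested at the non-representative $u$. The second step, comparing $u\,r(u')$ with $r(u)\,r(u')$, is a legitimate use of compatibility.
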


    \begin{subproof}{\cref{cl:C*-cc-G+G'}}
        It suffices to prove that any two vertices that are adjacent in $G+\widehat{G}$ are in the same part of $\mathcal{C}^*$.
        Let $uu' \in E(G+\widehat{G})$, let $U \in \mathcal{U}^*$ (resp. $U' \in \mathcal{U}^*$) be the part of $\mathcal{U}^*$ that contains $u$ (resp. $u'$), and let $j = \psi^*(u)$ and $j' = \psi^*(u')$.
        We show that $u$ and $u'$ belong to the same part of $\mathcal{C}^*$.
        Note that $\mathcal{U}^*$ refines $\mathcal{C}^*$ by definition of $\mathcal{C}^*$, so if $U = U'$ then we are done. 
        From now on, we assume that $U \neq U'$, so $u$ and $u'$ are not strongly conflicting.
        
        Suppose first that $|\tau^*(U)| \geq 3$ and every $v \in U$ is flagged.
        Since $|\tau^*(U)| \geq 3$, we have $|f^*(u)| \geq 3$.
        Then, $u$ and $u'$ are not weakly conflicting by \cref{cl:setup-weakly-conflicting}.
        By definition of $\psi^*$, we have $|\{i \in [t^*] : v_{i, j} \in f^*(u)\}| \geq 2$ and there exists $i' \in [t^*]$ such that $v_{i', j'} \in f^*(u')$.
        Thus, there exist $i \neq i' \in [t^*]$ such that $v_{i, j} \in f^*(u)$ and $v_{i', j'} \in f^*(u')$.
        Using that $u$ and $u'$ are not conflicting, that $R^*$ is consistent with $G^*/\mathcal{P}^*$ and the definition of $\widehat{G}$, we have: \begin{align*}
            uu' \in E(G) &\iff v_{i, j}v_{i', j'} \in E(G) \\
                &\iff jj' \in E(G^*/\mathcal{P}^*) \\
                &\iff uu' \in E(\widehat{G}).
        \end{align*}
        This contradicts that $uu' \in E(G+\widehat{G})$.
    
        By symmetry, it is also not the case that $|\tau^*(U')| \geq 3$ and every $v' \in U'$ is flagged.
        Thus, by definition of $\mathcal{C}^*$, there exist $i \in \tau^*(U)$ and $i' \in \tau^*(U')$ such that $U \subseteq C_i$ and $U' \subseteq C_{i'}$.
        By definition of $\psi^*$, we have $v_{i, j} \in f^*(u)$ and $v_{i', j'} \in f^*(u')$.
        By contradiction, suppose $i \neq i'$.
        Consider first the case where $u$ and $u'$ are not conflicting.
        Then, using also that $R$ is consistent with $H$ and the definition of $\widehat{G}$, we have: \begin{align*}
            uu' \in E(G) &\iff v_{i, j}v_{i', j'} \in E(G) \\
                &\iff jj' \in E(G^*/\mathcal{P}^*) \\
                &\iff uu' \in E(\widehat{G}).
        \end{align*}
        Again, this contradicts that $uu' \in E(G+\widehat{G})$.
        Consider now the case where $u$ and $u'$ are conflicting, in which case they must be weakly conflicting.
        By \cref{cl:setup-weakly-conflicting}, there exists $x = (i_1, i_2, j_1, j_2, j'_1, j'_2) \in X^*$ such that $u \in W_x$ and $u' \in W'_x$, say.
        Recall that $D_x$ denotes the crossing digraph for $x$.
        Suppose first that $\phi^*(x) = \bot$. 
        By definition of $\widehat{f^*}$, we have $\widehat{f^*}(u) \subseteq \{v_{i_2, j_2}\}$ and $\widehat{f^*}(u') \subseteq \{v_{i_2, j'_2}\}$. 
        But then $\tau^*(U), \tau^*(U') \subseteq \{i_2\}$ so $i = i_2 = i'$, a contradiction.
        
        Otherwise, we have $\phi^*(x) \in W_x \cup W'_x$.
        By symmetry between $u$ and $u'$, we may assume that $\phi^*(x) \in W_x$. 
        Note that any vertex of $W_x$ and any vertex of $W'_x$ are connected by an arc in $D_x$.
        Then, either $\phi^*(x)$ can reach $u'$ in $D_x$ or $u'$ can reach $\phi^*(x)$ in $D_x$.
        By definition of $\widehat{f}$, in the first case we have $\tau^*(U') \subseteq \{i_1\}$, and if the first case does not hold we have $\tau^*(U') \subseteq \{i_2\}$.
        
        Suppose that $\phi^*(x)$ can reach $u'$ in $D_x$, so $\tau^*(U') \subseteq \{i_1\}$ so $i' = i_1$ and $j' = j'_1$. 
        Since $i \neq i'$, we have $i = i_2$ and $j = j_2$.
        Since $x$ is crossing, we have $v_{i_1, j'_1}v_{i_2, j_2} \notin E(G)$ so $j'_1j_2 \notin E(G^*/\mathcal{P}^*)$ since $R^*$ is consistent with $G^*/\mathcal{P}^*$ by \cref{cl:consistent}. 
        Thus, $uu' \notin E(\widehat{G})$ by definition of $\widehat{G}$, so $uu' \in E(G)$. 
        Thus, there is an arc from $u'$ to $u$ in $D_x$, so $\phi^*(x)$ can reach $u$ in $D_x$ so $\tau^*(U) \subseteq \{i_1\}$ and $i = i_1$, a contradiction.
        
        Suppose finally that $\phi^*(x)$ cannot reach $u'$ in $D_x$, so $u'$ can reach $\phi^*(x)$ in $D_x$, so $\tau^*(U') \subseteq \{i_2\}$ so $i' = i_2$ and $j' = j'_2$. 
        Since $i \neq i'$, we have $i = i_1$ and $j = j_1$.
        Since $x$ is crossing, we have $v_{i_1, j_1}v_{i_2, j'_2} \in E(G)$ so $j_1j'_2 \in E(G^*/\mathcal{P}^*)$ since $R^*$ is consistent with $G^*/\mathcal{P}^*$ by \cref{cl:consistent}. 
        Thus, $uu' \in E(\widehat{G})$ by definition of $\widehat{G}$, so $uu' \notin E(G)$. 
        Thus, there is an arc from $u$ to $u'$ in $D_x$, so $u$ can reach $u'$ in $D_x$.
        Furthermore, $\phi^*(x)$ cannot reach $u$ in $D_x$, otherwise $\phi^*(x)$ could reach $u'$ in $D_x$.
        Thus, $\tau^*(U) \subseteq \{i_2\}$ and $i=i_2$, a contradiction.
        We obtained a contradiction in all cases, so we must have $i = i'$, so $u \in C_i = C_{i'} \ni u'$, as desired.
    \end{subproof}

    Therefore, when the guesses are $(t^*, p^*, \mathcal{R}^*, R^*, G^*/\mathcal{P}^*)$ and $\phi^*$, the run returns a value $v \leq OPT$, so $w \leq OPT$.
    This concludes the proof that $w = OPT$, which proves the correctness of the algorithm.

    \paragraph*{Running time.}

    Let $n$ be the number of vertices of $G$.
    It follows from the description and analysis of the algorithm that each run can be performed in time $n^{f(k)}$ for some (computable) function $f$ (actually, all steps can be performed in \FPT{} time except the computation of $\mathcal{C}$).
    The number of possible choices for $t, p, \mathcal{R}$ and $H$ is bounded by a function of $k$.
    Thus, the size of $R$ is bounded by a function of $k$.
    The size of $X$ is also bounded by a function of $k$ so the ``size'' of $\phi$ is also bounded by a function of $k$.
    Thus, the total number of runs of the algorithm is $n^{g(k)}$ for some (computable) function $g$.
    Thus, this algorithm is indeed an \XP{} algorithm solving \textsc{rank integrity}.
\end{proof}

\section{\textsc{Rank integrity} is \texorpdfstring{\W$[1]$}{W[1]}-hard}
\label{sec:hardness}

\subsection{Some tools}

We start with some preliminary results.
We view $\{0, 1\}^n$ as a vector space over $\text{GF}(2)$.
For every $n \geq 1$, let $H_n$ be the graph on vertex set $\{0, 1\}^n$ where $x, y \in \{0, 1\}^n$ are adjacent if and only if $x \cdot y = 0$ (where the scalar product is done in $\text{GF}(2)$).
Note that the adjacency matrix of $H_n$ is closely related to the $n$-th Sylvester-Hadamard matrix.

Given two disjoint subsets $X, Y$ of vertices of a graph $G$, we denote by $G[X, Y]$ the bipartite graph on vertex set $X \cup Y$ with bipartition $(X, Y)$, and where the edges between $X$ and $Y$ are the same as in $G$.

\begin{lemma}\label{lem:small-complete-empty}
    Let $n \geq 1$ and let $X, Y$ be disjoint subsets of $\{0, 1\}^n$ such that the bipartite graph $H_n[X, Y]$ is complete or empty.
    Then, $|X| \cdot |Y| \leq 2^{n}$.
\end{lemma}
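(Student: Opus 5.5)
The plan is to turn the adjacency condition into an orthogonality condition over $\text{GF}(2)$ and then use the fact that orthogonal subspaces of $\{0,1\}^n$ have dimensions summing to at most $n$. Since $X$ and $Y$ are disjoint, only pairs $x \in X$, $y \in Y$ with $x \neq y$ matter, so loops play no role. We treat the two cases of the statement separately.

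\emph{Complete case.} Here $x \cdot y = 0$ for every $x \in X$ and $y \in Y$. Let $S_X = \spn(X)$ and $S_Y = \spn(Y)$. By bilinearity of the scalar product, $S_Y \subseteq S_X^{\perp}$. Over $\text{GF}(2)$ we still have $\dim S_X^{\perp} = n - \dim S_X$, because $S_X^{\perp}$ is the kernel of a linear map of rank $\dim S_X$, even though $S_X \cap S_X^\perp$ may be nontrivial. Hence $\dim S_X + \dim S_Y \leq n$, and so
\[
|X|\cdot|Y| \leq 2^{\dim S_X}\cdot 2^{\dim S_Y} \leq 2^n .
\]

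\emph{Empty case.} Here $x \cdot y = 1$ for every $x \in X$ and $y \in Y$. We reduce to the complete case by translating. Fix $x_0 \in X$ and $y_0 \in Y$, and set $X' = \{x - x_0 : x \in X\}$ and $Y' = \{y - y_0 : y \in Y\}$. For $x \in X$ and $y \in Y$, expanding gives
\[
(x-x_0)\cdot(y-y_0) = x\cdot y - x\cdot y_0 - x_0\cdot y + x_0 \cdot y_0 = 1-1-1+1 = 0 .
\]
So every vector of $X'$ is orthogonal to every vector of $Y'$. Translation is a bijection, so $|X'| = |X|$ and $|Y'| = |Y|$. Now $X'$ and $Y'$ need not be disjoint (both contain $0$), but the span argument above never used disjointness. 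It gives $\dim\spn(X') + \dim\spn(Y') \leq n$, and hence $|X|\cdot|Y| = |X'|\cdot|Y'| \leq 2^n$.

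The main point needing care is the empty case. The translation trick is the key step, and the reason it works is that the complete-case argument only uses orthogonality of the two sets, not the graph structure or disjointness. The other subtlety is the characteristic-$2$ dimension formula for $S^\perp$, which holds as stated above.
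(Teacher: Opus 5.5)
Your proof is correct and follows essentially the paper's argument: turn the condition into orthogonality, then use $\dim U + \dim U^{\perp} = n$ for the non-degenerate dot product over $\text{GF}(2)$. The only cosmetic difference is that the paper treats both cases at once. It notes that $x \cdot y$ is constant on $X \times Y$ and translates only $X$ by $x_0$, which gives $Y \subseteq \spn(X - x_0)^{\perp}$ directly, so your case split and the translation of $Y$ are not needed.
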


\begin{proof}
    Since $H_n[X, Y]$ is complete or empty, for all $x, x' \in X$ and $y, y' \in Y$ we have $x \cdot y = x' \cdot y'$.
    If $X = \emptyset$ then we are done, so suppose $X \neq \emptyset$.
    Fix $x_0 \in X$ arbitrarily. 
    Then, for all $x \in X$ and $y \in Y$ we have $(x-x_0)\cdot y = x_0 \cdot y - x \cdot y = 0$.
    
    Let $X - x_0 \coloneqq \{x - x_0 : x \in X\} \subseteq \{0, 1\}^n$ and let $U \coloneqq \spn(X-x_0) \subseteq \{0, 1\}^n$.
    We showed that $Y \subseteq (X - x_0)^{\perp} = U^{\perp}$.
    The scalar product is a non-degenerate bilinear form over $\{0, 1\}^n$ so $\dim(U) + \dim(\spn(Y)) \leq \dim(U) + \dim(U^{\perp}) = n$.
    Therefore, $|X|\cdot|Y| \leq 2^{\dim(U)} \cdot 2^{n - \dim(U)} = 2^n$.
\end{proof}

A \emph{cut} of a graph $G$ is a partition $(A, B)$ of $V(G)$.
The \emph{rank} of the cut $(A, B)$ is the rank of the adjacency matrix between $A$ and $B$ in $G$.

\begin{lemma}\label{lem:small-rank-small-side}
    Let $n \geq 1$ and let $(A, B)$ be a cut of $H_n$ of rank at most $k$.
    Then, $\min(|A|, |B|) \leq 2^{2k+1}$.
\end{lemma}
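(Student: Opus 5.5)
Since the cut $(A,B)$ has rank at most $k$, the $A\times B$ submatrix $M$ of the adjacency matrix of $H_n$ has at most $2^k$ distinct rows and at most $2^k$ distinct columns. We therefore partition $A$ into at most $2^k$ classes $A_1,\dots,A_a$ according to their rows in $M$, and $B$ into at most $2^k$ classes $B_1,\dots,B_b$ according to their columns. For each pair $(A_s,B_r)$ every vertex of $A_s$ has the same neighbourhood in $B$, and symmetrically; hence $H_n[A_s,B_r]$ is complete or empty. This is the same argument as \cref{lem:parts-complete-anti}, applied to the bipartite adjacency between twin classes. By \cref{lem:small-complete-empty} we conclude that $|A_s|\cdot|B_r|\le 2^n$ for all $s,r$.

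Assume without loss of generality that $|A|\le|B|$. We pick a largest class $A_s$ of $A$ and a largest class $B_r$ of $B$, so that $|A_s|\ge |A|/2^k$ and $|B_r|\ge |B|/2^k$. Then
\[
\frac{|A|\,|B|}{2^{2k}}\le |A_s|\,|B_r|\le 2^n .
\]
Since $|A|+|B|=2^n$ and $|B|\ge 2^{n-1}$, this gives $|A|\cdot 2^{n-1}\le 2^{2k}\cdot 2^n$, that is, $|A|\le 2^{2k+1}$. Hence $\min(|A|,|B|)\le 2^{2k+1}$, as claimed.

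The degenerate cases cause no trouble. If $A$ or $B$ is empty, the bound holds trivially. If $n$ is small, the bound $2^{2k+1}\ge 2\ge$ is irrelevant since $\min\le 2^{n-1}$; no special treatment is needed, because the computation above is valid whenever both sides are nonempty.

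The only step needing care is the twin-class bookkeeping: counting distinct rows of a rank-$\le k$ binary matrix gives at most $2^k$ classes, and these row and column classes form complete/empty rectangles. That step is routine, so there is no real obstacle. The main content is \cref{lem:small-complete-empty}, which we take as given.
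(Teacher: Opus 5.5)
Your proof is correct and follows essentially the paper's argument: split $A$ and $B$ into at most $2^k$ twin classes each, apply \cref{lem:small-complete-empty} to each complete-or-empty rectangle, and combine with $|B|\ge 2^{n-1}$. The only difference is cosmetic. You bound $|A|\,|B|$ via the largest classes by pigeonhole, whereas the paper sums $|A_i|\cdot|B_j|\le 2^n$ over all pairs; both give $|A|\,|B|\le 2^{2k}\cdot 2^n$.
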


\begin{proof}
    Since the cut $(A, B)$ has rank at most $k$, we can partition $A$ into $r \leq 2^k$ parts $A_1, \ldots, A_{r}$ and $B$ into $s \leq 2^k$ parts $B_1, \ldots, B_s$ such that each bipartite graph $H_n[A_i, B_j]$ is complete or empty.
    Thus, \cref{lem:small-complete-empty} implies that $|A_i| \cdot |B_j| \leq 2^n$ for all $i \in [r]$ and $j \in [s]$.
    Altogether, this implies $|A| \cdot |B| \leq 2^{2k} \cdot 2^n$.
    Without loss of generality, we may assume that $|A| \leq |B|$ so $|B| \geq 2^{n-1}$ since $(A, B)$ is a partition of $\{0, 1\}^n$.
    Then, $|A| \cdot |B| \geq |A| \cdot 2^{n-1}$ so $|A| \cdot 2^{n-1} \leq 2^{2k} \cdot 2^n$ so $|A| \leq 2^{2k+1}$.
\end{proof}

We now show that the graph $H_n$ is robustly connected with respect to perturbations of small rank.

\begin{lemma}\label{lem:exists-giant-comp}
    Let $n \geq 2k+3$ and let $H_n+\widehat{H}$ be a rank-$k$ perturbation of $H_n$.
    Then, $H_n + \widehat{H}$ has a component of size at least $2^n-2^{2k+1}$.
\end{lemma}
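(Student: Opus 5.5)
Let $G' = H_n + \widehat{H}$, where $\widehat{H}$ is a looped graph of rank at most $k$, and let $\widehat{\mathcal{P}}$ be its type partition. The plan is to reduce to \cref{lem:small-rank-small-side}: show that any union $A$ of components of $G'$ gives a cut $(A, B)$ of $H_n$ of rank at most $k$, and then group components to find one large component.

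\textbf{Step 1: unions of components give low-rank cuts.} Fix a union $A$ of components of $G'$ and set $B = V(H_n)\setminus A$. No edges of $G'$ run between $A$ and $B$. Hence the $A\times B$ submatrix of the adjacency matrix of $H_n$ equals the $A\times B$ submatrix of the adjacency matrix of $\widehat{H}$. The latter is a submatrix of a rank-$k$ matrix, so the cut $(A,B)$ of $H_n$ has rank at most $k$. By \cref{lem:small-rank-small-side}, we get $\min(|A|,|B|)\le 2^{2k+1}$.

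\textbf{Step 2: grouping argument.} Suppose for contradiction that every component of $G'$ has size less than $2^n - 2^{2k+1}$. Order the components arbitrarily and add them one at a time to $A$, starting from $A=\emptyset$. At every stage, Step 1 gives $|A|\le 2^{2k+1}$ or $|A|\ge 2^n-2^{2k+1}$. Initially $|A|=0$ and finally $|A|=2^n$. So at some step $|A|$ jumps from at most $2^{2k+1}$ to at least $2^n-2^{2k+1}$ by adding a single component $C$. That forces
\[|C|\ge 2^n - 2^{2k+2}.\]
The condition $n\ge 2k+3$ guarantees $2^{2k+1} < 2^n - 2^{2k+1}$, so the two regimes are genuinely separated and the jump is well defined.

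\textbf{Step 3: upgrading the bound.} Step 2 alone only gives $2^n-2^{2k+2}$, so one more step is needed. Apply Step 1 to $A = C$ itself. Since $|C|\ge 2^n-2^{2k+2} > 2^{2k+1}$ (using $n\ge 2k+3$), the lemma forces $|V\setminus C|\le 2^{2k+1}$. Therefore $|C|\ge 2^n-2^{2k+1}$, which proves the statement.

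\textbf{Main obstacle.} This is the only delicate point: making sure the threshold arithmetic works with $n \ge 2k+3$. Concretely, we need $2^n - 2^{2k+2} > 2^{2k+1}$, i.e.\ $2^n > 3\cdot 2^{2k+1}$, which holds since $2^n\ge 2^{2k+3}=4\cdot 2^{2k+1}$.
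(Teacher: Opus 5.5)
Your proof is correct and follows essentially the same approach as the paper: both use the fact that any union of components of $H_n+\widehat{H}$ gives a cut of $H_n$ of rank at most $k$, and then group components to apply \cref{lem:small-rank-small-side}. The difference is only in the mechanics. The paper argues by contradiction with a case split (either a single component larger than $2^{n-2}$, or a minimal union of components of size at least $2^{n-1}$), whereas your incremental jump followed by the bootstrap in Step 3 proves the bound directly, so the contradiction hypothesis you state in Step 2 is never actually used.
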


\begin{proof}
    By contradiction, suppose that all the components of $H_n + \widehat{H}$ have less than $2^n-2^{2k+1}$ vertices.
    If one component has more than $2^{n-2}$ vertices, let $A$ be the vertex set of such a component. Then, $|A|, |V(H_n) \setminus A| > 2^{2k+1}$ and there is no edge between $A$ and $V(H_n) \setminus A$ in $H_n + \widehat{H}$.
    If all components have size at most $2^{n-2}$, by considering a minimal collection of components whose union has size at least $2^{n-1}$, we obtain a set $A$ satisfying $2^{n}/2 \leq |A| <3 \cdot 2^n/4$. 
    Thus, $|A|, |V(H_n) \setminus A| > 2^{2k+1}$ and there is no edge between $A$ and $V(H_n) \setminus A$ in $H_n + \widehat{H}$.
    In both cases, the graph $(H_n + \widehat{H})[A, V(H_n) \setminus A]$ is edgeless, so $H_n[A, V(H_n) \setminus A] = \widehat{H}[A, V(H_n) \setminus A]$, which implies that $H_n[A, V(H_n) \setminus A]$ has rank at most $k$.
    Thus, $(A, V(H_n) \setminus A)$ is a cut of $H_n$ of rank at most $k$ with $\min(|A|, |V(H_n) \setminus A|) > 2^{2k+1}$, which contradicts \cref{lem:small-rank-small-side}.
\end{proof}

For a graph $G$, the \emph{vertex-edge incidence matrix} of $G$ is the matrix $I_G$ whose rows are indexed by $V(G)$, whose columns are indexed by $E(G)$, and such that $I_G[v, e] = 1$ if $v$ and $e$ are incident and $I_G[v, e] = 0$ otherwise.
The \emph{vertex-edge incidence graph} of $G$ is the bipartite graph on vertex set $V(G) \cup E(G)$ with bipartition $(V(G), E(G))$, where $v \in V(G)$ and $e \in E(G)$ are adjacent if and only if they are incident in $G$.
Note that this graph is simply the $1$-subdivision of $G$.
Observe that the bipartite adjacency matrix of this graph is the vertex-edge incidence matrix of $G$.

It is well-known that, for all $k \geq 1$, the vertex-edge incidence matrix of $K_k$ has rank $k-1$.
The next lemma essentially says that there is a looped graph $G_k$ of rank $k-1$ on vertex set $V(K_k) \cup E(K_k)$ such that $G_k[V(K_k), E(K_k)]$ is the vertex-edge incidence graph of $K_k$.
In other words, the vertex-edge incidence graph of $K_k$, which has ``bipartite rank'' $k-1$, can be completed to a looped graph of rank $k-1$.

\begin{lemma}\label{lem:rk-Kk1}
    Let $k \geq 1$ and let $B_k$ be the vertex-edge incidence matrix of $K_k$.
    There exist symmetric matrices $A_k$ and $C_k$ such that the matrix \[ M_k\coloneqq
\begin{bmatrix}
A_k & B_k \\
(B_k)^{\intercal} & C_k
\end{bmatrix}\] has rank $k-1$.
\end{lemma}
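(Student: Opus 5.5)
We need a symmetric matrix of rank $k-1$ whose off-diagonal block is $B_k$. Over $\text{GF}(2)$, the incidence matrix factors through a $(k-1)$-dimensional space: fix vertex $k$ as a root, and let $e_1,\dots,e_{k-1}$ be the standard basis of $\text{GF}(2)^{k-1}$ with $e_k \coloneqq 0$. The plan is to write $M_k = Z Z^{\intercal}$ for a suitable matrix $Z$ with $k-1$ columns. Such a product is automatically symmetric and has rank at most $k-1$. Its diagonal blocks are then $A_k$ and $C_k$, so it suffices to choose $Z$ whose row blocks reproduce $B_k$ in the off-diagonal position.

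The natural attempt is to assign vectors to vertices and edges. Set $x_v \coloneqq e_v$ for each vertex $v$, and $y_{uv} \coloneqq e_u + e_v$ for each edge $uv$. The off-diagonal entry is then $x_w \cdot y_{uv}$. This gives the incidence correctly when $w \neq k$, but fails for $w = k$, since $x_k = 0$. So this choice needs fixing.

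The fix is to use a Gram matrix with respect to a different bilinear form. I will take vertex vectors $x_v$ and edge vectors $y_{uv}$ in $\text{GF}(2)^{k-1}$ with $\langle x_w, y_e\rangle = [w \in e]$, for some symmetric bilinear form $\langle\cdot,\cdot\rangle$ of full rank. Then
\[
M = \begin{bmatrix} X \\ Y\end{bmatrix} Q \begin{bmatrix} X \\ Y\end{bmatrix}^{\intercal},
\]
where $Q$ is the Gram matrix of the form. This $M$ is symmetric of rank at most $k-1$, with $A_k = XQX^{\intercal}$, $C_k = YQY^{\intercal}$, and off-diagonal block $XQY^{\intercal}$.

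Now choose the vectors concretely.
\begin{itemize}
\item Let $x_v = e_v$ for $v < k$ and $x_k = e_1 + \dots + e_{k-1}$.
\item Let $y_{uv} = x_u + x_v$, using the form $\langle a, b\rangle = a \cdot b + (\sum a)(\sum b)$, i.e.\ $Q = I + J$.
\end{itemize}
Checking a few cases is routine bookkeeping. For $w, u, v < k$, we get $e_w\cdot(e_u+e_v) = [w\in\{u,v\}]$, plus the correction $1\cdot 0 = 0$. For $w = k$ and $u, v < k$, we get $1\cdot 0$ plus the correction; $\sum x_k = k-1$, which is unpleasant in general.

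So rather than fiddling by hand, I would argue abstractly instead. The row space of $B_k$ has dimension $k-1$: the rows sum to zero, and any $k-1$ rows are independent since the spanning tree incidence is nonsingular. Also, $B_k$ itself has rank $k-1$. The general fact I would use is that if a matrix $B$ has rank $r$, one can find symmetric $A$ and $C$ with the block matrix of rank $r$ whenever the row space admits a compatible factorization. Concretely, write $B = PR^{\intercal}$ with $P$ of size $k\times(k-1)$ and $R$ of size $\binom{k}{2}\times(k-1)$, both of full column rank. Then take $A = PSP^{\intercal}$ and $C = RS^{-1}R^{\intercal}$, for any invertible symmetric $S$ (here $S = I$). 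This gives
\[
M = \begin{bmatrix} PS \\ R\end{bmatrix} S^{-1} \begin{bmatrix} PS \\ R \end{bmatrix}^{\intercal},
\]
whose off-diagonal block is $PSS^{-1}R^{\intercal} = B$. It is symmetric with rank at most $k-1$, and equal to $k-1$ because its submatrix $B$ has rank $k-1$.

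This abstract construction removes all the case checking. Its only input is the rank of $B_k$, namely that it equals $k-1$ over $\text{GF}(2)$, so the main step is verifying that. Since the factorization is just full-rank decomposition, the rank check is where I expect any difficulty to lie, though it is standard: the rows sum to zero, and the $k-1$ rows corresponding to a spanning star $\{ik\}$ restricted to those edge columns form an identity.
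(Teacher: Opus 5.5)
Your final argument is correct and is essentially the paper's proof: the paper also realizes $M_k$ as a Gram matrix $U^{\intercal}U$ of a factorization $B_k=U_1^{\intercal}U_2$ through $\text{GF}(2)^{k-1}$ and gets the lower bound from an identity submatrix of $B_k$; the only difference is that it writes the factorization down explicitly rather than invoking a full-rank decomposition, so it never needs the bound $\rk(B_k)\le k-1$ (which you correctly supply), while your version shows more generally that any matrix of rank $r$ is the off-diagonal block of a symmetric matrix of rank $r$. In fact the paper's vectors are exactly your first attempt with $x_k$ changed from $0$ to $e_1+\dots+e_{k-1}$ (keeping $y_{uv}=e_u+e_v$ with $e_k=0$ and the standard dot product), and the detour through $Q=I+J$ should be dropped from a write-up, since that form is singular for even $k$ because $(I+J)\mathbf{1}=k\mathbf{1}$.
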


\begin{proof}
    Let $e_1, \ldots, e_{k-1}$ denote the canonical basis of $\{0, 1\}^{k-1}$.
    For every $i \in [k-1]$, let $u_i \coloneqq e_i$, and let $u_k \coloneqq e_1 + \ldots + e_k$.
    For all $i < j \in [k-1]$, let $u_{i, j} \coloneqq e_i + e_j$, and for every $i \in [k-1]$, let $u_{i, k} \coloneqq e_i$.
    Let $U_1 \coloneqq \begin{bmatrix}
        u_1 \ldots u_k
    \end{bmatrix} \in (\{0, 1\}^{k-1})^{k}$, $U_2 \coloneqq \begin{bmatrix}
        u_{1, 2} \ldots u_{k-1, k}
    \end{bmatrix} \in (\{0, 1\}^{k-1})^{\binom{k}{2}}$ and $U \coloneqq \begin{bmatrix}
        U_1 U_2
    \end{bmatrix} \in (\{0, 1\}^{k-1})^{k + \binom{k}{2}}$.
    We claim that $(U_1)^{\intercal}U_2 = B_k$.
    Indeed, let $i \in [k]$ and $j < j' \in [k]$. We consider four cases depending on the values of $i$ and $j'$. \begin{itemize}
        \item If $i \in [k-1]$ and $j' \in [k-1]$ then $u_i \cdot u_{j, j'} = e_i \cdot (e_j + e_{j'})$ is $1$ if and only if $i \in \{j, j'\}$.
        \item If $i \in [k-1]$ and $j' = k$ then $u_i \cdot u_{j, j'} = e_i \cdot e_j$ is $1$ if and only if $i \in \{j, j'\}$.
        \item If $i = k$ and $j' \in [k-1]$ then $u_i \cdot u_{j, j'} = (e_1 + \ldots + e_{k-1}) \cdot (e_j + e_{j'}) = 0$.
        \item If $i = k$ and $j' = k$ then $u_i \cdot u_{j, j'} = (e_1 + \ldots + e_{k-1}) \cdot e_j = 1$.
    \end{itemize}
    Let $A_k \coloneqq (U_1)^{\intercal} U_1$ and $C_k \coloneqq (U_2)^{\intercal} U_2$, and note that $A_k$ and $C_k$ are symmetric matrices.
    Note that $U$ consists of vectors of $\{0, 1\}^{k-1}$ so $U$ has rank at most $k-1$.
    Then, setting $M_k \coloneqq U^{\intercal}U$, it follows that $M_k$ has rank at most $k-1$ and \[ M_k=
\begin{bmatrix}
A_k & B_k \\
(B_k)^{\intercal} & C_k
\end{bmatrix}.\]

    Finally, note that the $k-1$ first rows of $B_k$ are linearly independent (because for each row, there is a column in which it has entry $1$, while all the other first $k-1$ rows have entry $0$), so $\rk(M_k) \geq \rk(B_k) \geq k-1$.
\end{proof}

We now prove that the vertex-edge incidence matrix of any graph with at least $\binom{k}{2}$ edges and no clique of size $k$ has rank at least $k$.
For this, we need two more lemmas. The first one is folklore, see e.g. \cite[Exercise 1.4.2]{BM08}.

\begin{lemma}\label{lem:rank-n-c}
    Let $G$ be a graph with $n$ vertices and $c$ components. 
    Let $I_G$ be the vertex-edge incidence matrix of $G$.
    Then, $\rk(I_G) = n-c$.
\end{lemma}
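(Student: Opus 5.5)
Note first that the statement is over $\text{GF}(2)$, where $\rk$ is taken throughout this section. The plan is to reduce to the connected case and then prove the connected case via a spanning tree.

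For the reduction, order the vertices so that vertices of the same component are consecutive. Every edge lies inside a single component, so $I_G$ becomes block diagonal, with one block $I_{G_i}$ per component $G_i$. The rank of a block diagonal matrix is the sum of the ranks of its blocks. It therefore suffices to show that a connected graph on $n_i$ vertices has incidence matrix of rank $n_i-1$; summing over the $c$ components then gives $\sum_i (n_i-1)=n-c$. Isolated vertices are consistent with this: they contribute a zero row, a block of rank $0=1-1$.

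For the upper bound in the connected case, observe that every column of $I_G$ has exactly two ones. Hence the sum over $\text{GF}(2)$ of all rows is the zero vector. The rows are therefore linearly dependent, and the rank is at most $n_i-1$.

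For the lower bound, take a spanning tree $T$ of the component. I would show that its $n_i-1$ edge-columns are linearly independent, by induction on the number of vertices of $T$. Pick a leaf $\ell$ of $T$, with incident edge $e$. In the submatrix of columns indexed by $E(T)$, the row of $\ell$ has a single $1$, located in column $e$. So any vanishing linear combination of these columns must have coefficient $0$ on $e$. Deleting $\ell$ and $e$ then leaves the incidence matrix of the smaller tree $T-\ell$ (up to a zero row), and induction finishes the argument.

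Alternatively, the lower bound follows from the kernel. A row vector $x$ with $x^{\intercal} I_G=0$ satisfies $x_u=x_v$ for every edge $uv$. So $x$ is constant on each component, the left kernel has dimension exactly $c$, and the rank is $n-c$. This route is cleaner and handles both bounds at once; I would present it as the main argument. No step is a real obstacle. The only care needed is to remember that the field is $\text{GF}(2)$, so that the characterization of the left kernel uses just the parity condition $x_u+x_v=0$.
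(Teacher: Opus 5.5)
Your proof is correct, both the left-kernel argument and the block-diagonal plus spanning-tree route. The paper gives no proof of its own here: it cites the lemma as folklore (Bondy--Murty, Exercise~1.4.2). Your left-kernel argument is the standard proof behind that citation: $x^{\intercal}I_G=0$ forces $x_u+x_v=0$, i.e.\ $x_u=x_v$, across every edge, so the left kernel is the $c$-dimensional space of vectors constant on components. You are also right that working over $\text{GF}(2)$ is essential, since over $\mathbb{R}$ the unsigned incidence matrix has rank $n$ minus the number of bipartite components.
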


\begin{lemma}\label{lem:n-c}
    Let $k \geq 1$ and let $G$ be a graph with $\geq \binom{k}{2}$ edges, $n$ vertices, $c$ components and no clique of size $k$. Then, $n - c \geq k$.
\end{lemma}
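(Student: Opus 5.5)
We prove the contrapositive: if $n-c\leq k-1$, then either $G$ has fewer than $\binom{k}{2}$ edges or $G$ contains a clique of size $k$.

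\textbf{Step 1: bound the edges by convexity.} Let the components have $n_1,\ldots,n_c$ vertices, so $\sum_i (n_i-1)=n-c\leq k-1$. Component $i$ has at most $\binom{n_i}{2}$ edges. Write $m_i = n_i-1\geq 0$, so $\binom{n_i}{2}=\binom{m_i+1}{2}$. The function $m\mapsto\binom{m+1}{2}$ is convex with value $0$ at $m=0$, hence superadditive: $\binom{a+1}{2}+\binom{b+1}{2}\leq\binom{a+b+1}{2}$, because the difference is $ab\geq 0$. Therefore
\[
|E(G)|\leq\sum_i\binom{m_i+1}{2}\leq\binom{\sum_i m_i+1}{2}\leq\binom{k}{2}.
\]

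\textbf{Step 2: characterize equality.} If $|E(G)|\geq\binom{k}{2}$, every inequality above is tight. Tightness in the last step forces $\sum_i m_i=k-1$. Tightness in the superadditivity step forces every product $m_im_j$ with $i\neq j$ to vanish, so exactly one component has $m_i=k-1$, i.e.\ $k$ vertices, and all other components are isolated vertices. Tightness in the first step forces that component to have exactly $\binom{k}{2}$ edges, so it is a clique of size $k$. This contradicts the assumption that $G$ has no clique of size $k$, hence $n-c\geq k$.

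\textbf{Edge cases.} For $k=1$ a single vertex is already a clique of size $1$, so a graph with no such clique is empty; then $n=c=0$ and the bound $n-c\geq 1$ fails. The statement should therefore be read as implicitly assuming $G$ is nonempty, or this case treated separately. For $k=2$ the argument reads: at least one edge gives $n-c\geq1$, and equality $n-c=1$ means a single edge, which is a clique of size $2$. The argument goes through. The only real care needed is the equality analysis in Step 2, which is the main (mild) obstacle.
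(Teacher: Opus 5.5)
Your proof is correct and has the same skeleton as the paper's. Assuming $n-c\le k-1$, you show that a graph with $n$ vertices and $c$ components has at most $\binom{n-c+1}{2}\le\binom{k}{2}$ edges, with equality only for a clique on $n-c+1$ vertices plus isolated vertices, and you read off a $K_k$.

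Where you differ is in how this extremal bound is proved. The paper uses a compression argument. It repeatedly moves a vertex from the smallest nontrivial component into the largest one, making it complete to its new component, and then completes every component to a clique. Each operation adds at least one edge, so $|E(G)|\ge\binom{k}{2}\ge|E(G^*)|$ means no operation was performed, and hence $G=G^*$. You instead use the identity $\binom{M+1}{2}-\sum_i\binom{m_i+1}{2}=\sum_{i<j}m_im_j$ (with $M=\sum_i m_i$), which gives the bound and its equality case at once.

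Your version has two small advantages. It avoids a looseness in the compression argument: removing a vertex may disconnect the rest of its component, so ``components'' there must be read as parts of the original vertex partition. It also makes explicit a step the paper leaves implicit, namely that $\binom{n-c+1}{2}=\binom{k}{2}$ with $n-c+1\le k$ forces $n-c+1=k$ when $k\ge 2$. The compression argument, in exchange, is more pictorial and needs no algebra.

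Your remark on $k=1$ is accurate: the statement fails for the graph with no vertices. The paper's last sentence (``since $G^*$ has $\binom{k}{2}$ edges, $G^*$ must contain $K_k$'') breaks in exactly that case. This is harmless for the hardness reduction, since $k=1$ is trivial there.
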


\begin{proof}
    By contradiction, suppose that $n - c \leq k-1$.
    As long as there are (at least) two components of $G$ of size at least $2$, move one vertex from the smallest such component to the largest one, and make it adjacent to all the vertices in its new component. 
    Then, add all the missing edges inside each component. 
    Observe that each modification adds at least one edge.
    Eventually, we obtain the graph $G^*$ with $(c-1)$ components of size $1$ and one component of size $n-(c-1) \leq k$ which forms a clique.
    Note that $G^*$ has exactly $\binom{n-c+1}{2} \leq \binom{k}{2}$ edges, so both $G$ and $G^*$ have exactly $\binom{k}{2}$ edges. 
    Thus, we did not add any edge when going from $G$ to $G^*$, which means that $G = G^*$.
    Since $G^*$ has $\binom{k}{2}$ edges, $G^*$ must contain $K_k$, so $G$ contains a clique of size $k$, a contradiction.
\end{proof}

The result now follows immediately.

\begin{proposition}\label{prop:find-Kk}
    Let $k \geq 1$ and let $G$ be a graph with at least $\binom{k}{2}$ edges and no clique of size $k$. Then, the vertex-edge incidence matrix of $G$ has rank at least $k$.
\end{proposition}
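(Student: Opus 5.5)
This follows by chaining the two preceding lemmas. Let $G$ have $n$ vertices, $c$ components, at least $\binom{k}{2}$ edges, and no clique of size $k$, and let $I_G$ be its vertex-edge incidence matrix. By \cref{lem:rank-n-c}, $\rk(I_G) = n - c$. By \cref{lem:n-c}, applied to this same $G$, we have $n - c \geq k$. Combining the two gives $\rk(I_G) \geq k$, which is the claim.

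One point deserves a check: \cref{lem:rank-n-c} must hold over $\text{GF}(2)$, since that is the field used throughout the paper. It does. Each column of $I_G$ has exactly two ones, so the rows indexed by the vertices of any single component sum to zero; this gives at least $c$ independent dependencies among the rows, so the rank is at most $n-c$. Conversely, a spanning forest of $G$ has $n-c$ edges, and its columns are independent over $\text{GF}(2)$. To see this, note that a nonempty set of edge-columns summing to zero must have every vertex of even degree in the chosen edge set. Such an edge set contains a cycle, which a forest cannot. So the rank is exactly $n-c$, and the folklore statement applies verbatim.

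There is no real obstacle here. The only care needed is that the hypotheses of \cref{lem:n-c} match the proposition exactly: at least $\binom{k}{2}$ edges and no $K_k$. They do, and the case $k=1$ is vacuous, because every graph contains a clique of size $1$ unless it is empty.
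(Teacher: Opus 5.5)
Your proof is correct and is exactly the paper's argument: the paper also just combines \cref{lem:rank-n-c} and \cref{lem:n-c} to get $\rk(I_G) = n-c \geq k$, and your check that the rank formula holds over $\text{GF}(2)$ is a sound, optional addition. One small nit: the case $k=1$ is not literally vacuous, since the graph with no vertices satisfies both hypotheses but has rank $0$; the statement, in the paper as well as in your write-up, tacitly assumes that graphs are nonempty.
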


\begin{proof}
    Let $I_G$ denote the vertex-edge incidence matrix of $G$.
    Let $n$ denote the number of vertices of $G$ and $c$ its number of components.
    Combining \cref{lem:rank-n-c} and \cref{lem:n-c}, we get $\rk(I_G) = n-c \geq k$.
\end{proof}

\subsection{The reduction}

We now describe the construction that we will use in our reduction.
Let $G = (V, E)$ be a graph and let $k$ be an integer.
We define a new graph $\Gamma_k(G)$ as follows. \begin{itemize}
    \item $V(\Gamma_k(G)) = (V \times [2^{2k}]) \cup (E \times [2^{2k}]) \cup D$, where $D$ is a set of (dummy) vertices of size at most $\max\{|V| \cdot 2^{2k}, \binom{k}{2} \cdot 2^{2k+2}\}$ such that $|(V \times [2^{2k}]) \cup D|$ is a power of two, say $2^p$, with $2^p \geq \binom{k}{2} \cdot 2^{2k+1}$.
    \item $\Gamma_k(G)[(V \times [2^{2k}]) \cup D]$ is isomorphic to the graph $H_p$.
    \item $\Gamma_k(G)[(E \times [2^{2k}]) \cup D]$ is edgeless.
    \item There is an edge in $\Gamma_k(G)$ between $(v, \cdot)$ and $(e, \cdot)$ for $v \in V$ and $e \in E$ if and only if $v$ and $e$ are incident in $G$. Thus, $\Gamma_k(G)[V \times [2^{2k}], E \times [2^{2k}]]$ can be obtained from the vertex-edge incidence graph of $G$ by replacing each vertex by a stable set of size $2^{2k}$.
\end{itemize}
For every vertex $v \in V(G)$, let $V_v$ be the set of vertices of $\Gamma_k(G)$ of the form $(v, \cdot)$, and for every edge $e \in E(G)$, let $V_e$ be the set of vertices of $\Gamma_k(G)$ of the form $(e, \cdot)$.

\begin{lemma}\label{lem:pf-reduction}
    Let $k \geq 1$ and let $G$ be a graph.
    Then, $G$ contains a clique of size $k$ if and only if there exists a rank-$(k-1)$ perturbation of $\Gamma_k(G)$ whose largest component has size at most $|\Gamma_k(G)| - \binom{k}{2} \cdot 2^{2k}$.
\end{lemma}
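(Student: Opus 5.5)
We prove the two directions separately. Throughout, write $N \coloneqq 2^{2k}$ for the number of copies, and let $Y \coloneqq (V\times[N])\cup D$, which induces $H_p$.

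\textbf{Forward direction.} Suppose $G$ has a clique $C$ of size $k$. We use \cref{lem:rk-Kk1}, whose matrix $M_k$ has rank $k-1$ and is indexed by $V(C)\cup E(C)$. We lift $M_k$ to a looped graph $\widehat{\Gamma}$ on $V(\Gamma_k(G))$ by blowing up each row and column index into its $N$ copies. Concretely, $(x,a)$ and $(y,b)$ are adjacent exactly when $M_k[x,y]=1$, for $x,y\in V(C)\cup E(C)$. Every vertex outside $\bigcup_{v\in C}V_v\cup\bigcup_{e\in E(C)}V_e$ is isolated in $\widehat{\Gamma}$. Duplicating rows and columns and adding zero rows preserves rank, so $\widehat{\Gamma}$ has rank $k-1$.

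In $\Gamma_k(G)+\widehat{\Gamma}$, every incidence edge between $V_v$ and $V_e$ with $v\in C$ and $e\in E(C)$ is cancelled. A vertex in $V_e$ with $e\in E(C)$ has neighbours in $\Gamma_k(G)$ only in $V_v$ for the endpoints $v$ of $e$, and these lie in $C$. Hence its remaining neighbours in the perturbation lie only inside $\bigcup_{e\in E(C)}V_e$, via $C_k$. So the set $\bigcup_{e\in E(C)}V_e$ is a union of components of size $\binom{k}{2}N$. Every other component avoids it, so it has size at most $|\Gamma_k(G)|-\binom{k}{2}N$. The component formed by these edge copies is itself small enough, since $|\Gamma_k(G)|\ge 2\binom{k}{2}N$ because $2^p\ge\binom{k}{2}2^{2k+1}$.

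\textbf{Backward direction.} Suppose $\Gamma_k(G)+\widehat{\Gamma}$ has rank-$(k-1)$ perturbation $\widehat{\Gamma}$ and all components have size at most $|\Gamma_k(G)|-\binom{k}{2}N$.
\begin{enumerate}
\item Restrict $\widehat{\Gamma}$ to $Y$. This gives a rank-$(k-1)$ perturbation of $H_p$; we must check $p\ge 2(k-1)+3$, which follows from the size of $2^p$. By \cref{lem:exists-giant-comp}, some component of the restriction has size at least $2^p-2^{2k-1}$. It lies in a component $K$ of the whole perturbed graph.
\item Each $V_v$ has $2^{2k}>2^{2k-1}$ elements, so $K$ contains a copy of every $v\in V$.
\item $K$ misses at least $\binom{k}{2}N$ vertices in total, and at most $2^{2k-1}$ of these lie in $Y$. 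Hence it misses more than $(\binom{k}{2}-1)N$ vertices of $E\times[N]$. Since each $V_e$ has only $N$ vertices, the set $F$ of edges $e$ with some copy $(e,a_e)\notin K$ satisfies $|F|\ge\binom{k}{2}$.
\end{enumerate}

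Now choose, for each $e\in F$, one missed copy $(e,a_e)$, and for each $v$ one copy $(v,b_v)\in K$. Consider the submatrix of $\widehat{\Gamma}$ with rows $\{(v,b_v)\}$ and columns $\{(e,a_e)\}$. Each $(e,a_e)$ lies in a different component from every $(v,b_v)$, so these pairs are non-adjacent in the perturbation. Hence $\widehat{\Gamma}$ agrees with $\Gamma_k(G)$ on this submatrix, which is therefore exactly the incidence matrix $I_{G[F]}$ (rows for vertices not covered by $F$ are zero). It follows that $\rk(I_{G[F]})\le k-1$. By \cref{prop:find-Kk}, $G[F]$ must then contain a clique of size $k$.

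The main obstacle is bookkeeping the constants. We need the giant-component loss $2^{2(k-1)+1}$ to be strictly less than $N$, so that every $V_v$ meets $K$. We also need the count of missed vertices to force $|F|\ge\binom{k}{2}$ rather than $\binom{k}{2}-1$, which requires the strict inequality $2^{2k-1}<N$. Finally, the hypothesis $n\ge 2(k-1)+3$ of \cref{lem:exists-giant-comp} must be checked.
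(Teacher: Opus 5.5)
Your proof is correct and follows the paper's argument essentially step for step. In the forward direction you blow up the rank-$(k-1)$ completion from \cref{lem:rk-Kk1} to separate the copies of $E(C)$. In the converse you apply \cref{lem:exists-giant-comp} to the restriction to $(V\times[N])\cup D$ to get a component meeting every $V_v$, count missed edge copies to obtain $|F|\ge\binom{k}{2}$, and apply \cref{prop:find-Kk} to the incidence submatrix of $\widehat{\Gamma}$. Your explicit choice of representatives $(v,b_v)\in K$ and $(e,a_e)\notin K$ is simply a more careful rendering of the paper's ``semi-induced subgraph'' step.
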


\begin{proof}
    Suppose first that $G$ contains a clique $C$ of size $k$.
    By \cref{lem:rk-Kk1}, there exists a looped graph $G_k$ of rank $k-1$ on vertex set $V(C) \cup E(C)$ such that $G_k[V(C), E(C)]$ is isomorphic to the vertex-edge incidence graph between $V(C)$ and $E(C)$ in $G$.
    By replacing each vertex in $V(C) \cup E(C)$ by a clique or a stable set of size $2^{2k}$, according to whether the corresponding vertex is looped or not, it follows that there exists a graph $G'_k$ of rank $k-1$ on vertex set $\bigcup_{v \in V(C)} V_v \cup \bigcup_{e \in E(C)} V_e$ such that $G'_k[\bigcup_{v \in V(C)} V_v, \bigcup_{e \in E(C)} V_e] = \Gamma_k(G)[\bigcup_{v \in V(C)} V_v, \bigcup_{e \in E(C)} V_e]$.
    Let $\widehat{\Gamma}$ be the graph on vertex set $V(\Gamma_k(G))$ obtained from $G'_k$ by adding the vertices of $V(\Gamma_k(G)) \setminus V(G'_k)$ as isolated vertices.
    Then, $\widehat{\Gamma}$ is a graph of rank $k-1$ on the same vertex set as $\Gamma_k(G)$. 
    Furthermore, the vertices in $\bigcup_{e \in E(C)} V_e$ are separated from the other vertices in $\Gamma_k(G) + \widehat{\Gamma}$, so the largest component of $\Gamma_k(G)+\widehat{\Gamma}$ has size at most $|\Gamma_k(G)| - \binom{k}{2} \cdot 2^{2k}$. 
    Note that we are using here that $|\Gamma_k(G)| \geq |(V \times [2^{2k}]) \cup D| \geq \binom{k}{2} \cdot 2^{2k+1}$, so $\left|\bigcup_{e \in E(C)} V_e\right| \leq |\Gamma_k(G)| - \binom{k}{2} \cdot 2^{2k}$.

    Conversely, suppose that there exists a rank-$(k-1)$ perturbation $\Gamma_k(G) + \widehat{\Gamma}$ of $\Gamma_k(G)$ whose largest component has size at most $|\Gamma_k(G)| - \binom{k}{2} \cdot 2^{2k}$.
    Note that $2^p \geq 2^{2k+1}$ by construction so $p \geq 2k+1 = 2(k-1) + 3$.
    By \cref{lem:exists-giant-comp}, some component $K$ of $(\Gamma_k(G) + \widehat{\Gamma})[(V \times [2^{2k}]) \cup D]$ has size at least $2^p - 2^{2k-1}$.
    Therefore, $K$ intersects all sets $V_v$ for $v \in V(G)$.
    Let $K'$ be the component of $\Gamma_k(G) + \widehat{\Gamma}$ that contains $K$.
    Then, $K'$ misses at least $\binom{k}{2} \cdot 2^{2k} - 2^{2k-1} > \left(\binom{k}{2} - 1\right) \cdot 2^{2k}$ vertices in $\bigcup_{e \in E(G)}V_e$.
    Thus, $K'$ must miss at least one vertex from at least $\binom{k}{2}$ sets $V_e$.
    Let $F \subseteq E(G)$ be the set of all edges $e \in E(G)$ such that $K'$ misses at least one vertex from $V_e$. 
    Note that $|F| \geq \binom{k}{2}$.

    By contradiction, suppose that $G[F]$ does not contain a clique of size $k$.
    Let $I_{G[F]}$ denote the vertex-edge incidence matrix of $G[F]$. 
    By \cref{prop:find-Kk}, we have $\rk(I_{G[F]}) \geq k$.
    Since $K'$ intersects all sets $V_v$ for $v \in V(G)$ and misses at least one vertex from each set $V_e$ for $e \in F$, it follows that $\widehat{\Gamma}$ contains the vertex-edge incidence graph of $G[F]$ as a (bipartite) semi-induced subgraph. 
    Thus, the adjacency matrix of $\widehat{\Gamma}$ contains $I_{G[F]}$ as a submatrix, so $rk(\widehat{\Gamma}) \geq rk(I_{G[F]}) \geq k$, a contradiction.
    Therefore, $G[F]$ contains a clique of size $k$, so $G$ contains a clique of size $k$.
\end{proof}

It is now easy to deduce \cref{thm:W1-hard} from \cref{lem:pf-reduction}. We restate it for convenience.

\thmHardness*

\begin{proof}
    Given an instance $(G, k)$ of \textsc{clique}, we construct the graph $\Gamma_k(G)$, which has size $|\Gamma_k(G)| \leq \max\{2n \cdot 2^{2k}, n \cdot 2^{2k} + \binom{k}{2} \cdot 2^{2k+2}\} + \binom{n}{2} \cdot 2^{2k}$, and consider the instance $(\Gamma_k(G), k-1)$ of \textsc{rank integrity}.
    By \cref{lem:pf-reduction}, the solution to this instance is at most $|\Gamma_k(G)| - \binom{k}{2} \cdot 2^{2k}$ if and only if $G$ contains a clique of size $k$.
\end{proof}

\section{Computing 1-ancilla-integrity}
\label{sec:ancillaAlgorithm}

In this section we prove \Cref{thm:ancillaVuln} about computing the $1$-ancilla-integrity of a graph. 

First we show how to reduce this problem to computing flip-integrity. Recall that the \emph{flip-integrity} of a graph $G$ is the minimum, over all sets $S \subseteq V(G)$, of the maximum component size of the graph $G \circ S$ which is obtained from $G$ by flipping on~$S$. For convenience, given a graph $G$, we say that an \emph{optimal flip set} is a set $S^* \subseteq V(G)$ so that the largest size of a component of $G \circ S^*$ equals the flip-integrity of~$G$. Recall that, given a graph $G$ and a vertex $v$ of $G$, we write $G*v$ for the graph obtained from $G$ by locally complementing at $v$.

\begin{lemma}
\label{lem:reductionAncilla}
    The $1$-ancilla-integrity of a graph $G$ equals the minimum, over all graphs $G'$ obtained from $G$ by performing at most one local complementation, of the flip-integrity of~$G'$.
\end{lemma}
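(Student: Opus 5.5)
We prove the two inequalities separately. Write $\mathrm{fi}(\cdot)$ for flip-integrity and $\mathrm{ai}(\cdot)$ for $1$-ancilla-integrity. Throughout we use two facts. First, local complementation preserves the components of a graph. Second, flipping on $S$ is the same as adding a new vertex $a$ with neighborhood $S$, locally complementing at $a$, and deleting $a$.

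\emph{Upper bound.} Let $G'$ be $G$ or $G*v$ for some $v \in V(G)$, and let $S$ be an optimal flip set for $G'$. Build $\widehat{G}$ from $G'$ by adding a vertex $a$ with neighborhood $S$. Then $G' \circ S = (\widehat{G}*a) - a$, so $G'\circ S$ is a vertex-minor of $\widehat{G}$. Moreover $G'=\widehat{G}-a$, so $G$ is also a vertex-minor of $\widehat{G}$ (apply $*v$ again if $G'=G*v$). Hence $d(G, G'\circ S)\le 1$, and $\mathrm{ai}(G) \le \mathrm{fi}(G')$.

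\emph{Lower bound.} Let $G^*$ be a $1$-perturbation of $G$ attaining $\mathrm{ai}(G)$. Take $\widehat{G}$ on $V(G)\cup\{a\}$ containing both $G$ and $G^*$ as vertex-minors. Since locally complementing at vertices of $V(G)$ commutes with the operations that build $G$ as a vertex-minor of $\widehat{G}$, we may replace $\widehat{G}$ by a locally equivalent graph and assume $\widehat{G}-a=G$. Let $S=N_{\widehat{G}}(a)$. By \Cref{lem:threeWaysToRemove}, $G^*$ is a vertex-minor of one of three graphs, each on vertex set $V(G)$, so $G^*$ is locally equivalent to that graph and has the same components.
\begin{enumerate}
\item $\widehat{G}-a=G=G\circ\emptyset$. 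Here the maximum component size is at least $\mathrm{fi}(G)$.
\item $(\widehat{G}*a)-a=G\circ S$. Again the maximum component size is at least $\mathrm{fi}(G)$.
\item $((\widehat{G}*u*a*u)-a)$ for a neighbor $u$ of $a$. This case is the main obstacle. Since $u\in V(G)$, we get $((\widehat{G}*u)*a*u)-a=\bigl(((\widehat{G}*u)*a)-a\bigr)*u$. The graph $\widehat{G}*u$ restricted to $V(G)$ is $G*u$ (possibly up to the status of $a$'s adjacency, which does not matter since $a$ is deleted), and $a$ has some neighborhood $S'$ in $\widehat{G}*u$. Therefore the graph equals $((G*u)\circ S')*u$, which is locally equivalent to $(G*u)\circ S'$. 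Its maximum component size is at least $\mathrm{fi}(G*u)$.
\end{enumerate}
In the third case we must carefully check that deleting $a$ commutes with $*u$. This holds because local complementation at $u\neq a$ followed by deleting $a$ equals deleting $a$ followed by local complementation at $u$. In every case the maximum component size of $G^*$ is at least the flip-integrity of $G$ or of some $G*u$, which proves the lower bound and hence the lemma.
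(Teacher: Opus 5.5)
Your proof is correct and follows the paper's argument step for step. For the upper bound you use the same $\widehat{G}$ (a new vertex $a$ adjacent to $S$, added to $G$ or $G*u$). For the lower bound you use the same normalization $\widehat{G}-a=G$, the same appeal to \Cref{lem:threeWaysToRemove}, and the same observation that in the third case $(\widehat{G}*u*a)-a=(G*u)\circ N_{\widehat{G}*u}(a)$, which you check more carefully than the paper does.

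The one soft spot is your reason for assuming $\widehat{G}-a=G$. It does not come from any commutation property. It holds because every vertex-minor arises from local complementations (possibly at $a$) followed by deletions, and replacing $\widehat{G}$ by a locally equivalent graph does not change its set of vertex-minors. The paper takes this step for granted as well.
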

\begin{proof}
    First we prove that the $1$-ancilla-integrity of $G$ is at most the claimed value. There are two cases. For the first case, suppose that the minimum is achieved by the graph $G$ itself, and let $S^* \subseteq V(G)$ be an optimal flip set of $G$. Then $G$ and $G \circ S^*$ are $1$-perturbations of each other as we can add a vertex whose neighborhood is $S^*$ to $G$ and then locally complement and delete it to form $G \circ S^*$. This finishes the first case.
    
    Next suppose that the minimum is achieved by the graph $G*u$ for some vertex $u$ of $G$, and let $S^* \subseteq V(G)$ be an optimal flip set of $G*u$. Let $\widehat{G}$ be the graph which is obtained from $G*u$ by adding a new vertex $a$ whose neighborhood is $S^*$. Then $G$ is obtained from $\widehat{G}$ by deleting $a$ and then locally complementing on $u$. Moreover, $(G*u) \circ S^*$ is obtained from $\widehat{G}$ by locally complementing on $a$ and then deleting $a$. So $G$ and $(G*u) \circ S^*$ are $1$-perturbations of each other, which finishes the second case.
    
    It just remains to show that the claimed value is at most the $1$-ancilla-integrity of $G$. So, consider a graph $G^*$ which is a $1$-perturbation of $G$ which minimizes the maximum component size. Thus there exists a graph $\widehat{G}$ with one additional vertex so that both $G$ and $G^*$ are vertex-minors of $\widehat{G}$. By locally complementing in $\widehat{G}$, we may assume that $G$ is an induced subgraph of $\widehat{G}$, that is, there exists a vertex $a$ so that $G=\widehat{G}-a$. By \Cref{lem:threeWaysToRemove}, the graph $G^*$ is a vertex-minor of at least one of the following three graphs:\begin{enumerate}
    \item the graph obtained from $\widehat{G}$ by deleting $a$,
    \item the graph obtained from $\widehat{G}$ by locally complementing at $a$ and then deleting $a$, or
    \item the graph obtained from $\widehat{G}$ by selecting an arbitrary neighbor $u$ of $a$, locally complementing on $u$ then $a$ then $u$ again, and finally deleting $a$.
    \end{enumerate}
    In the first case, $G^*=G$, which is trivially a flip of $G$. In the second case, $G^*$ is also a flip of $G$, as desired. So we may assume that $G^* = \widehat{G}*u*a*u-a$ for some neighbor $u$ of $a$.

    Since locally complementing at $u$ does not change the maximum component size of $G^*$, it suffices to consider the graph $\widehat{G}*u*a-a$. This graph is a flip of the graph $G*u$, as desired. This completes the proof.
\end{proof}

\subsection{Reductions for flip-integrity}
\label{sec:triangleSubroutine}

We now turn our attention to computing the flip-integrity of a graph. This subsection is dedicated to giving several reductions. We write $|G|$ for the number of vertices of a graph~$G$.

First we show how to reduce to the case that $G$ is connected. 

\begin{lemma}
\label{lem:makeGConn}
Let $G$ be a disconnected graph with components $C_1, C_2, \ldots, C_r$ ordered so that $|C_1| \geq |C_2| \geq \ldots \geq |C_r|$. Then the flip-integrity of $G$ is the maximum of the flip-integrity of $C_1$ and the size of $C_2$.
\end{lemma}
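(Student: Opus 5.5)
We prove the two inequalities separately. Write $\mathrm{fi}(\cdot)$ for flip-integrity.

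\emph{Upper bound.} Let $S_1$ be an optimal flip set of $C_1$, viewed as a subset of $V(G)$. Flipping on a subset of $V(C_1)$ only changes edges inside $V(C_1)$. Hence $G\circ S_1$ is the disjoint union of $C_1\circ S_1$ with $C_2,\dots,C_r$. Its largest component therefore has size $\max(\mathrm{fi}(C_1),|C_2|)$, using $|C_i|\le|C_2|$ for $i\ge 2$.

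\emph{Lower bound.} Let $S$ be an arbitrary flip set of $G$, and set $S_i=S\cap V(C_i)$. We must show that $G\circ S$ has a component of size at least $\max(\mathrm{fi}(C_1),|C_2|)$. We split into cases according to how many of the $S_i$ are nonempty.

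\textbf{Case 1: at most one $S_i$ is nonempty.} Say $S=S_j$. Then $G\circ S$ is the disjoint union of $C_j\circ S_j$ and the untouched components $C_i$ with $i\ne j$.
\begin{itemize}
\item If $j=1$, then $C_2$ survives intact, and $C_1\circ S_1$ has a component of size at least $\mathrm{fi}(C_1)$ by definition.
\item If $j\ne 1$, then $C_1$ survives intact. We have $|C_1|\ge|C_2|$, and $|C_1|\ge \mathrm{fi}(C_1)$ trivially.
\end{itemize}
In both subcases the required component exists.

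\textbf{Case 2: at least two $S_i$ are nonempty, say $S_a$ and $S_b$.} Every vertex of $S_a$ becomes adjacent to every vertex of $S_b$, since these pairs were non-edges in $G$. We show that $\bigcup_i V(C_i)$, taken over all $i$ with $S_i\ne\emptyset$, lies in a single component $Q$ of $G\circ S$.

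\emph{The vertices of $S$ form one connected piece.} Each pair of vertices from different parts $S_i, S_j$ is adjacent in $G\circ S$. Since at least two parts are nonempty, any vertex of $S$ reaches a vertex in another part in one step, so all of $S$ is connected.

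\emph{The remaining vertices of each touched $C_i$ attach to that piece.} Fix $i$ with $S_i\ne\emptyset$ and a vertex $x\in V(C_i)\setminus S$. Take a path in $C_i$ from $x$ to $S_i$, and follow it up to the first vertex $s$ of $S_i$. Every edge of this initial segment has at least one endpoint outside $S$, so flipping does not remove it. Thus $x$ is joined to $s\in S$ in $G\circ S$.

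This argument is the one place requiring care. A naive claim that $C_i$ stays connected after flipping is false, because edges inside $S_i$ may be deleted. The path-to-$S$ argument avoids this, since it never uses an edge with both endpoints in $S$.

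\emph{Size of $Q$.} Two nonempty parts $S_a, S_b$ come from distinct components, so $Q$ contains at least two whole components of $G$. Hence $|Q|\ge|C_1|$ when $C_1$ is touched, and otherwise $|Q|\ge$ the sum of two components, which is at least $|C_2|$. In the untouched case $C_1$ itself survives as a component of size $|C_1|$. So $G\circ S$ always has a component of size at least $|C_1|$. Finally, $|C_1|\ge\max(\mathrm{fi}(C_1),|C_2|)$, which completes the lower bound.
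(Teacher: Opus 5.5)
Your proof is correct and rests on the same mechanism as the paper's. Vertices of $S$ lying in distinct components of $G$ become pairwise adjacent in $G\circ S$, which glues together every component that $S$ meets; your path-to-$S$ argument is exactly what justifies the paper's unproved remark that, when $C_1\circ(S\cap V(C_1))$ has at least two components, each of them contains a vertex of $S$. The only difference is organizational: the paper gets the lower bound by the flip-integrity of $C_1$ at once from heredity to induced subgraphs and then splits on whether $S$ meets $C_2$, whereas you split on whether $S$ meets at most one or at least two components of $G$.
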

\begin{proof}
    Certainly the flip-integrity of $G$ is at most the desired quantity since we can select $S$ to be an optimal flip-set of $C_1$. Also notice that the flip-integrity of $G$ is at least the flip-integrity of $C_1$ (or indeed, of any of its induced subgraphs). Thus it suffices to show that for any $S \subseteq V(G)$, there exists a component of $G \circ S$ of size at least $|C_2|$.

    Let $S \subseteq V(G)$. If $S$ does not contain any vertex of $C_2$, then $C_2$ is a component of $G\circ S$, and we are done. So we may assume that there exists a vertex $v$ which is in both $S$ and $C_2$. Now, let $H_1, \ldots, H_t$ be the components of $C_1 \circ (S \cap V(C_1))$. We may assume that $t \geq 2$ since otherwise $|H_1|= |C_1|\geq |C_2|$, and we are done. Since $t \geq 2$, every component $H_i$ contains a vertex $v_i \in S$. Then $v$ and $v_i$ are adjacent in $G \circ S$. It follows that all of the vertices of $C_1$ are in the same component of $G \circ S$. This completes the proof as $|C_1| \geq |C_2|$.
\end{proof}

Next we note that if $|G| \geq 2$, then for every optimal flip set $S^*$, the graph $G \circ S^*$ has at least two components. This is a corollary of the following lemma.

\begin{lemma}
\label{lem:lessThanN}
The flip-integrity of a graph $G$ with at least two vertices is at most~$|G|-1$.
\end{lemma}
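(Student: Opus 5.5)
The plan is to exhibit one explicit flip set $S$ that isolates a single vertex, so that every component of $G \circ S$ has size at most $|G|-1$. Fix any vertex $v$ of $G$ and let $S = N_G[v] = \{v\} \cup N_G(v)$ be its closed neighborhood. Flipping on $S$ toggles exactly the pairs inside $S$, so after the flip $v$ is adjacent to a vertex $u$ if and only if either $u \in S$ and $uv \notin E(G)$, or $u \notin S$ and $uv \in E(G)$.

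Both cases are impossible. If $u \in S \setminus \{v\}$, then $u$ is a neighbor of $v$ in $G$, so the edge $uv$ is removed by the flip. If $u \notin S$, then $u$ is not a neighbor of $v$ in $G$, and the flip does not touch the pair $uv$. Hence $v$ is an isolated vertex of $G \circ S$. (If $v$ is already isolated in $G$, then $S=\{v\}$ and flipping changes nothing, which is consistent.)

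Since $|G| \ge 2$, the graph $G \circ S$ has at least two components: $\{v\}$ together with at least one component contained in $V(G) \setminus \{v\}$. Every component therefore has size at most $|G|-1$, so the flip-integrity is at most $|G|-1$. There is no real obstacle here; the only point needing care is the bookkeeping of which pairs the flip toggles, namely that flipping on a set containing both endpoints of an edge deletes that edge.

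Equivalently, in the language of the paper, this is adding a new vertex $a$ adjacent to $N_G[v]$, locally complementing at $a$, and deleting $a$. But the direct flip argument above suffices.
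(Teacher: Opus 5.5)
Your proposal is correct and is exactly the paper's argument: flip on the closed neighborhood $N_G[v]$ of an arbitrary vertex $v$, which isolates $v$ in $G \circ S$, so every component has size at most $|G|-1$. Your extra bookkeeping of which pairs the flip toggles is accurate and simply spells out what the paper states in one line.
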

\begin{proof}
    Let $v$ be any vertex of $G$, and let $S$ be the set which contains $v$ and all of its neighbors. Then $v$ is isolated in $G \circ S$, and so the largest component of $G\circ S$ has size at most~$|G|-1$.
\end{proof}

The final lemma of this subsection shows how to ``recognize'' the best flip set $S \subseteq V(G)$ so that $G \circ S$ has at least three components (assuming that $G$ is connected). Given a graph $G$, we write $\mathcal{S}_{\geq 3}(G)$ for the collection of all sets $S \subseteq V(G)$ so that either $S = \emptyset$, or $G \circ S$ has at least three components. We put the empty set in $\mathcal{S}_{\geq 3}(G)$ just to ensure that it is always non-empty. Now we are ready to show the following key lemma. 

\begin{lemma}
\label{lem:constrainedFlipVul}
    There is an algorithm which takes in a connected graph $G$ with $n$ vertices and $m$ edges and returns in time $\mathcal{O}(m^{1.5}n^2)$ a set $S\in \mathcal{S}_{\geq 3}(G)$ which minimizes, among all sets in $\mathcal{S}_{\geq 3}(G)$, the maximum size of a component of $G \circ S$.
\end{lemma}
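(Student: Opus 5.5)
The plan is to turn the triangle characterization from the proof overview into a brute-force search over triangles. First I prove a structural claim. Let $G$ be connected and let $S \neq \emptyset$ be such that $G\circ S$ has components $C_1,\dots,C_r$ with $r\ge 3$. Since $G$ is connected, every component $C_i$ contains an edge of $G$ leaving $C_i$, and that edge is removed by the flip. Hence both endpoints of that edge lie in $S$, so $S\cap C_i\neq\emptyset$ for every $i$. Two vertices of $S$ in different components are non-adjacent in $G\circ S$, so they are adjacent in $G$. Choosing $x\in S\cap C_1$, $y\in S\cap C_2$ and $z\in S\cap C_3$ therefore gives a triangle $xyz$ of $G$.

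Next I show that $S=S_{x,y,z}$, the set of vertices forming a triangle in $G$ with at least two of $x,y,z$.
\begin{itemize}
\item If $u\in S$, then $u$ lies outside at least two of $C_1,C_2,C_3$, say outside $C_1$ and $C_2$. Then $u$ is adjacent in $G$ to both $x$ and $y$, and since $xy\in E(G)$ the vertices $u,x,y$ form a triangle.
\item If $u\notin S$ and $u$ is adjacent in $G$ to two of $x,y,z$, say $x$ and $y$, then the flip does not touch the edges $ux$ and $uy$. So $u$ is adjacent to both $x$ and $y$ in $G\circ S$, which forces $C_1=C_2$, a contradiction.
\end{itemize}
Hence $u\notin S$ is adjacent to at most one of $x,y,z$, and so $S=S_{x,y,z}$. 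This mirrors the overview, with the small care that "in a triangle with two of them" simply means adjacent to both of them, because $x,y,z$ are pairwise adjacent. The step I expect to need the most care is keeping this characterization exact, including vertices in $\{x,y,z\}$ themselves. For example, $x$ is adjacent to $y$ and $z$, so $x\in S_{x,y,z}$, consistent with $x\in S$.

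The algorithm then proceeds as follows.
\begin{itemize}
\item Enumerate all triangles of $G$. There are $\mathcal{O}(m^{1.5})$ of them, listable in $\mathcal{O}(m^{1.5})$ time by standard triangle listing.
\item For each triangle, compute $S_{x,y,z}$ in $\mathcal{O}(n)$ time using adjacency matrix lookups.
\item Build $G\circ S_{x,y,z}$ and compute its components in $\mathcal{O}(n^2)$ time.
\item If it has at least three components, record its maximum component size.
\end{itemize}
Finally, return the best recorded set, or $\emptyset$ if none qualifies. Every candidate recorded lies in $\mathcal{S}_{\geq 3}(G)$, and by the claim every nonempty member of $\mathcal{S}_{\geq 3}(G)$ arises as some $S_{x,y,z}$. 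The output is therefore optimal over $\mathcal{S}_{\geq 3}(G)$, since $\emptyset$ is only returned when it is the sole member.

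The total running time is $\mathcal{O}(m^{1.5}n^2)$, as required.
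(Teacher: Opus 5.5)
Your proposal is correct and follows the same route as the paper. Both arguments use connectivity to show that $S$ meets every component of $G\circ S$, take a triangle $x,y,z$ from three different components, characterize $S$ exactly as $S_{x,y,z}$, and brute-force over the $\mathcal{O}(m^{1.5})$ triangles with $\mathcal{O}(n^2)$ work each. You are somewhat more explicit than the paper in proving both directions of the characterization and in handling the default output $\emptyset$, but it is the same proof.
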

\begin{proof}
    Let $G$ be a connected graph, and consider a set $S\in \mathcal{S}_{\geq 3}(G)$ which minimizes, among all sets in $\mathcal{S}_{\geq 3}(G)$, the maximum size of a component of $G \circ S$. Let $C_1, C_2, \cdots, C_r$ be the components of $G \circ S$. Then in the original graph $G$, for any vertices $u \in C_i \cap S$ and $v \in C_j \cap S$ where $i$ and $j$ are distinct, $u$ and $v$ must have been adjacent in $G$. This is because there are no edges between $u$ and $v$ in $G \circ S$ and these relationships have been flipped. A similar argument shows that for any vertices $u \in C_i \setminus S$ and $v \in C_j \setminus S$ where $i$ and $j$ are distinct, $u$ and $v$ are nonadjacent in $G$. 

    Thus, since $G \circ S$ has at least three components, $G$ contains a triangle whose vertices $x$, $y$, and $z$ are in three different components of $G \circ S$. Notice that every vertex in $S$ forms a triangle (in $G$) with at least two of the vertices from the triangle $x, y, z$. In fact this characterizes $S$; no vertex outside of $S$ forms such a triangle in $G$. So, by iterating through all triangles $x,y,z$ in $G$ and deriving the set \begin{align*}
    \{u \in V(G) : u \text{ is in a triangle in $G$ with at least two of the vertices }x,y,z\},
    \end{align*} we will consider all sets in $\mathcal{S}_{\geq 3}(G)$ and thus find $S$ (when $S$ is non-empty). Note that we can also efficiently test whether a given set is in $\mathcal{S}_{\geq 3}(G)$.

    \begin{algorithm}
    \caption{The algorithm for \Cref{lem:constrainedFlipVul}.}
    \label{alg:ManyComponents}
    \begin{algorithmic}[1]
    \Function{BestFlipWithManyComponents}{G}
        \State $k \gets |V(G)|$ and $S \gets \emptyset$
        \ForAll{$\{x,y,z\} \subseteq V(G)$ with $xy, yz, xz \in E(G)$}
            \State $S' \gets \{u \in V(G) : u \text{ is in a triangle in $G$ with at least two of the vertices }x,y,z\}$
            \If{$G \circ S'$ has at least three components \textbf{and} the largest one has size $k'<k$}
                \State $k \gets k'$ and $S \gets S'$
            \EndIf
        \EndFor
        \State \Return $S$
    \EndFunction
    \end{algorithmic}
    \end{algorithm}

    This procedure is summarized in Algorithm~\ref{alg:ManyComponents}; we are done with the proof of correctness.
    In terms of time complexity, every graph with $m$ edges has $\mathcal{O}(m^{1.5})$ triangles \cite[Theorem~Section 10.7.2]{MiningDatasets}. For each triangle, we can compute the flip set $S'$ in time $\mathcal{O}(n^2)$. We can then compute $G \circ S'$ and the size of its largest component in time $\mathcal{O}(n^2)$ using Breadth-First Search (BFS). Therefore, the total time complexity is $\mathcal{O}(m^{1.5}n^2)$.
\end{proof}

Together, all of the lemmas in this section will be used to reduce the problem of computing the flip-integrity of a graph to the problem of computing a most balanced split.

\subsection{Finding a most balanced split}
\label{sec:mostBalSplit}

Recall that in a graph $G$, a \emph{cut} is a partition $(A, B)$ of the vertex set into two parts.
We consider non-oriented cuts, in the sense that we make no distinction between the cut $(A, B)$ and the cut $(B, A)$.
A cut $(A, B)$ is a \emph{split} if the edges between $A$ and $B$ form the edge-set of a complete bipartite graph. Said differently, every vertex in $A$ that has a neighbor in $B$ is adjacent to every vertex in $B$ that has a neighbor in $A$. (If there are no edges between $A$ and $B$, then we also consider this to be a split.)
The \emph{width} of a cut $(A, B)$ is $\max\{|A|, |B|\}$.
Observe that a most balanced split in $G$ is a split of minimum width.

This subsection is dedicated to proving the following result.

\begin{restatable}{theorem}{computesplit}\label{thm:compute-most-balanced-split}
    There is an algorithm which takes as input a graph $G$ with $n$ vertices and returns a split of $G$ of minimum width in time $\mathcal{O}(n^2)$.
\end{restatable}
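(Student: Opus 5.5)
We will reduce to the split decomposition and then run a linear-time dynamic program over its tree. First dispose of trivial cases. If $G$ is disconnected, every partition that keeps each component on one side is a split (no edges cross). Conversely, a split with no crossing edges is exactly such a partition, and any other split of a disconnected graph must have its crossing edges forming a complete bipartite graph. So for disconnected $G$ the candidates are: (a) assignments of whole components to two sides, and (b) splits of a single component with the remaining components attached to either side. For (a), we want the most balanced bipartition of component sizes. This is a subset-sum over numbers summing to $n$, solvable in $\mathcal{O}(n^2)$ time (or by \cref{lem:list-balancing-easy} with $m=2$). For (b), we apply the connected-case routine below to each component, but with the other components' sizes acting as free weight that can be added to either side. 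I expect this to fold into the same DP by treating the other components as extra leaves hanging off a virtual root. We also always allow the trivial split $(\{v\},V\setminus\{v\})$, which has width $n-1$.

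For connected $G$ we compute the split decomposition in linear time (\cite{linearSplitDahlhaus, linearSplitRevisited12}). It is a tree $T$ whose leaves are $V(G)$ and whose internal nodes are labelled prime, clique, or star. The key structural fact, due to Cunningham \cite{cunningham82}, is the following. The splits of $G$ are exactly the bipartitions of the leaves induced by (i) deleting a tree edge of $T$, or (ii) at a clique node, partitioning its incident edges into two groups each of size at least two and taking the union of the leaf sets on each side, or (iii) at a star node, partitioning its non-centre edges into two groups, with the centre edge placed in one group, again taking unions. We will state this as the characterization we rely on.

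The algorithm then roots $T$ and computes, for each tree edge, the number of leaves on each side, in $\mathcal{O}(n)$ total time. Type (i) splits are evaluated directly. For a clique node with incident subtrees of leaf counts $a_1,\dots,a_d$ summing to $n$, we must choose a subset $I$ (with $2\le |I|\le d-2$) minimizing $\max(\sum_I a_i, n-\sum_I a_i)$. For a star node the problem is the same, except that the centre edge's weight may go to either side, which is unconstrained in the same way. Each node is therefore a subset-sum instance with target $\lfloor n/2\rfloor$ over $d$ numbers of total $n$, solved by a bitset or boolean DP in $\mathcal{O}(dn)$ time. Since $\sum d = \mathcal{O}(n)$ over all nodes, the total is $\mathcal{O}(n^2)$. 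The size constraint $|I|\ge 2$ is handled by tracking, in the DP, counts capped at $2$; sides with a single edge are type (i) splits anyway, so dropping the constraint loses nothing and gains nothing invalid. This means the constraint can simply be ignored.

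The main obstacle is making the characterization of all splits via the split tree precise, in particular for degenerate nodes and for disconnected or tiny graphs. I would cite Cunningham for it rather than reprove it, and check that ignoring the $|I|\ge 2$ constraint is harmless. The running-time bound is then immediate: linear time for the decomposition plus $\mathcal{O}(n)$ per unit of node degree.
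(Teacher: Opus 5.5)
Your argument for connected $G$ is exactly the paper's proof. You compute the split decomposition in linear time, evaluate the edge-induced splits from subtree leaf counts, and at each clique or star node solve a two-bin subset-sum instance in $\mathcal{O}(dn)$ time. This totals $\mathcal{O}(n^2)$ because the quotient sizes sum to $\mathcal{O}(n)$. Dropping the restriction $2\le |I|\le d-2$ also matches the paper, which balances over all bipartitions of $V(G_x)$, each of which lifts to a split of $G$.

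The one real gap is your disconnected case, which goes beyond what the paper does. The paper only applies the theorem to a connected graph (the largest component $H$ in the $1$-ancilla-integrity algorithm). Cunningham's prime/clique/star trichotomy is a statement about connected graphs: for $C_5$ plus an isolated vertex, with the paper's definitions no nontrivial split is strong, so the unique quotient is the whole graph, which is none of the three.

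Your classification into types (a) and (b) is correct, but ``treating the other components as extra leaves hanging off a virtual root'' is not yet an algorithm. A type (b) split combines a split at one edge or node of a component's tree $T_C$ with an arbitrary distribution of the remaining components. So in general it is induced neither by an edge nor by a partition at a single node of the combined tree, and a subset-sum at the virtual root only recovers type (a). Adding the other components as free items at every node of every $T_C$ instead costs $\Theta(n^3)$ when done naively, for instance with one component having $\Theta(n)$ decomposition nodes plus $\Theta(n)$ isolated vertices.

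One way to close this within $\mathcal{O}(n^2)$ is as follows. For each component $C$, compute the set $\mathcal{A}_C\subseteq\{1,\dots,|C|-1\}$ of values $|A'|$ over splits $(A',B')$ of $C$ with both sides nonempty. This uses edge values and node subset-sums in $T_C$ and takes $\mathcal{O}(|C|^2)$ time. Then run one subset-sum DP over the components, with an extra bit recording whether the split component has already been chosen. Moving to the bit-$1$ layer at $C$ costs $\mathcal{O}(n|C|)$, and every other transition costs $\mathcal{O}(n)$ per component, so the total is $\mathcal{O}(n^2)$. The bit-$0$ layer is exactly type (a), and the optimal split is recovered by backtracking. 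Alternatively, simply state and prove the theorem for connected $G$, which is all the paper uses.
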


To find a most balanced split of $G$, we rely on split decompositions, which were first introduced by Cunningham~\cite{cunningham82}. Note that a graph may contain an exponential number of splits; for instance if the graph is a clique or star, then every cut is a split. However, the split decomposition is an object of polynomial size which ``displays'' all of the splits of $G$ simultaneously. To describe the split decomposition, we need some more definitions.

A split $(A, B)$ of an $n$-vertex graph is \emph{trivial} if it has width at least $n-1$, that is, if one of $A,B$ has size at most one. Otherwise the split is \emph{nontrivial}. A graph is \emph{prime} if it has no nontrivial split.
Two splits $(A, B)$ and $(C, D)$ \emph{cross} if all four sets $A \cap C, A \cap D, B \cap C, B \cap D$ are nonempty.
A split is \emph{strong} if it does not cross any other split.
Observe that every trivial split is strong. In general, any collection of non-crossing cuts is arranged in a ``tree-like'' fashion in a manner which we now explain for the special case of strong splits.

Let $G$ be a graph and let $T$ be a tree whose leaves are the vertices of $G$.
Let $e$ be an edge of $T$, and let $T_A, T_B$ be the two connected components of $T-e$.
Let $A$ be the set of vertices of $G$ corresponding to the leaves of $T$ that are in $T_A$, and define $B$ analogously.
We say that the cut $(A, B)$ is \emph{induced} by the edge $e$ of $T$.

The \emph{split decomposition} of a graph $G$ is the tree $T$ whose leaves are the vertices of $G$ and such that each strong split of $G$ is induced by a unique edge of $T$, and conversely each edge of $T$ induces a strong split.
Its existence follows from the definition of strong splits. (It is a fairly standard fact in graph theory that such a tree can be formed from any collection of pairwise non-crossing cuts; this fact is used for constructing many different types of decompositions. However, one can also just take the existence of $T$ as a fact from~\cite{linearSplitRevisited12, cunningham82}.)

Let $x$ be an internal node of $T$ and let $T_1, \ldots, T_k$ be the connected components of $T-x$. 
For each $i \in [k]$, let $V_i$ be the set of vertices of $G$ corresponding to the leaves of $T$ that are in $T_i$.
The \emph{quotient graph at $x$} is the graph $G_x$ on vertex set $\{v_1, \ldots, v_k\}$ with an edge between $v_i$ and $v_j$ in $G_x$ if and only if there is an edge between $V_i$ and $V_j$ in $G$, for all $i \neq j \in [k]$.
The \emph{lift} of a split $(A_x, B_x)$ of $G_x$ is the cut $(A, B)$ of $V(G)$ with $A = \bigcup_{i : v_i \in A_x}V_i$ and $B = \bigcup_{i : v_i \in B_x}V_i$.
It is straightforward to see that the lift of a split of $G_x$ is a split of $G$.
Cunningham proved that each quotient graph has one of three forms: it is either prime, a clique, or a star. We call an internal node $x$ \emph{prime} if the quotient $G_x$ is prime.
Note that the clique and the star are the only graphs with the property that every cut is also a split.

The split decomposition of a graph displays all its splits in the following sense.

\begin{lemma}[\cite{linearSplitRevisited12, cunningham82}]
\label{lem:splits-displayed}
    Let $G$ be a graph, let $T$ be the split decomposition of $G$, and let $(A, B)$ be a split of $G$.
    Then, either $(A, B)$ is induced by some edge of $T$, or there exists an internal node $x$ of $T$ such that $G_x$ is not prime, and a split $(A_x, B_x)$ of $G_x$ such that $(A, B)$ is the lift of $(A_x, B_x)$.
\end{lemma}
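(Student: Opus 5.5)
The plan is to use that strong splits never cross $(A,B)$. From this we locate a node $x$ of $T$ such that every branch of $T-x$ lies entirely inside $A$ or inside $B$. Then $(A,B)$ is a lift of a cut of $G_x$, and we check that this cut is a split.

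\emph{Finding $x$.} For each edge $e$ of $T$, let $(C_e,D_e)$ be the strong split it induces. Since it is strong, it does not cross $(A,B)$, so at least one of $C_e,D_e$ is contained in $A$ or in $B$. If both are, then $(C_e,D_e)=(A,B)$ and $(A,B)$ is induced by $e$, so we are done. Otherwise exactly one side of $e$ is not contained in $A$ or $B$, and we orient $e$ towards that side. Edges at leaves are oriented away from the leaf, because a singleton side is always contained in $A$ or $B$. Following the orientation in the finite tree $T$, we reach an internal node $x$ with no outgoing edge. For each component $T_i$ of $T-x$, the side of the edge $xT_i$ that is $V_i$ is then contained in $A$ or $B$. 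Hence $(A,B)$ is the lift of the cut $(A_x,B_x)$ with $A_x=\{v_i : V_i\subseteq A\}$ and $B_x$ defined symmetrically.

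\emph{$(A_x,B_x)$ is a split of $G_x$.} Each $(V_i,V\setminus V_i)$ is a split of $G$. So the set $F_i$ of vertices of $V_i$ with a neighbour outside $V_i$ is complete to the frontier of $V\setminus V_i$. It follows that for $i\ne j$ there is an edge between $V_i$ and $V_j$ exactly when $F_i$ and $F_j$ are complete to each other. Now suppose $v_i,v_{i'}\in A_x$ and $v_j,v_{j'}\in B_x$ with $v_iv_j$ and $v_{i'}v_{j'}$ edges of $G_x$. Then some vertices of $F_i$ and $F_{i'}$ have neighbours in $B$, and some vertices of $F_j$ and $F_{j'}$ have neighbours in $A$. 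Since $(A,B)$ is a split of $G$, these vertices are pairwise complete, so $v_iv_{j'}$ is an edge of $G_x$. Thus $(A_x,B_x)$ is a split of $G_x$.

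\emph{Conclusion and main obstacle.} If $G_x$ is not prime, we have the second outcome. If $G_x$ is prime, then $(A_x,B_x)$ is trivial, so one side is a single $v_i$. Then $(A,B)=(V_i,V\setminus V_i)$ is induced by the edge joining $x$ to $T_i$, which is the first outcome. The delicate step is the transfer of the split property through the quotient. There one has to use carefully that each branch $V_i$ is a side of a split, so that adjacency between branches is governed by complete frontiers. The orientation argument is routine once we note that the two sides of a single edge cannot both fail to be contained in $A$ or $B$.
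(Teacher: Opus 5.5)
Your proof is correct. The paper does not prove this lemma; it cites it from Cunningham and from Charbit, de Montgolfier, and Raffinot as part of the standard theory of split decompositions.

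Your argument derives it from just two facts built into the paper's definition of $T$: every edge of $T$ induces a strong split, and a strong split crosses no split. You combine these with the observation that a split of $G$ which is the lift of a cut of $G_x$ descends to a split of $G_x$. You never use Cunningham's classification of quotient graphs as prime, cliques, or stars, so your route is more elementary and self-contained. What the cited theory adds is exactly that classification. The paper needs it separately, in the proof of \cref{thm:compute-most-balanced-split}, to know that every cut of a non-prime $G_x$ is a split. It is not needed for this lemma.

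Two small remarks.

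First, the frontier sets $F_i$ are superfluous. The claimed equivalence (an edge between $V_i$ and $V_j$ exists exactly when $F_i$ and $F_j$ are complete) is also false as stated when a frontier is empty, because completeness then holds vacuously. What you actually use is simpler. Take edges $ab$ and $a'b'$ with $a\in V_i\subseteq A$, $a'\in V_{i'}\subseteq A$, $b\in V_j\subseteq B$ and $b'\in V_{j'}\subseteq B$. The split property of $(A,B)$ forces $ab'\in E(G)$, so the step you call delicate is immediate.

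Second, once you assume that no edge of $T$ has both sides inside $A$ or inside $B$, the prime case at the sink $x$ cannot occur, since it would produce such an edge. Your case split is still logically fine. It just means the sink automatically carries a nontrivial split, so $G_x$ is non-prime.
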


Dalhaus \cite{linearSplitDahlhaus} and Charbit, de Montgolfier, and Raffinot \cite{linearSplitRevisited12} proved that the split decomposition can be computed in linear time.

\begin{theorem}[\cite{linearSplitRevisited12, linearSplitDahlhaus}]
\label{thm:compute-decomp}
    There is an algorithm which takes as input a graph $G$ with $n$ vertices and $m$ edges and returns the split decomposition of $G$ as well as all quotient graphs in time $\mathcal{O}(n+m)$. Moreover, each internal node is labeled by whether or not it is prime.
\end{theorem}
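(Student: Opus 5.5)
This statement is imported from the literature rather than proved in the paper, so my plan is to cite it and to check only that the cited algorithms return exactly the objects we need. I would invoke the algorithm of Charbit, de Montgolfier, and Raffinot~\cite{linearSplitRevisited12}, or alternatively Dahlhaus~\cite{linearSplitDahlhaus}. The first step is to recall that~\cite{linearSplitRevisited12} represents the split decomposition as a \emph{graph-labelled tree}. This is a tree $T$ whose leaves are $V(G)$, where each internal node $x$ carries a label graph on the tree-edges incident to $x$. Two leaves are adjacent in $G$ exactly when the path between them in $T$ uses, at every internal node, a pair of tree-edges that are adjacent in that node's label graph.

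Next I would verify that, for the reduced graph-labelled tree output by their algorithm, the label graph at $x$ coincides with the quotient graph $G_x$ defined above. The point is that $v_i v_j \in E(G_x)$ if and only if some edge of $G$ runs between $V_i$ and $V_j$, and by the accessibility characterization such an edge exists exactly when the corresponding tree-edges are adjacent in the label. I would also check that the tree edges of the reduced tree induce precisely the strong splits, which is Cunningham's uniqueness theorem~\cite{cunningham82}. Here "reduced" means there are no two adjacent clique nodes and no two adjacent star nodes joined center-to-extremity.

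The algorithm itself processes vertices in a LexBFS order and inserts each new vertex into the current graph-labelled tree. The new vertex either attaches to an existing node, splits a node, or contracts a "fully mixed" subtree into a new prime node. An amortized analysis bounds the total cost by $\mathcal{O}(n+m)$. Since every label graph is stored explicitly and each node is tagged as prime, clique, or star during insertion, the required primality labels come for free. The total label size is $\mathcal{O}(n+m)$, so outputting all quotient graphs fits within the stated time bound.

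The only real obstacle is the amortized cost of the insertion step, namely charging the contraction of fully mixed subtrees to the LexBFS neighbourhood of the new vertex. I would not reprove this and would take it directly from~\cite[Theorem~1]{linearSplitRevisited12}. For our application, even a slower $\mathcal{O}(n^2)$ or $\mathcal{O}(nm)$ construction, such as Cunningham's original algorithm, would be enough, because the eventual bound in \Cref{thm:compute-most-balanced-split} is quadratic anyway.
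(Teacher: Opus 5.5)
Citing the literature is exactly what the paper does, since it gives no proof of this statement. Your consistency checks are sound for connected $G$, which is the case used later: the labels of the reduced graph-labelled tree are the quotient graphs $G_x$, and its edges induce exactly the strong splits.

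However, you have attributed the wrong algorithm to the citation. LexBFS-ordered incremental insertion into a graph-labelled tree, contracting a fully mixed subtree into a new prime node, is the algorithm of Gioan, Paul, Tedder and Corneil. Its amortized analysis gives $\mathcal{O}((n+m)\alpha(n+m))$, not $\mathcal{O}(n+m)$. The linear-time algorithms actually cited here work differently. Charbit, de Montgolfier and Raffinot build on orthogonals of set families, and Dahlhaus's algorithm is not incremental. So the insertion argument you attribute to \cite[Theorem~1]{linearSplitRevisited12} does not come from that paper, and your description, taken literally, does not support the stated linear bound.

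Since you defer the running time to the citation anyway, nothing downstream breaks: a near-linear or quadratic construction suffices for \Cref{thm:compute-most-balanced-split}. An $\mathcal{O}(nm)$ construction, by contrast, would not preserve that theorem's quadratic bound, although it would still yield \Cref{thm:ancillaVuln}.
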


Note that the above result implies that the sum of the sizes of all the quotient graphs is linear in the size of the input graph. We state this fact formally for later use.
Given a tree $T$, we denote by $I(T)$ the set of internal nodes of $T$. 

\begin{lemma}[\cite{linearSplitRevisited12, linearSplitDahlhaus}]\label{lem:sum-small}
    Let $G$ be a graph and let $T$ be the split decomposition of $G$. Then, \[\sum_{x \in I(T)}|V(G_x)| = \mathcal{O}(n).\]
\end{lemma}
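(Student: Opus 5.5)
The plan is to prove the bound by a counting argument on the tree $T$ alone. It does not need the linear-time algorithm of \Cref{thm:compute-decomp}. The key observation is that, by definition, the quotient graph $G_x$ has exactly one vertex for each component of $T-x$. So $|V(G_x)| = \deg_T(x)$, and the sum to bound is $\sum_{x\in I(T)} \deg_T(x)$.

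The first step is to show that every internal node of $T$ has degree at least $3$, assuming $n\ge 3$; smaller $n$ is trivial. Suppose an internal node $x$ had degree $2$, with incident edges $e_1,e_2$. Removing $e_1$ or removing $e_2$ separates the same set of leaves, because $x$ has no leaf attached besides what lies on the two sides. So $e_1$ and $e_2$ would induce the same strong split. This contradicts the requirement that each strong split is induced by a \emph{unique} edge of $T$. Internal nodes of degree $1$ do not exist, since the leaves of $T$ are exactly the vertices of $G$.

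The second step is the standard degree count for a tree with $n$ leaves and $q=|I(T)|$ internal nodes, all of degree at least $3$. We have $|E(T)| = n+q-1$ and
\[\sum_{v} \deg_T(v) = 2(n+q-1) \ge n + 3q.\]
This gives $q\le n-2$, hence $|E(T)|\le 2n-3$. Consequently
\[\sum_{x\in I(T)} |V(G_x)| = \sum_{x \in I(T)}\deg_T(x) = 2|E(T)| - n \le 3n-6 = \mathcal{O}(n).\]

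The only step needing care is the first one: ruling out degree-$2$ internal nodes using the uniqueness clause in the definition of the split decomposition. I expect this to be the main (mild) obstacle, since everything else is elementary tree counting.
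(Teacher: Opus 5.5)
Your proof is correct, and it takes a different route from the paper, which gives no proof of its own. The paper attributes the bound to \cite{linearSplitRevisited12, linearSplitDahlhaus} and justifies it in one line as a consequence of \Cref{thm:compute-decomp}: an algorithm that outputs every quotient graph in time $\mathcal{O}(n+m)$ can only produce quotient graphs of total size $\mathcal{O}(n+m)$.

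You instead argue directly from the definition of $T$. You note that $|V(G_x)|=\deg_T(x)$. You use the uniqueness clause to exclude internal nodes of degree $2$, since both edges at such a node would induce the same strong split. The handshake count then gives $|I(T)|\le n-2$ and $\sum_{x\in I(T)}|V(G_x)|\le 3n-6$.

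Your route is self-contained and does not depend on any algorithm. It gives explicit constants, and it also yields for free the fact that $T$ has $\mathcal{O}(n)$ nodes, which is used in the proof of \Cref{thm:compute-most-balanced-split}.

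Your bound is also genuinely sharper than the paper's justification. The algorithmic remark only bounds the total size by $\mathcal{O}(n+m)$, which is $\Theta(n^2)$ for dense graphs. The $\mathcal{O}(n)$ vertex bound you prove is exactly what is needed to conclude that the $\mathcal{O}(|V(G_x)|\cdot n)$ list-balancing step, summed over all non-prime nodes, costs $\mathcal{O}(n^2)$ in total.

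What the citation buys is brevity and a direct link to the data structure the cited algorithms actually compute.
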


In light of \cref{lem:splits-displayed}, given a graph $G$ with split decomposition $T$, to compute a most balanced split of $G$, it suffices to compute a most balanced split induced by an edge of $T$, and a most balanced split that is the lift of a split of some quotient graph $G_x$.

This is how we proceed to prove \cref{thm:compute-most-balanced-split}, which we restate for convenience.

\computesplit*

\begin{proof}
    Given $G$, we start by computing the split decomposition $T$ of $G$.
    This can be done in time $\mathcal{O}(n^2)$ by \cref{thm:compute-decomp}. Note that $T$ has $\mathcal{O}(n)$ nodes by \Cref{lem:sum-small}.

    We view $T$ as being rooted at an arbitrary node. Given a node $x$ of $T$, let $d(x)$ denote the number of vertices of $G$ which are a descendant of $x$ in $T$. First, by performing bottom-up dynamic programming on $T$, we compute all values of $d(x)$ in time $\mathcal{O}(n^2)$. From there, we determine in time $\mathcal{O}(n^2)$ the minimum width of a split of $G$ induced by an edge of $T$.
    
    Now we consider all internal nodes $x$ of $T$ which are not prime in arbitrary order. Thus $G_x$ is either a clique or a star, and all cuts of $G_x$ are splits. Note that we can use the values of $d(\cdot)$ to compute, for each vertex $u$ of $G_x$, how many vertices $w(u)$ of $G$ are in the subtree of $T-x$ corresponding to $u$. We would like to compute a partition of $V(G_x)$ into two parts which minimizes the maximum weight of a part. Write $V(G_x) = \{u_1, \ldots, u_{t}\}$. The problem we are trying to solve is an instance of \textsc{weighted list balancing} with $m = 2$ and items $u_1, \ldots, u_{t}$, each with weight $w(u_i)$ and list $L_i = \{1, 2\}$. By \cref{lem:list-balancing-easy}, this can be done in time $\mathcal{O}(t \cdot n)$. In this manner, we compute the minimum width of a lift of a split of $G_x$. Running this procedure on all internal nodes $x$ of $T$ takes time $\mathcal{O}(n^2)$ by \cref{lem:sum-small}.

    At this point, we have computed at most $|I(T)| + 1 = \mathcal{O}(n)$ widths of splits of $G$ (including minimum width of a split of $G$ induced by an edge of $T$). We simply return the minimum among all of them. The correctness follows immediately from \cref{lem:splits-displayed}.
\end{proof}

\subsection{Putting everything together}
\label{sec:puttingTogetherAncilla}

We now combine all of the results in this section in order to prove \Cref{thm:ancillaVuln}, which is restated below for convenience.

\thmAncillaVuln*
\begin{proof}
    By \Cref{lem:reductionAncilla}, we can compute the $1$-ancilla integrity of an $n$-vertex graph by $n+1$ instances of computing the flip-integrity of an $n$-vertex graph. So in this step we gain an extra factor of $n$. Now we consider computing the flip-integrity of an $n$-vertex graph $G$.
    
    By \Cref{lem:makeGConn}, it suffices to compute the flip-integrity of the largest component $H$ of $G$. We now apply the algorithms in \Cref{lem:constrainedFlipVul} and \Cref{thm:compute-most-balanced-split} to $H$. Thus, in time $\mathcal{O}(n^5)$, we find a set $S \in \mathcal{S}_{\geq 3}(H)$ which minimizes the maximum size of a component of $H \circ S$. We also find a most balanced split $(A,B)$ of $H$ in time $\mathcal{O}(n^2)$. Let $S'$ be the set of all vertices in $A$ which have a neighbor in $B$ and all vertices in $B$ which have a neighbor in $A$. We return the minimum maximum component size of the two graphs $H \circ S$ and $H \circ S'$.
    
    Now, consider an optimal flip set $S^*$ of $H$. Note that $H \circ S^*$ has at least two components by \Cref{lem:lessThanN}. If $H \circ S^*$ has at least three components, then we have found a flip which is at least as good since we considered $H \circ S$. If $H \circ S^*$ has exactly two components, we let $A^*$ and $B^*$ denote their vertex sets. Then $(A^*, B^*)$ is a split of $H$, and so it has width at least the width of $(A,B)$. This means that $\max(|A|, |B|)\leq \max(|A^*|, |B^*|)$. Thus we have found a flip which is at least as good since every component of $H \circ S'$ has size at most $\max(|A|, |B|)$. This completes the proof of \Cref{thm:ancillaVuln}. 
\end{proof}

\section{Acknowledgments}
Nathan Claudet acknowledges funding from the Austrian Science Fund (FWF) [SFB BeyondC F7102, DOI: 10.55776/F71]. For open access purposes, the authors have applied a CC BY public copyright license to any author accepted manuscript version arising from this submission.
Rose McCarty was partially supported by the National Science Foundation under Grant No. DMS-2452111.
Blair D. Sullivan gratefully acknowledges partial financial support for this research by the Fulbright Program, which is sponsored by the U.S. Department of State and the Franco-American Commission -- Fulbright France. The contents of this work are solely the responsibility of the authors and do not necessarily represent the official views of the Fulbright Program, the Government of the United States, or the Franco-American Commission.

\bibliography{integrity}

@article{Hein04,
	author = {Marc Hein and Jens Eisert and Hans J. Briegel},
	doi = {10.1103/physreva.69.062311},
	eprint = {quant-ph/0307130},
	journal = {Physical Review A},
	month = {Jun},
	number = {6},
	publisher = {American Physical Society ({APS})},
	title = {Multiparty entanglement in graph states},
	volume = {69},
	year = 2004
}

@article{Hein06,
	author = {Hein, Marc and D{\"u}r, Wolfgang and Eisert, Jens and Raussendorf, Robert and Maarten Van den Nest and Briegel, Hans J.},
	doi = {10.3254/978-1-61499-018-5-115},
	eprint = {quant-ph/0602096},
	journal = {Quantum computers, algorithms and chaos},
	month = {Mar},
	title = {Entanglement in Graph States and its Applications},
	volume = {162},
	year = {2006}
}

@article{foliagePartition,
    title = {The Foliage Partition: An Easy-to-Compute {LC}-Invariant for Graph States},
    journal = {Quantum},
    volume = {9},
    pages = {1720},
    year = {2025},
    issn = {2521-327X},
    doi = {	https://doi.org/10.22331/q-2025-04-24-1720},
    url = {https://quantum-journal.org/papers/q-2025-04-24-1720/},
    author = {Adam Burchardt and Frederik Hahn}
}

@article{VandenNest04,
	author = {Maarten Van den Nest and Jeroen Dehaene and Bart De Moor},
	doi = {10.1103/physreva.69.022316},
	eprint = {quant-ph/0308151},
	journal = {Physical Review A},
	month = {Feb},
	number = {2},
	publisher = {American Physical Society ({APS})},
	title = {Graphical description of the action of local {C}lifford transformations on graph states},
	volume = {69},
	year = {2004}
}

@book{MiningDatasets,
    author = {Leskovec, Jure and Rajaraman, Anand and Ullman, Jeffrey David},
    title = {Mining of Massive Datasets},
    year = {2014},
    isbn = {1107077230},
    publisher = {Cambridge University Press},
    address = {USA},
    edition = {2nd}
}

@article{davies2025preparing,
  title={Preparing graph states forbidding a vertex-minor},
  author={Davies, James and Jena, Andrew},
  journal={arXiv preprint arXiv:2504.00291},
  year={2025},
  url={https://arxiv.org/abs/2504.00291}
}

@article{transductionSurvey26,
  title={On first-order definable operations on relational structures},
  author={Bruno Courcelle},
  journal={arXiv preprint arXiv:2605.31260},
  year={2026}
}

@article{bravyi2024generating,
  title={Generating $ k $ EPR-pairs from an $ n $-party resource state},
  author={Bravyi, Sergey and Sharma, Yash and Szegedy, Mario and De Wolf, Ronald},
  journal={Quantum},
  volume={8},
  pages={1348},
  year={2024},
  publisher={Verein zur F{\"o}rderung des Open Access Publizierens in den Quantenwissenschaften}
}

@Article{DDV16,
    author={Drange, P{\aa}l Gr{\o}n{\aa}s
    and Dregi, Markus
    and van 't Hof, Pim},
    title={On the Computational Complexity of Vertex Integrity and Component Order Connectivity},
    journal={Algorithmica},
    year={2016},
    month={Dec},
    day={01},
    volume={76},
    number={4},
    pages={1181-1202},
    issn={1432-0541},
    doi={10.1007/s00453-016-0127-x},
    url={https://doi.org/10.1007/s00453-016-0127-x}
}

@misc{BPPSS25,
    title={Low rank {MSO}}, 
    author={Mikołaj Bojańczyk and Michał Pilipczuk and Wojciech Przybyszewski and Marek Sokołowski and Giannos Stamoulis},
    year={2025},
    eprint={2502.08476},
    archivePrefix={arXiv},
    primaryClass={cs.LO},
    url={https://arxiv.org/abs/2502.08476}, 
}

@misc{LS,
  author = {Lenstra, Jan Karel and Shmoys, David B.},
  title  = {The {E}lements of {S}cheduling, {C}hapter 8: {M}inmax {C}riteria without {P}reemption}
}

@article{P96,
    author={Peeters, Ren{\'e}},
    title={Orthogonal representations over finite fields and the chromatic number of graphs},
    journal={Combinatorica},
    year={1996},
    month={Sep},
    day={01},
    volume={16},
    number={3},
    pages={417-431},
    issn={1439-6912},
    doi={10.1007/BF01261326},
    url={https://doi.org/10.1007/BF01261326}
}

@inproceedings{campbell2026erdHos,
  title={The {E}rd{\H{o}}s-{P}{\'o}sa property for circle graphs as vertex-minors},
  author={Campbell, Rutger and Gollin, J Pascal and Hatzel, Meike and Kwon, O-joung and McCarty, Rose and Oum, Sang-il and Wiederrecht, Sebastian},
  booktitle={Proceedings of the 2026 Annual ACM-SIAM Symposium on Discrete Algorithms (SODA)},
  pages={4930--4952},
  year={2026},
  organization={SIAM}
}

@phdthesis{mccarty2021local,
  title={Local structure for vertex-minors},
  author={McCarty, Rose},
  school={University of Waterloo},
  type={PhD thesis},
  address={Waterloo, Ontario, Canada},
  year={2021},
  url={https://uwspace.uwaterloo.ca/items/1cfbfc52-2e30-44a4-b3fb-28493c3d94f0}  
}

@article{2010editDistSurvey,
  title={A survey of graph edit distance},
  author={Gao, Xinbo and Xiao, Bing and Tao, Dacheng and Li, Xuelong},
  journal={Pattern Analysis and Applications},
  volume={13},
  number={1},
  pages={113--129},
  year={2010},
  doi={10.1007/s10044-008-0141-y}
}

@article{2023editDistSurvey,
    title = {A survey of parameterized algorithms and the complexity of edge modification},
    journal = {Computer Science Review},
    volume = {48},
    pages = {100556},
    year = {2023},
    issn = {1574-0137},
    doi = {https://doi.org/10.1016/j.cosrev.2023.100556},
    url = {https://www.sciencedirect.com/science/article/pii/S1574013723000230},
    author = {Christophe Crespelle and Pål Grønås Drange and Fedor V. Fomin and Petr Golovach}
}

@article{transformingStates,
	author = {Dahlberg, Axel  and Wehner, Stephanie},
	title = {Transforming graph states using single-qubit operations},
	journal = {Philosophical Transactions of the Royal Society A: Mathematical, Physical and Engineering Sciences},
	volume = {376},
	number = {2123},
	pages = {20170325},
	year = {2018},
	doi = {10.1098/rsta.2017.0325},
	URL = {https://royalsocietypublishing.org/doi/abs/10.1098/rsta.2017.0325},
	eprint = {https://royalsocietypublishing.org/doi/pdf/10.1098/rsta.2017.0325}
}

@article {BouchetCircleChar,
    AUTHOR = {Bouchet, André},
     TITLE = {Circle graph obstructions},
   JOURNAL = {J. Combin. Theory Ser. B},
  FJOURNAL = {Journal of Combinatorial Theory. Series B},
    VOLUME = {60},
      YEAR = {1994},
    NUMBER = {1},
     PAGES = {107--144},
      ISSN = {0095-8956},
       DOI = {10.1006/jctb.1994.1008},
       URL = {https://doi-org.proxy.lib.uwaterloo.ca/10.1006/jctb.1994.1008},
}

@article{RMgrid,
    title = {The Grid Theorem for vertex-minors},
    journal = {J. Combin. Theory Ser. B},
    volume = {158},
    pages = {93-116},
    year = {2023},
    issn = {0095-8956},
    doi = {https://doi.org/10.1016/j.jctb.2020.08.004},
    url = {https://www.sciencedirect.com/science/article/pii/S0095895620300794},
    author = {Jim Geelen and O-joung Kwon and Rose McCarty and Paul Wollan}
}

@article {graphicIsoSystems,
    AUTHOR = {Bouchet, André},
     TITLE = {Graphic presentations of isotropic systems},
   JOURNAL = {J. Combin. Theory Ser. B},
  FJOURNAL = {Journal of Combinatorial Theory. Series B},
    VOLUME = {45},
      YEAR = {1988},
    NUMBER = {1},
     PAGES = {58--76},
      ISSN = {0095-8956},
       DOI = {10.1016/0095-8956(88)90055-X},
       URL = {https://doi.org/10.1016/0095-8956(88)90055-X},
}

@article {graphMinors20WQO,
    AUTHOR = {Robertson, Neil and Seymour, P. D.},
     TITLE = {Graph minors. {XX}. {W}agner's conjecture},
   JOURNAL = {J. Combin. Theory Ser. B},
  FJOURNAL = {Journal of Combinatorial Theory. Series B},
    VOLUME = {92},
      YEAR = {2004},
    NUMBER = {2},
     PAGES = {325--357},
      ISSN = {0095-8956},
       DOI = {10.1016/j.jctb.2004.08.001},
       URL = {https://doi-org.proxy.lib.uwaterloo.ca/10.1016/j.jctb.2004.08.001},
}

@incollection {deltaMatroidsSurvey,
    AUTHOR = {Moffatt, Iain},
     TITLE = {Delta-matroids for graph theorists},
 BOOKTITLE = {Surveys in combinatorics 2019},
    SERIES = {London Math. Soc. Lecture Note Ser.},
    VOLUME = {456},
     PAGES = {167--220},
 PUBLISHER = {Cambridge Univ. Press, Cambridge},
      YEAR = {2019},
      ISBN = {978-1-108-74072-2}
}

@book{BM08,
  author    = {Bondy, J. A. and Murty, U. S. R.},
  title     = {Graph Theory},
  series    = {Graduate Texts in Mathematics},
  volume    = {244},
  publisher = {Springer},
  year      = {2008},
  doi       = {10.1007/978-1-84628-970-5}
}

@article {RWAndVM,
    AUTHOR = {Sang-il Oum},
     TITLE = {Rank-width and vertex-minors},
   JOURNAL = {J. Combin. Theory Ser. B},
  FJOURNAL = {Journal of Combinatorial Theory. Series B},
    VOLUME = {95},
      YEAR = {2005},
    NUMBER = {1},
     PAGES = {79--100},
      ISSN = {0095-8956},
       DOI = {10.1016/j.jctb.2005.03.003},
       URL = {https://doi.org/10.1016/j.jctb.2005.03.003},
}

@incollection {connectivityIsotropic,
    AUTHOR = {Bouchet, André},
     TITLE = {Connectivity of isotropic systems},
 BOOKTITLE = {Combinatorial {M}athematics: {P}roceedings of the {T}hird
              {I}nternational {C}onference ({N}ew {Y}ork, 1985)},
    SERIES = {Ann. New York Acad. Sci.},
    VOLUME = {555},
     PAGES = {81--93},
 PUBLISHER = {New York Acad. Sci., New York},
      YEAR = {1989},
       DOI = {10.1111/j.1749-6632.1989.tb22439.x},
       URL = {https://doi.org/10.1111/j.1749-6632.1989.tb22439.x},
}

@INPROCEEDINGS{SGT23,
  author={Sen, Aniruddha and Goodenough, Kenneth and Towsley, Don},
  booktitle={2023 IEEE International Conference on Quantum Computing and Engineering (QCE)}, 
  title={Multipartite Entanglement in Quantum Networks Using Subgraph Complementations}, 
  year={2023},
  volume={02},
  number={},
  pages={252-253},
  doi={10.1109/QCE57702.2023.10229}
}

@INPROCEEDINGS {flipWidth23,
    author = {Szymon Toru{\'{n}}czyk},
    booktitle = {2023 IEEE 64th Annual Symposium on Foundations of Computer Science (FOCS)},
    title = {Flip-width: Cops and Robber on dense graphs},
    year = {2023},
    volume = {},
    issn = {},
    pages = {663-700},
    doi = {10.1109/FOCS57990.2023.00045},
    url = {https://doi.ieeecomputersociety.org/10.1109/FOCS57990.2023.00045},
    publisher = {IEEE Computer Society},
    address = {Los Alamitos, CA, USA},
    month = {nov}
}

@INPROCEEDINGS {flipperGamesMonStable23,
    author = {Gajarsk{\'{y}}, Jakub and M{\"{a}}hlmann, Nikolas and McCarty, Rose and Ohlmann, Pierre and Pilipczuk, Micha{\l} and Przybyszewski, Wojciech and Siebertz, Sebastian and Soko{\l}owski, Marek and Toru{\'{n}}czyk, Szymon},
    title =	{{Flipper Games for Monadically Stable Graph Classes}},
    booktitle =	{50th International Colloquium on Automata, Languages, and Programming (ICALP)},
    pages =	{128:1--128:16},
    series =	{Leibniz International Proceedings in Informatics (LIPIcs)},
    ISBN =	{978-3-95977-278-5},
    ISSN =	{1868-8969},
    year =	{2023},
    volume =	{261},
    editor =	{Etessami, Kousha and Feige, Uriel and Puppis, Gabriele},
    publisher =	{Schloss Dagstuhl -- Leibniz-Zentrum f{\"u}r Informatik},
    address =	{Dagstuhl, Germany},
    URL =		{https://drops.dagstuhl.de/entities/document/10.4230/LIPIcs.ICALP.2023.128},
    URN =		{urn:nbn:de:0030-drops-181804},
    doi =		{10.4230/LIPIcs.ICALP.2023.128}
}

@inproceedings{mergeWidth25,
    author = {Dreier, Jan and Toru{\'{n}}czyk, Szymon},
    title = {Merge-Width and First-Order Model Checking},
    year = {2025},
    isbn = {9798400715105},
    publisher = {Association for Computing Machinery},
    address = {New York, NY, USA},
    url = {https://doi.org/10.1145/3717823.3718259},
    doi = {10.1145/3717823.3718259},
    booktitle = {Proceedings of the 57th Annual ACM Symposium on Theory of Computing},
    pages = {1944–1955},
    numpages = {12},
    location = {Prague, Czechia},
    series = {STOC '25}
}

@article {partialComplementFGST20,
    AUTHOR = {Fomin, Fedor V. and Golovach, Petr A. and Str\o mme, Torstein
              J. F. and Thilikos, Dimitrios M.},
     TITLE = {Subgraph complementation},
   JOURNAL = {Algorithmica},
  FJOURNAL = {Algorithmica. An International Journal in Computer Science},
    VOLUME = {82},
      YEAR = {2020},
    NUMBER = {7},
     PAGES = {1859--1880},
      ISSN = {0178-4617,1432-0541},
   MRCLASS = {05C85 (05C75 68Q17 68Q25 68R10)},
  MRNUMBER = {4099974},
       DOI = {10.1007/s00453-020-00677-8},
       URL = {https://doi.org/10.1007/s00453-020-00677-8},
}

@article {partialComplHFree22,
    AUTHOR = {Antony, Dhanyamol and Garchar, Jay and Pal, Sagartanu and
              Sandeep, R. B. and Sen, Sagnik and Subashini, R.},
     TITLE = {On subgraph complementation to {$H$}-free graphs},
   JOURNAL = {Algorithmica},
  FJOURNAL = {Algorithmica. An International Journal in Computer Science},
    VOLUME = {84},
      YEAR = {2022},
    NUMBER = {10},
     PAGES = {2842--2870},
      ISSN = {0178-4617,1432-0541},
   MRCLASS = {68R10},
  MRNUMBER = {4491029},
       DOI = {10.1007/s00453-022-00991-3},
       URL = {https://doi.org/10.1007/s00453-022-00991-3},
}

@article {partialComplAPS25,
    AUTHOR = {Antony, Dhanyamol and Pal, Sagartanu and Sandeep, R. B.},
     TITLE = {Algorithms for subgraph complementation to some classes of
              graphs},
   JOURNAL = {Inform. Process. Lett.},
  FJOURNAL = {Information Processing Letters},
    VOLUME = {188},
      YEAR = {2025},
     PAGES = {Paper No. 106530, 5},
      ISSN = {0020-0190,1872-6119},
   MRCLASS = {68R10 (68Q27)},
  MRNUMBER = {4785291},
       DOI = {10.1016/j.ipl.2024.106530},
       URL = {https://doi.org/10.1016/j.ipl.2024.106530},
}

@article {partialComplHardAPS24,
    AUTHOR = {Antony, Dhanyamol and Pal, Sagartanu and Sandeep, R. B. and
              Subashini, R.},
     TITLE = {Cutting a tree with subgraph complementation is hard, except
              for some small trees},
   JOURNAL = {J. Graph Theory},
  FJOURNAL = {Journal of Graph Theory},
    VOLUME = {107},
      YEAR = {2024},
    NUMBER = {1},
     PAGES = {126--168},
      ISSN = {0364-9024,1097-0118},
   MRCLASS = {05C85 (05C05 05C75)},
  MRNUMBER = {4788444},
MRREVIEWER = {Jesper\ Jansson},
       DOI = {10.1002/jgt.23112},
       URL = {https://doi.org/10.1002/jgt.23112},
}

@incollection {GroheSimilarity25,
    AUTHOR = {Grohe, Martin},
     TITLE = {Some thoughts on graph similarity},
 BOOKTITLE = {Principles of verification: cycling the probabilistic
              landscape---essays dedicated to {J}oost-{P}ieter {K}atoen on
              the occasion of his 60th birthday. {P}art {I}},
    SERIES = {Lecture Notes in Comput. Sci.},
    VOLUME = {15260},
     PAGES = {369--392},
 PUBLISHER = {Springer, Cham},
      YEAR = {2025},
      ISBN = {978-3-031-75782-2; 978-3-031-75783-9},
   MRCLASS = {68R10},
  MRNUMBER = {4864281},
       DOI = {10.1007/978-3-031-75783-9\_15},
       URL = {https://doi.org/10.1007/978-3-031-75783-9_15},
}

@article{BRpersistency2001,
  title = {Persistent Entanglement in Arrays of Interacting Particles},
  author = {Briegel, Hans J. and Raussendorf, Robert},
  journal = {Phys. Rev. Lett.},
  volume = {86},
  issue = {5},
  pages = {910--913},
  numpages = {0},
  year = {2001},
  month = {Jan},
  publisher = {American Physical Society},
  doi = {10.1103/PhysRevLett.86.910},
  url = {https://link.aps.org/doi/10.1103/PhysRevLett.86.910}
}

@article {BPRminrank22,
    AUTHOR = {Buchanan, Calum and Purcell, Christopher and Rombach, Puck},
     TITLE = {Subgraph complementation and minimum rank},
   JOURNAL = {Electron. J. Combin.},
  FJOURNAL = {Electronic Journal of Combinatorics},
    VOLUME = {29},
      YEAR = {2022},
    NUMBER = {1},
     PAGES = {Paper No. 1.38, 20},
      ISSN = {1077-8926},
   MRCLASS = {05C62 (05C50 05C75)},
  MRNUMBER = {4395941},
       DOI = {10.37236/10383},
       URL = {https://doi.org/10.37236/10383},
}

@article {Aminrank75,
    AUTHOR = {Lempel, Abraham},
     TITLE = {Matrix factorization over {${\rm GF}(2)$} and trace-orthogonal
              bases of {${\rm GF}(2\sp{n})$}},
   JOURNAL = {SIAM J. Comput.},
  FJOURNAL = {SIAM Journal on Computing},
    VOLUME = {4},
      YEAR = {1975},
     PAGES = {175--186},
      ISSN = {0097-5397},
   MRCLASS = {15A33},
  MRNUMBER = {376715},
MRREVIEWER = {A.\ D.\ Porter},
       DOI = {10.1137/0204014},
       URL = {https://doi.org/10.1137/0204014},
}

@article{vdNDVB07,
  title = {Classical simulation versus universality in measurement-based quantum computation},
  author = {Van den Nest, M. and D{\"u}r, W. and Vidal, G. and Briegel, H. J.},
  journal = {Phys. Rev. A},
  volume = {75},
  issue = {1},
  pages = {012337},
  numpages = {15},
  year = {2007},
  month = {Jan},
  publisher = {American Physical Society},
  doi = {10.1103/PhysRevA.75.012337},
  url = {https://link.aps.org/doi/10.1103/PhysRevA.75.012337}
}

@article {FGP2020,
    AUTHOR = {Fomin, Fedor V. and Golovach, Petr A. and Panolan, Fahad},
     TITLE = {Parameterized low-rank binary matrix approximation},
   JOURNAL = {Data Min. Knowl. Discov.},
  FJOURNAL = {Data Mining and Knowledge Discovery},
    VOLUME = {34},
      YEAR = {2020},
    NUMBER = {2},
     PAGES = {478--532},
      ISSN = {1384-5810,1573-756X},
   MRCLASS = {62H30},
  MRNUMBER = {4064386},
       DOI = {10.1007/s10618-019-00669-5},
       URL = {https://doi.org/10.1007/s10618-019-00669-5},
}

@article {MMS16,
    AUTHOR = {Meesum, S. M. and Misra, Pranabendu and Saurabh, Saket},
     TITLE = {Reducing rank of the adjacency matrix by graph modification},
   JOURNAL = {Theoret. Comput. Sci.},
  FJOURNAL = {Theoretical Computer Science},
    VOLUME = {654},
      YEAR = {2016},
     PAGES = {70--79},
      ISSN = {0304-3975,1879-2294},
   MRCLASS = {05C50 (05C75 05C85 68Q25)},
  MRNUMBER = {3574485},
       DOI = {10.1016/j.tcs.2016.02.020},
       URL = {https://doi.org/10.1016/j.tcs.2016.02.020},
}

@article{PHSH26,
  title={Bipartitioning of Graph States for Distributed Measurement-Based Quantum Computing},
  author={Kjell Fredrik Pettersen and Matthias Heller and Giorgio Sartor and Raoul Heese},
  journal={arXiv preprint arXiv2601.06332},
  year={2026},
  url={https://arxiv.org/abs/2601.06332}
}

@article {photonicLinearRW,
    AUTHOR = {Li, Bikun and Economou, Sophia E. and Barnes, Edwin},
     TITLE = {Photonic resource state generation from a minimal number of quantum emitters},
   JOURNAL = {npj Quantum Information},
    VOLUME = {8},
    issue = {1},
      YEAR = {2022},
       DOI = {10.1038/s41534-022-00522-6},
       URL = {https://doi.org/10.1038/s41534-022-00522-6},
}

@article {cunningham82,
    AUTHOR = {Cunningham, William H.},
     TITLE = {Decomposition of directed graphs},
   JOURNAL = {SIAM J. Algebraic Discrete Methods},
  FJOURNAL = {Society for Industrial and Applied Mathematics. Journal on
              Algebraic and Discrete Methods},
    VOLUME = {3},
      YEAR = {1982},
    NUMBER = {2},
     PAGES = {214--228},
      ISSN = {0196-5212},
   MRCLASS = {05C20 (05C70 68E10)},
  MRNUMBER = {655562},
       DOI = {10.1137/0603021},
       URL = {https://doi.org/10.1137/0603021},
}

@article {spinrad89,
    AUTHOR = {Spinrad, Jeremy},
     TITLE = {Prime testing for the split decomposition of a graph},
   JOURNAL = {SIAM J. Discrete Math.},
  FJOURNAL = {SIAM Journal on Discrete Mathematics},
    VOLUME = {2},
      YEAR = {1989},
    NUMBER = {4},
     PAGES = {590--599},
      ISSN = {0895-4801},
   MRCLASS = {68Q25 (05C75 68R10)},
  MRNUMBER = {1018541},
MRREVIEWER = {Lawrence\ V.\ Saxton},
       DOI = {10.1137/0402051},
       URL = {https://doi.org/10.1137/0402051},
}

@article {linearSplitRevisited12,
    AUTHOR = {Charbit, Pierre and de Montgolfier, Fabien and Raffinot,
              Mathieu},
     TITLE = {Linear time split decomposition revisited},
   JOURNAL = {SIAM J. Discrete Math.},
  FJOURNAL = {SIAM Journal on Discrete Mathematics},
    VOLUME = {26},
      YEAR = {2012},
    NUMBER = {2},
     PAGES = {499--514},
      ISSN = {0895-4801,1095-7146},
   MRCLASS = {68R10 (05C51 05C85)},
  MRNUMBER = {2967479},
MRREVIEWER = {Matthew\ Yancey},
       DOI = {10.1137/10080052X},
       URL = {https://doi.org/10.1137/10080052X},
}

@misc{KT26,
      title={Connectivity augmentation is fixed-parameter tractable}, 
      author={Tuukka Korhonen and Mikkel Thorup},
      year={2026},
      eprint={2605.11757},
      archivePrefix={arXiv},
      primaryClass={cs.DS},
      url={https://arxiv.org/abs/2605.11757}, 
}

@inbook{CR26,
author = {Johannes Carmesin and M. S. Ramanujan},
title = {Augmenting to 4-vertex connectivity is fixed-parameter tractable},
booktitle = {Proceedings of the 2026 Annual ACM-SIAM Symposium on Discrete Algorithms (SODA)},
chapter = {},
pages = {2015-2042},
doi = {10.1137/1.9781611978971.73},
URL = {https://epubs.siam.org/doi/abs/10.1137/1.9781611978971.73},
eprint = {https://epubs.siam.org/doi/pdf/10.1137/1.9781611978971.73}, 
year = {2026}
}

@misc{PilipczukSurvey26,
      title={Graph classes through the lens of logic}, 
      author={Micha{\l} Pilipczuk},
      year={2026},
      eprint={2501.04166},
      archivePrefix={arXiv},
      primaryClass={math.CO},
      url={https://arxiv.org/abs/2501.04166}, 
}

@article{linearSplitDahlhaus,
    title = {Parallel Algorithms for Hierarchical Clustering and Applications to Split Decomposition and Parity Graph Recognition},
    journal = {Journal of Algorithms},
    volume = {36},
    number = {2},
    pages = {205-240},
    year = {2000},
    issn = {0196-6774},
    doi = {https://doi.org/10.1006/jagm.2000.1090},
    url = {https://www.sciencedirect.com/science/article/pii/S0196677400910903},
    author = {Elias Dahlhaus}
}

@misc{OS26,
      title={Polynomial-size encoding of all cuts of small value in integer-valued symmetric submodular functions}, 
      author={Sang-il Oum and Marek Soko{\l}owski},
      year={2026},
      eprint={2603.10710},
      archivePrefix={arXiv},
      primaryClass={math.CO},
      url={https://arxiv.org/abs/2603.10710}, 
}

@inproceedings{BGW25,
    author = {Bouland, Adam and Giurgic{\u{a}}-Tiron, Tudor and Wright, John},
    title = {The State Hidden Subgroup Problem and an Efficient Algorithm for Locating Unentanglement},
    year = {2025},
    isbn = {9798400715105},
    publisher = {Association for Computing Machinery},
    address = {New York, NY, USA},
    url = {https://doi.org/10.1145/3717823.3718118},
    doi = {10.1145/3717823.3718118},
    booktitle = {Proceedings of the 57th Annual ACM Symposium on Theory of Computing},
    pages = {463–470},
    numpages = {8},
    location = {Prague, Czechia},
    series = {STOC '25}
}

@incollection{FonDerFlaass1988,
	address = {Amsterdam},
	author = {Fon-Der-Flaass, Dmitri G.},
	booktitle = {Combinatorics (Eger, 1987)},
	mrclass = {05C45},
	mrnumber = {94g:05052},
	mrreviewer = {Andr{\'e} Bouchet},
	pages = {257--266},
	publisher = {North-Holland},
	series = {Colloq. Math. Soc. J\'anos Bolyai},
	title = {On local complementations of graphs},
	volume = {52},
	year = {1988}
}

@article{briegel2009measurement,
	author = {Briegel, Hans J. and Browne, David E. and D{\"u}r, Wolfgang and Raussendorf, Robert and Maarten Van den Nest},
	journal = {Nature Physics},
	number = {1},
	pages = {19--26},
	publisher = {Nature Publishing Group UK London},
	title = {Measurement-based quantum computation},
	doi={10.1038/nphys1157},
	volume = {5},
	year = {2009},
	eprint={0910.1116}
}

@article{raussendorf2001one,
	author = {Raussendorf, Robert and Briegel, Hans J.},
	journal = {Physical Review Letters},
	number = {22},
	pages = {5188},
	publisher = {APS},
	title = {A one-way quantum computer},
	doi={10.1103/PhysRevLett.86.5188},
	volume = {86},
	year = {2001}
}

@article{raussendorf2003measurement,
	author = {Raussendorf, Robert and Browne, Daniel E. and Briegel, Hans J.},
	journal = {Physical review A},
	number = {2},
	pages = {022312},
	publisher = {APS},
	title = {Measurement-based quantum computation on cluster states},
	eprint={quant-ph/0301052},
	doi={10.1103/PhysRevA.68.022312},
	volume = {68},
	year = {2003}
}

@article{Hahn2022limitations,
  title = {Limitations of nearest-neighbor quantum networks},
  author = {Hahn, F. and Dahlberg, A. and Eisert, J. and Pappa, A.},
  journal = {Phys. Rev. A},
  volume = {106},
  issue = {1},
  pages = {L010401},
  numpages = {5},
  year = {2022},
  month = {Jul},
  publisher = {American Physical Society},
  doi = {10.1103/PhysRevA.106.L010401},
  url = {https://link.aps.org/doi/10.1103/PhysRevA.106.L010401}
}

@phdthesis{hahn2022phdthesis,
  title={Quantum Networks},
  author={Hahn, Frederik},
  year={2022}, 
  url = "http://dx.doi.org/10.17169/refubium-41101",
  school = {Freie Universität Berlin}
}

@inproceedings{Cautres2024,
	author = {Cautr\`{e}s, Maxime and Claudet, Nathan and Mhalla, Mehdi and Perdrix, Simon and Savin, Valentin and Thomass\'{e}, St\'{e}phan},
	title =	{Vertex-Minor Universal Graphs for Generating Entangled Quantum Subsystems},
	booktitle = {Proceedings of the 51st International Colloquium on Automata, Languages, and Programming (ICALP 2024)},
	doi = {10.4230/LIPIcs.ICALP.2024.36},
	year = {2024},
	eprint={2402.06260}
}

@article{markham2008graph,
	author = {Markham, Damian and Sanders, Barry C.},
	journal = {Physical Review A},
	number = {4},
	pages = {042309},
	publisher = {APS},
	title = {Graph states for quantum secret sharing},
	doi={10.1103/PhysRevA.78.042309},
	volume = {78},
	year = {2008},
	eprint={0808.1532}
}

@article{Keet2010,
  title = {Quantum secret sharing with qudit graph states},
  author = {Keet, Adrian and Fortescue, Ben and Markham, Damian and Sanders, Barry C.},
  journal = {Physical Review A},
  volume = {82},
  issue = {6},
  pages = {062315},
  numpages = {11},
  year = {2010},
  month = {Dec},
  publisher = {American Physical Society},
  doi = {10.1103/PhysRevA.82.062315},
  eprint={1004.4619}
}

@InProceedings{Javelle2013,
author="Javelle, J{\'e}r{\^o}me
and Mhalla, Mehdi
and Perdrix, Simon",
title="New Protocols and Lower Bounds for Quantum Secret Sharing with Graph States",
booktitle="Proceedings of the 7th Conference on the Theory of Quantum Computation, Communication, and Cryptography (TQC 2012)",
year="2013",
pages="1--12", 
eprint={1109.1487},
doi={10.1007/978-3-642-35656-8_1}
}

@inproceedings{gravier2013quantum,
	author = {Gravier, Sylvain and Javelle, J{\'e}r{\^o}me and Mhalla, Mehdi and Perdrix, Simon},
	booktitle = {Proceedings of the 8th {M}athematical and {E}ngineering {M}ethods in {C}omputer {S}cience International Doctoral Workshop ({MEMICS} 2012)},
	title = {Quantum secret sharing with graph states},
	doi={10.1007/978-3-642-36046-6_3},
	year = {2013},
	url={https://hal.science/hal-00933722/document}
}

@Article{Bell2014secret,
author={Bell, B. A.
and Markham, Damian
and Herrera-Mart{\'i}, D. A.
and Marin, Anne
and Wadsworth, W. J.
and Rarity, J. G.
and Tame, M. S.},
title={Experimental demonstration of graph-state quantum secret sharing},
journal={Nature Communications},
year={2014},
month={Nov},
day={21},
volume={5},
number={1},
pages={5480},
issn={2041-1723},
doi={10.1038/ncomms6480},
url={https://doi.org/10.1038/ncomms6480},
eprint={1411.5827}
}

@article{hahn2019quantum,
	author = {Hahn, Frederik and Pappa, Anna and Eisert, Jens},
	doi = {10.1038/s41534-019-0191-6},
	eprint = {1805.04559},
	journal = {npj Quantum Information},
	number = {1},
	pages = {1--7},
	publisher = {Nature Publishing Group},
	title = {Quantum network routing and local complementation},
	volume = {5},
	year = {2019}
}

@article{meignant2019distributing,
	author = {Meignant, Cl\'ement and Markham, Damian and Grosshans, Fr\'ed\'eric},
	doi = {10.1103/PhysRevA.100.052333},
	eprint = {1811.05445},
	issue = {5},
	journal = {Physical Review A},
	month = {Nov},
	numpages = {6},
	pages = {052333},
	publisher = {American Physical Society},
	title = {Distributing graph states over arbitrary quantum networks},
	volume = {100},
	year = {2019}
}

@inproceedings{fischer2021distributing,
	author = {Fischer, Alex and Towsley, Don},
	booktitle = {Proceedings of the 2021 IEEE International Conference on Quantum Computing and Engineering (QCE)},
	doi = {10.1109/QCE52317.2021.00049},
	eprint = {2009.10888},
	pages = {324--333},
	title = {Distributing graph states across quantum networks},
	year = {2021}
}

@article{Mannalath2023,
  title = {Multiparty entanglement routing in quantum networks},
  author = {Mannalath, Vaisakh and Pathak, Anirban},
  journal = {Physical Review A},
  volume = {108},
  issue = {6},
  pages = {062614},
  numpages = {15},
  year = {2023},
  month = {Dec},
  publisher = {American Physical Society},
  doi = {10.1103/PhysRevA.108.062614},
  eprint={2211.06690}
}

@article{freund2025graph,
  title={Graph state extraction from two-dimensional cluster states},
  author={Freund, Julia and Pirker, Alexander and Vandr{\'e}, Lina and D{\"u}r, Wolfgang},
  journal={New Journal of Physics},
  volume={27},
  number={9},
  pages={094505},
  year={2025},
  publisher={IOP Publishing}
}

@article{vdn2006universal,
  title = {Universal Resources for Measurement-Based Quantum Computation},
  author = {Van den Nest, Maarten and Miyake, Akimasa and D\"ur, Wolfgang and Briegel, Hans J.},
  journal = {Phys. Rev. Lett.},
  volume = {97},
  issue = {15},
  pages = {150504},
  numpages = {4},
  year = {2006},
  month = {Oct},
  publisher = {American Physical Society},
  doi = {10.1103/PhysRevLett.97.150504},
  url = {https://link.aps.org/doi/10.1103/PhysRevLett.97.150504}
}

@inproceedings{Javelle12,
	author = {J{\'e}r{\^o}me Javelle and Mehdi Mhalla and Simon Perdrix},
	booktitle = {Proceedings of the 38th workshop on Graph Theory (WG 2012)},
	doi = {10.1007/978-3-642-34611-8_16},
	title = {On the Minimum Degree Up to Local Complementation: Bounds and Complexity},
	year = {2012},
	eprint={1204.4564}
}

@inproceedings{CattaneoP15,
	author = {David Cattan{\'{e}}o and Simon Perdrix},
	booktitle = {Proceedings of the 26th International Symposium on Algorithms and Computation, ({ISAAC} 2015)},
	doi = {10.1007/978-3-662-48971-0\_23},
	title = {Minimum Degree Up to Local Complementation: Bounds, Parameterized Complexity, and Exact Algorithms},
	year = {2015},
	eprint={1503.04702}}

\end{document}